\documentclass[12pt]{amsart}
\usepackage{amssymb,amsmath}
\oddsidemargin=-.0cm
\evensidemargin=-.0cm
\textwidth=16cm
\textheight=22cm
\topmargin=0cm

%\usepackage{showkeys}
%%%%%%%%%%%%%%%%%%%%%%%%%%%%%%%%%%%%%%%%%%%%
% DEFS

\def\R {\mathbb{R}}

\def\Dt{\partial_t}

\def\supp{\operatorname{supp}}
\def\sgn{\operatorname{sgn}}
%\usepackage{showkeys}
%%%%%%%%%%%%%%%%%%%%%%%%%%%%%%%%%%%%%%%%%%%
\newtheorem{proposition}{Proposition}[section]
\newtheorem{theorem}[proposition]{Theorem}
\newtheorem{corollary}[proposition]{Corollary}
\newtheorem{lemma}[proposition]{Lemma}
\theoremstyle{definition}
\newtheorem{definition}[proposition]{Definition}
\newtheorem{remark}[proposition]{Remark}
\newtheorem{example}[proposition]{Example}
\numberwithin{equation}{section}
%%%%%%%%%%%%%%%%%%%%%%%%%%%%%%%%%%%%%%%%%%%%

%%%%%%%%%%%%%%%%%%%%%%%%%%%%%%%%%%%%%%%%%%%%
% BIBLIOGRAPHY
\def \au {\rm}
\def \ti {\it}
\def \jou {\rm}
\def \bk {\it}
\def \no#1#2#3 {{\bf #1} (#3), #2.}
  %\no{Vol}{Pag}{Year}
\def \eds#1#2#3 {#1, #2, #3.}
  %\eds{Pub}{City}{Year}
%%%%%%%%%%%%%%%%%%%%%%%%%%%%%%%%%%%%%%%%%%%%
\def\Dx{\Delta_x}
\def\Nx{\nabla_x}
\def\<{\left<}
\def\>{\right>}
\def\Bbb{\mathbb}
\def\eb{\varepsilon}
\def\({\left(}
\def\){\right)}

\def\<{\left<}
\def\>{\right>}

\def\Cal{\mathcal}
\def\dist{\operatorname{dist}}

\title[The Cahn-Hilliard equation]
{The Cahn-Hilliard equation with singular potentials
and dynamic boundary conditions}

\author[A. Miranville and S. Zelik]
{A. Miranville$^*$ and S. Zelik$^\dag$}
\address{$^*$Universit\'e de Poitiers
\newline\indent
Laboratoire de Math\'ematiques et Applications
\newline\indent
SP2MI
\newline\indent
Boulevard Marie et Pierre Curie - T\'el\'eport 2
\newline\indent
86962 Chasseneuil Futuroscope Cedex, France}
\email{miranv@math.univ-poitiers.fr}

\address{$^\dag$University of Surrey
\newline\indent Department of Mathematics
\newline\indent Guildford, GU2 7XH, United Kingdom}
\email{S.Zelik@surrey.ac.uk}

\begin{document}

\subjclass[2000]{35B40, 35B41, 35K55, 35J60, 80A22}

\keywords{Cahn-Hilliard equation, dynamic boundary conditions, singular
potentials, variational solutions, separation from the singularities, global attractor, exponential attractors}

\begin{abstract}

Our aim in this paper is to study the Cahn-Hilliard equation with
singular potentials and dynamic boundary conditions. In particular, we
prove, owing to proper approximations of the singular potential and a
suitable notion of variational solutions, the existence and uniqueness
of solutions. We also discuss the separation of the solutions from the
singularities of the potential. Finally, we prove the existence of
global and exponential attractors.

\end{abstract}

\maketitle

\section{Introduction}

The Cahn-Hilliard system

\begin{equation}
\begin{cases}
\Dt u=\kappa \Dx \mu,\ \ \kappa >0,\\
\mu=-\alpha \Dx u+f(u),\ \ \alpha >0,
\end{cases}
\end{equation}

\noindent plays an essential role in materials science as it describes
important
qualitative features of two-phase systems related with phase
separation processes. This can be observed, e.g., when
a binary alloy is cooled down sufficiently. One then observes a
partial nucleation (i.e., the apparition of nucleides in the material)
or a total nucleation, the so-called spinodal decomposition: the
material quickly becomes inhomogeneous, forming a fine-grained
structure in which each of the two components appears more or less
alternatively. In a second stage, which is called coarsening, occurs
at a slower time scale and is less understood, these microstructures
coarsen. We refer the reader to, e.g., \cite{Cah}, \cite{CahH},
\cite{KoO}, \cite{L},
\cite{MPW1}, \cite{MPW2}, \cite{NoC3} and \cite{NoC4} for more details. Here, $u$~is the order parameter (it~corresponds to a
(rescaled) density of atoms) and
$\mu $~is the chemical potential. Furthermore, $f$ is a double-well potential
whose wells correspond to the phases of the material. A
thermodynamically relevant potential is the following logarithmic
(singular) potential:

\begin{equation}
  f(s)=-2\kappa_0 s+\kappa_1 \ln {{1+s}\over {1-s}},\quad s\in
(-1,1),\quad 0<\kappa_0<\kappa_1,
 \end{equation}

\noindent although such a potential is very often approximated by
regular ones (typically, $f(s)=s^3-s$). Finally, $\kappa $ is the
mobility and $\alpha $ is related to the surface tension at the interface.

This system, endowed with Neumann boundary
conditions for both $u$ and $\mu$ (meaning that the interface
is orthogonal to the boundary and that there is no mass flux at the
boundary) or with periodic boundary conditions, has been extensively
studied and one now has a rather complete picture as far as the existence, uniqueness and regularity of
solutions and the asymptotic behavior of the solutions are
concerned. We refer the reader, among a vast literature, to, e.g.,
\cite{AW}, \cite{DeD}, \cite{Ell},
\cite{EllGar}, \cite{EllL}, \cite{EllSh}, \cite{KNP},
\cite{LiZ}, \cite{MZ1}, \cite{NScT}, \cite{NoC1}, \cite{NoC2},
\cite{NoC3}, \cite{NoC4}, \cite{HR}, \cite{WZ} and~\cite{Z}.

Now, the question of how the process of phase separation (that is, the
spinodal decomposition) is influenced by the presence of walls has
gained much attention recently (see \cite{FiMD1}, \cite{FiMD2},
\cite{KEMRSBD} and the references therein). This problem has mainly
been studied for polymer mixtures (although it should also be
important in other systems, such as binary metallic alloys): from a
technological point of view, binary polymer mixtures are particularly
interesting, since the occurring structures during the phase separation
process may be frozen by a rapid quench into the glassy state;
micro-structures at surfaces on very small length scales can be
produced in this way.

In that case, we again write that there is no mass flux at the boundary. Then,
in order to obtain the second boundary condition, following the phenomenological derivation of the Cahn-Hilliard system,
we consider, in addition to the usual Ginzburg-Landau
free energy

\begin{equation}
  \Psi_{GL}(u,\nabla u)
  = \int_\Omega  (\frac \alpha 2 |\nabla _x u|^2 + F(u) )\, dx ,
\end{equation}

\noindent where $F'=f$ and $\Omega $ is the domain occupied by the material (the chemical potential $\mu $ is defined as a variational
derivative of $\Psi_{GL}$ with respect to~$u$), and assuming that the
interactions with the walls are short-ranged, a
surface free energy of the form
\begin{equation}
  \Psi_{\Gamma }(u,\nabla _\Gamma u)
  = \int _\Gamma  (\frac {\alpha _\Gamma }2 |\nabla_\Gamma u|^2 + G(u) )\, dS,
  \ \  \alpha _\Gamma  > 0
\end{equation}

\noindent (thus, $\Psi=\Psi_{GL}+\Psi_{\Gamma }$ is the total free energy of the system),
where $\Gamma$ is the boundary of~$\Omega$ and
$\nabla_\Gamma $ is the surface gradient.
Writing finally that the system tends to minimize the excess surface
energy, we end up with the following boundary condition:

\begin{equation}
  {1\over d}\Dt u-\alpha _\Gamma \Delta _\Gamma u +g(u)+\alpha
  \partial _n u=0,\
\ {\rm on}\ \Gamma,
\end{equation}

\noindent where $\Delta _\Gamma $ is the Laplace-Beltrami operator, $\partial _n$ is
the normal derivative, $g=G'$ and $d>0$ is some relaxation parameter, which is usually
referred to as dynamic boundary condition, in the sense that the kinetics, i.e., $\Dt u$,
appears explicitly. Furthermore, in the original derivation, one has
$G(u)={1\over 2}a_\Gamma u^2-b_\Gamma u$, where $a_\Gamma >0$ accounts for a
modification of the effective interaction between the components at
the walls and $b_\Gamma $ characterizes the possible preferential
attraction (or repulsion) of one of the components by the walls (when
$b_\Gamma $ vanishes, there is no preferential attraction). We also
refer the reader to \cite{BF} and \cite{FRDGMMR} for other physical
derivations of such dynamic boundary conditions, obtained by taking the
continuum limit of lattice models within a direct mean-field
approximation and by applying a density functional theory, to
\cite{QWS} for the derivation of dynamic boundary conditions in the
context of two-phase fluids flows and to \cite{SV} and \cite{S} for an
approach based on concentrated capacity.

The Cahn-Hilliard system,
endowed with {dynamic boundary conditions},
has been studied in \cite{CFP}, \cite{Gal}, \cite{MZ2}, \cite{PRZ},
\cite{RZ} and~\cite{WZ} for regular potentials $f$ and $g$.
In particular, one now has satisfactory results on the existence, uniqueness
and regularity of solutions and on the asymptotic behavior of the solutions.

The case of nonregular potentials and dynamic boundary
conditions is essentially more complicated and less understood.
Indeed, to the best of our knowledge, even the existence of weak
energy solutions has only recently been established in that case, under the
additional restriction that the boundary nonlinearity $g$ has the
right sign at the singular points $\pm1$, namely,
%$$
\begin{equation}\label{0.sign}
\pm g(\pm1)>0
\end{equation}
%$$
(see \cite{GMS}; see also \cite{CM} where sign conditions are
considered in the context of the Caginalp phase-field system). Furthermore, the questions related with the longtime
behavior of the solutions (e.g., in terms of global attractors or/and exponential
attractors) have not
been considered in the literature.

The aim of the present paper is to give a thorough study of the
singular Cahn-Hilliard problem endowed with dynamic boundary
conditions. As we will see below,
the main difficulty here lies in the fact that the
combination of  dynamics boundary conditions and of singular
potentials can produce additional
strong singularities on the corresponding solutions
close to the boundary (especially in the case where the sign condition \eqref{0.sign} is violated).
In that case, even the simplest 1D stationary problems   may
not have solutions in a usual (or distribution) sense (due to the jumps of the normal
derivatives close to the boundary produced by the singularities, see
Example 6.2).

Nevertheless,  we can construct a sequence of
solutions of regular approximations of our
singular problem which converges to a unique trajectory which
is then naturally identified with the "solution" of the limit singular
problem. As already pointed out, this trajectory may not be a
solution of our equations in the usual (distribution) sense, so that
the notion of a solution must be properly modified. To do so, we
consider, in the spirit of
\cite{Br} (see also \cite{EGZ}), the variational {\it inequality}
associated with the problem and define a (variational) solution in
terms of this variational inequality, see Section 3 for details.

 Of course, important questions are when the solution thus defined
 is a usual distribution solution of the
equations and which additional regularity one can expect from such a variational solution.
 Actually, we prove that the variational solutions are always H\"older continuous in space and
 are solutions in the usual sense if they  do not
reach the pure states on the boundary, namely, if
%$$
\begin{equation}\label{0.reg}
\vert u(t,x)\vert<1
\end{equation}
%$$
for almost all $(t,x)\in \R^+\times \Gamma$. One possible
condition which guarantees that condition \eqref{0.reg} holds is
exactly the aforementioned sign condition \eqref{0.sign} (see
Proposition 4.5). Alternatively, this condition is always
satisfied if the singularities of the nonlinearity $f$
are strong enough, namely, if
$$
\lim_{u\to\pm1}F(u)=\infty,\ \ F(u):=\int_0^uf(s)\,ds,
$$
see Section 4. Furthermore, using some proper modification of the Moser
iteration scheme, we can also show that any trajectory $u(t)$
is separated from the singularities $\pm1$ if, in addition,
$$
\frac{f(u)}u\ge \frac {C}{(1-u^2)^p},\ \  p>1
$$
(see Remark 4.9; see also \cite{ERRCGM} for a similar condition for the Caginalp system). In that case, we  have $|u(t,x)|\le 1-\delta$ for some
$\delta>0$ and, consequently, the problem becomes factually
nonsingular and can be further investigated by using the techniques
devised for the Cahn-Hilliard equation with regular potentials.
Unfortunately, this last condition is not satisfied by the
physically relevant logarithmic potentials and we indeed need the
variational inequalities (and solutions) in order to deal with
such a potential.

The next, natural, step is to study the asymptotic behavior of the
system. In particular, we are interested here in the study of
finite-dimensional global attractors. We recall that the global
attractor is the smallest compact set of the phase space which is
fully invariant by the flow and attracts the bounded sets of initial
data as time goes to infinity; it thus appears as a suitable object in
view of the study of the longtime behavior of the
problem. Furthermore, when the global attractor has finite dimension
(in the sense of covering dimensions such as the fractal and the
Hausdorff dimensions), then, even though the initial phase space is
infinite-dimensional, the dynamics of the system is, in some proper
sense, finite-dimensional and can be described by a finite number of
parameters. We refer
the reader to, e.g., \cite{BV}, \cite{handbook}, \cite{temam} and the
references therein for
extensive reviews and discussions on this subject. One powerful
method, in order to prove the existence of the finite-dimensional
global attractor, is to prove the existence of a so-called exponential
attractor (in particular, this approach does not necessitate, contrary
to the usual one, based on the Lyapunov exponents, the
differentiability of the underlying semigroup). An exponential
attractor is a compact and semiinvariant set which contains the
global attractor, has finite fractal dimension and attracts all
bounded sets of initial data at an exponential rate. We refer the
reader to, e.g., \cite{EFNT}, \cite{EMZ1}, \cite{EMZ2} and
\cite{handbook} for more details and discussions on exponential attractors.

We thus prove the existence of global and exponential attractors for our
problem. We emphasize that this result is obtained under general assumptions (without any
sign assumption or any assumption of the form \eqref{0.reg}) and is thus
valid for the  variational solutions (which may not be solutions in
the usual sense). In particular, such solutions may reach the
singularities $\pm1$ on sets of positive measure on the
boundary $\R^+\times\Gamma$ or even on the whole boundary
$\R^+\times\Gamma$. This fact does not allow us to use the techniques
devised in \cite{MZ1} to establish the existence of finite-dimensional attractors
for the singular Cahn-Hilliard system with usual boundary
conditions (these techniques are strongly based on the
fact that $u(t)$ is separated from the singularities for almost all $t\ge0$, which is not
true in our case in general). Instead, we prove the
finite-dimensionality of the global attractor by using a proper
modification of the techniques developed in \cite{EZ1} for porous media equations.

This paper is organized as follows. In Section 2, we define proper
(regular) approximations of the singular potential and derive uniform
(with respect to these approximations) a priori estimates which allow
us, in Section 3, to formulate the variational inequality associated with the singular Cahn-Hilliard system with dynamic
boundary conditions and verify the existence and uniqueness of a solution for this inequality. We also  study
the further regularity of the solutions. Then, in Section 4, we give sufficient conditions which
ensure that the solutions are separated from the singularities of $f$
and, thus, satisfy the equations in the usual (distribution) sense. Section 5 is
devoted to the asymptotic behavior of the system. Finally, we give, in
Appendix 1, several auxiliary results. We also construct a simple example
which shows that the solutions may
not satisfy the dynamic boundary conditions in the usual sense for
logarithmic potentials.

\section{Approximations and uniform a priori estimates}\label{s1}

We consider the following equations (for simplicity, we set all constants equal to $1$):
%$$
\begin{equation}\label{1.main}
\begin{cases}
\Dt u=\Dx \mu,\ \ \partial_n\mu\big|_{\Gamma}=0,\\
\mu=-\Dx u+\tilde f(u)+h_1,\ \ u\big|_{t=0}=u_0,
\end{cases}
\end{equation}
%$$
in a bounded smooth domain $\Omega$ of $\R^3$, endowed with
dynamic boundary conditions on $\Gamma :=\partial \Omega $,
%$$
\begin{equation}\label{1.dyn}
\Dt\psi-\Delta_\Gamma \psi+g(\psi)+\partial_n u=h_2,\ \
\psi:=u\big|_{\Gamma}.
\end{equation}
%$$
Here, $u$ and $\mu$ are unknown functions, $\Dx$ and
$\Delta_\Gamma$ are the Laplace and Laplace-Beltrami operators on
$\Omega$ and $\Gamma $, respectively, $\tilde f$ and $g$
are known nonlinearities, $h_1\in L^2(\Omega)$ and $h_2\in
L^2(\Gamma)$
are given external forces and $\partial_n$ stands for the normal
derivative, $n$ being the unit outer normal to $\Gamma $.
\par
We assume that the nonlinearity $\tilde f$ has the form
%$$
\begin{equation}\label{1.str}
\tilde f(z):=f(z)-\lambda z,
\end{equation}
%$$
where $\lambda\in\R$ is a given constant and the singular function
$f$ satisfies
%$$
\begin{equation}\label{A.2}
\begin{cases}
1.\ \ f\in C^2((-1,1)),\\
2.\ \ f(0)=0,\ \ \lim_{u\to\pm1}f(u)=\pm\infty,\\
3.\ \ f'(u)\ge0,\ \ \lim_{u\to\pm1}f'(u)=+\infty,\\
4. \ \ \sgn u\cdot f''(u)\ge0.
\end{cases}
\end{equation}
%$$
Since the function $f$ is defined on the interval $(-1,1)$ only
and has singularities at $\pm1$, we a priori assume that
%$$
\begin{equation}\label{1.bound}
|u(t,x)|<1\ \text{almost everywhere in}\ \R^+\times\Omega.
\end{equation}
%$$
We finally assume that the second nonlinearity $g$ is
regular on the segment $[-1,1]$,
%$$
\begin{equation}\label{1.g}
g\in C^2([-1,1]).
\end{equation}
%$$
Then, we can assume, without loss of generality, that $g$ is
smoothly extended to the whole line, $g\in C^2(\R)$, and $g(z)=z+g_0(z)$ with $\|g_0\|_{C^2(\R)}\le C$ for some positive
constant $C$.
\par
In order to solve the singular problem \eqref{1.main}, we
approximate the nonlinearity $f$ by the following family of
smooth functions:
%$$=
\begin{equation}\label{1.f}
f_N(u):=\begin{cases}
f(u),\ \ |u|\le 1-1/N,\\
f(1-1/N)+f'(1-1/N)(u-1+1/N), \ \ u>1-1/N,\\
f(-1+1/N)+f'(-1+1/N)(u+1-1/N),\ \ u<-1+1/N,
\end{cases}
\end{equation}
%$$
and we set $\tilde f_N(u):=f_N(u)-\lambda u$. We then consider the
approximate problems
%$$
\begin{equation}\label{1.regmain}
\begin{cases}
\Dt u=\Dx \mu,\ \ \partial_n\mu\big|_{\Gamma}=0,\\
\mu=-\Dx u+\tilde f_N(u)+h_1,\ \ u\big|_{t=0}=u_0,
\end{cases}
\end{equation}
%$$
endowed with the same dynamic boundary conditions \eqref{1.dyn}.
\par
The main aim of the present section is to derive several
uniform (with respect to $N\to\infty$) a priori estimates for the
solutions $(u,\mu)=(u_N,\mu_N)$ of problems \eqref{1.regmain},
\eqref{1.dyn} which will allow us (in the next section) to pass to
the limit $N\to\infty$ and establish the existence of a solution
for the singular problem (the existence, uniqueness and regularity
of solutions for the regular case, such as in the approximate problems
\eqref{1.regmain}, \eqref{1.dyn}, are now well-understood and will
not be considered in the present paper, see \cite{Gal}, \cite{GMS}, \cite{MZ2},
\cite{PRZ} and \cite{RZ} for
detailed expositions and related problems).
\par
As usual, it is convenient to rewrite problem \eqref{1.regmain}
in an equivalent form by using the inverse Laplacian
$A:=(-\Dx)^{-1}$ (endowed with Neumann boundary conditions). To
be more precise, since the first eigenvalue of the Laplacian with Neumann boundary
conditions vanishes, we assume that the operator $A$ is defined on
the functions with zero mean value only and maps them onto the functions with
zero mean value as well. Then, applying this operator to both sides of
\eqref{1.regmain}, we have
%$$
\begin{equation}\label{1.req}
A\Dt u:=(-\Dx)^{-1}\Dt u=\Dx u-\tilde f_N(u)-h_1+\<\mu\>,
\end{equation}
%$$
where $\<v\>$ stands for the mean value of the function $v$ over
$\Omega$. Furthermore, taking into account \eqref{1.dyn}, we see that
%$$
\begin{equation}\label{1.mu}
\<\mu\>=-\<\Dx u\>+\<\tilde
f_N(u)\>+\<h_1\>=\Dt\<u\>_\Gamma+\<g(u)\>_\Gamma-\<h_2\>_\Gamma-\<\tilde
f_N(u)\>+\<h_1\>,
\end{equation}
%$$
where $\<v\>_\Gamma:=\frac1{|\Omega|}\int_{\Gamma}v(x)dS$. We also
mention that problem \eqref{1.regmain} possesses the mass conservation
law
%$$
\begin{equation}\label{1.mass}
\<u(t)\>\equiv \<u(0)\>=c
\end{equation}
%$$
and, thus, $\<\Dt u\>=0$ and the left-hand side of
\eqref{1.req} is well-defined. Finally, keeping in mind the
singular limit $N\to\infty$, we only consider the initial
data $u_0$ for which $c\in(-1,1)$.
\par
We start with the usual energy equality.
\begin{lemma}\label{Lem1.en} Let the above assumptions hold and
let $u$ be a sufficiently regular
solution of \eqref{1.req}. Then, the following identity holds:
%$$
\begin{multline}\label{1.energy}
\frac d{dt}(\frac12\|\Nx u(t)\|^2_{L^2(\Omega)}+
\frac12\|\nabla_\Gamma u(t)\|^2_{L^2(\Gamma)}+(\tilde F_N(u(t)),1)_{\Omega}+(h_1,u(t))_{\Omega}+\\+
(G(u(t)),1)_{\Gamma}
-(h_2,u(t))_{\Gamma})+\|\Dt u(t)\|^2_{H^{-1}(\Omega)}+\|\Dt
u(t)\|^2_{L^2(\Gamma)}=0,
\end{multline}
%$$
where $\tilde F_N(z):=\int_0^z\tilde f_N(s)\,ds$,
$G(z):=\int_0^zg(s)\,ds$, $(\cdot,\cdot)_\Omega$ and
$(\cdot,\cdot)_{\Gamma}$ stand for the inner products in
$L^2(\Omega)$ and $L^2(\Gamma)$, respectively, and
$\|z\|_{H^{-1}(\Omega)}^2:=(Az,z)_\Omega$.
\end{lemma}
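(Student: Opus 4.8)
The plan is to derive the energy identity \eqref{1.energy} by testing the equations with suitable multipliers, the natural choice being $\mu$ for the first equation and $\Dt u$ for the dynamic boundary condition, and then using the structure of the reformulation \eqref{1.req} to identify all the resulting terms. Concretely, I would start from \eqref{1.regmain} (equivalently \eqref{1.req}) and multiply the mass-balance equation $\Dt u=\Dx\mu$ by $\mu$ in $L^2(\Omega)$. Integrating by parts and using the Neumann condition $\partial_n\mu|_\Gamma=0$ turns the right-hand side into $-\|\Nx\mu\|_{L^2(\Omega)}^2$, which is nonpositive; on the left-hand side we get $(\Dt u,\mu)_\Omega$. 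Next, substitute the expression $\mu=-\Dx u+\tilde f_N(u)+h_1$ into this inner product: the term $(\Dt u,-\Dx u)_\Omega$ produces, after integration by parts, $\frac12\frac d{dt}\|\Nx u\|_{L^2(\Omega)}^2$ plus a boundary contribution $-(\Dt u,\partial_n u)_\Gamma$ coming from $-\int_\Gamma \Dt u\,\partial_n u\,dS$; the term $(\Dt u,\tilde f_N(u))_\Omega$ is exactly $\frac d{dt}(\tilde F_N(u),1)_\Omega$ since $\tilde F_N'=\tilde f_N$; and $(\Dt u,h_1)_\Omega=\frac d{dt}(h_1,u)_\Omega$ since $h_1$ is time-independent.

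The boundary term $-(\Dt u,\partial_n u)_\Gamma$ produced above is precisely where the dynamic boundary condition \eqref{1.dyn} enters. Solving \eqref{1.dyn} for $\partial_n u$, namely $\partial_n u=h_2-\Dt\psi+\Delta_\Gamma\psi-g(\psi)$ with $\psi=u|_\Gamma$, and pairing against $\Dt u|_\Gamma=\Dt\psi$ on $\Gamma$, I get $-(\Dt\psi,\Delta_\Gamma\psi)_\Gamma=\frac12\frac d{dt}\|\nabla_\Gamma u\|_{L^2(\Gamma)}^2$ (integrating by parts on the closed manifold $\Gamma$, so no boundary terms), $(\Dt\psi,g(\psi))_\Gamma=\frac d{dt}(G(u),1)_\Gamma$, $-(\Dt\psi,h_2)_\Gamma=-\frac d{dt}(h_2,u)_\Gamma$, and the remaining $-\|\Dt\psi\|_{L^2(\Gamma)}^2=-\|\Dt u\|_{L^2(\Gamma)}^2$. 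Collecting everything and moving the two dissipative terms to the left produces \eqref{1.energy}, provided I also check that $\|\Nx\mu\|_{L^2(\Omega)}^2=\|\Dt u\|_{H^{-1}(\Omega)}^2$: this follows from $\Dt u=\Dx\mu$ together with $\<\Dt u\>=0$ (mass conservation \eqref{1.mass}) and the definition $\|z\|_{H^{-1}(\Omega)}^2=(Az,z)_\Omega$ with $A=(-\Dx)^{-1}$, since $A\Dt u=-\mu+\<\mu\>$ on zero-mean functions and hence $(A\Dt u,\Dt u)_\Omega=(-\mu,\Dx\mu)_\Omega=\|\Nx\mu\|_{L^2(\Omega)}^2$.

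I expect no deep obstacle here, since $u$ is assumed sufficiently regular so that all the integrations by parts and the interchange of $\frac d{dt}$ with the spatial integrals are justified; the work is bookkeeping. The one point that requires a little care is the handling of the mean value of $\mu$: because $A$ acts only on zero-mean functions, one must consistently split $\mu=(\mu-\<\mu\>)+\<\mu\>$, note that $\Dt u$ has zero mean so the constant part $\<\mu\>$ pairs to zero against $\Dt u$ in $L^2(\Omega)$, and use \eqref{1.mu} only to confirm that $\<\mu\>$ is well-defined rather than needing it in the identity itself. Equivalently, and perhaps cleaner, I would work directly from the reformulated equation \eqref{1.req}, multiply by $\Dt u$ in $L^2(\Omega)$, and read off the term $(A\Dt u,\Dt u)_\Omega=\|\Dt u\|_{H^{-1}(\Omega)}^2$ on the left at once; the boundary analysis via \eqref{1.dyn} is then exactly as above. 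Either route yields \eqref{1.energy}.
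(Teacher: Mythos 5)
Your proposal is correct and, although you organize it around testing $\Dt u=\Dx\mu$ with $\mu$, it is the same computation as the paper's one-line proof (multiply \eqref{1.req} by $\Dt u$, integrate by parts, use \eqref{1.dyn} and $\<\Dt u\>=0$), since $(\Dt u,\mu)_\Omega=-(A\Dt u,\Dt u)_\Omega$ once the mean of $\mu$ is accounted for — and you state this alternative route explicitly at the end. The integrations by parts, the treatment of the boundary term via \eqref{1.dyn}, and the identification $\|\Nx\mu\|_{L^2(\Omega)}^2=\|\Dt u\|_{H^{-1}(\Omega)}^2$ are all handled correctly.
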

Indeed, multiplying \eqref{1.req} by $\Dt u$, integrating over $\Omega $ and by parts and
taking into account \eqref{1.dyn}, together with the identity
$\<\Dt u\>=0$, we deduce \eqref{1.energy}.

\begin{corollary}\label{Cor1.en} Let the above assumptions hold
and let, in addition, $N$ be large enough. Then, any (sufficiently regular) solution $u$
of problem \eqref{1.req} satisfies:
%$$
\begin{multline}\label{1.en-est}
\|u(t)\|^2_{H^1(\Omega)}+\|u(t)\|^2_{H^1(\Gamma)}+(F_N(u(t)),1)_\Omega+\\+\int_0^t(\|\Dt
u(s)\|^2_{H^{-1}(\Omega)}+\|\Dt u(s)\|^2_{L^2(\Gamma)})\,ds\le\\\le
C(\|u(0)\|^2_{H^1(\Omega)}+\|u(0)\|^2_{H^1(\Gamma)}+(F_N(u(0)),1)_\Omega+\|h_1\|^2_{L^2(\Omega)}+
\|h_2\|^2_{L^2(\Gamma)}),
\end{multline}
%$$
where $F_N(z):=\int_0^zf_N(s)\,ds$ and the constant $C$ is
independent of $t$ and $u(0)$.
\end{corollary}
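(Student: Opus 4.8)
The plan is to integrate the energy identity \eqref{1.energy} in time and then trap the resulting ``energy'' between the left-hand side of \eqref{1.en-est} (from below) and the data (from above). Writing
\[
\mathcal E_N(v):=\tfrac12\|\Nx v\|^2_{L^2(\Omega)}+\tfrac12\|\nabla_\Gamma v\|^2_{L^2(\Gamma)}+(\tilde F_N(v),1)_\Omega+(h_1,v)_\Omega+(G(v),1)_\Gamma-(h_2,v)_\Gamma
\]
for the quantity differentiated in \eqref{1.energy}, the identity \eqref{1.energy} integrated over $[0,t]$ becomes
\[
\mathcal E_N(u(t))+\int_0^t\big(\|\Dt u(s)\|^2_{H^{-1}(\Omega)}+\|\Dt u(s)\|^2_{L^2(\Gamma)}\big)\,ds=\mathcal E_N(u(0)).
\]
Thus it is enough to prove, with constants independent of $N\ge N_0$ and of $u$: (a) a coercivity bound $\mathcal E_N(v)\ge c_1\big(\|v\|^2_{H^1(\Omega)}+\|v\|^2_{H^1(\Gamma)}+(F_N(v),1)_\Omega\big)-c_2\big(\|h_1\|^2_{L^2(\Omega)}+\|h_2\|^2_{L^2(\Gamma)}+1\big)$ valid whenever $\<v\>=c$ with $|c|<1$; and (b) the reverse bound $\mathcal E_N(u(0))\le C\big(\|u(0)\|^2_{H^1(\Omega)}+\|u(0)\|^2_{H^1(\Gamma)}+(F_N(u(0)),1)_\Omega+\|h_1\|^2_{L^2(\Omega)}+\|h_2\|^2_{L^2(\Gamma)}+1\big)$. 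Granting these, \eqref{1.en-est} follows: (a) applied at time $t$ (discarding the nonnegative time integral) together with (b) bounds the first line of \eqref{1.en-est}; since (a) also gives $\mathcal E_N(u(t))\ge-c_2(\cdots)$, the integrated identity and (b) bound the time integral; the harmless additive $+1$ is kept, or, if one wants the exact form stated, folded into $\|u(0)\|^2_{L^2(\Omega)}$ using $\<u(0)\>^2|\Omega|\le\|u(0)\|^2_{L^2(\Omega)}$, which comes from the mass conservation \eqref{1.mass} and the assumption $c\in(-1,1)$.

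The hard part is the pointwise inequality $\tilde F_N(z)\ge\tfrac12 F_N(z)-C$ for all $z\in\R$ and $N\ge N_0$, with $C$ independent of $N\ge N_0$, together with $F_N\ge0$ (which holds because $f_N$ is nondecreasing and $f_N(0)=0$); this is the only place where ``$N$ large'' and the singular structure \eqref{A.2} are used in an essential way, and the delicate case is $\lambda>0$ (for $\lambda\le0$ one has $\tilde F_N=F_N-\tfrac\lambda2 z^2\ge F_N$ trivially). I would argue as follows: using conditions 2 and 3 of \eqref{A.2} (which force $f(u),f'(u)\to+\infty$ as $u\to1$ and $f(u)\to-\infty$, $f'(u)\to+\infty$ as $u\to-1$), fix $N_0$ so large that, for every $N\ge N_0$, $\pm f(\pm(1-1/N))\ge2\lambda$ and $f'(\pm(1-1/N))\ge2\lambda$. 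On the fixed compact interval $|z|\le1-1/N_0$ we have $F_N(z)=F(z)$, hence $|\tilde F_N(z)|\le C$; for $z\ge1-1/N_0$, a direct integration using $f_N(s)\ge2\lambda$ for $s\ge1-1/N_0$ and $f_N'(s)\ge2\lambda$ for $s>1-1/N$ gives $F_N(z)\ge\lambda z^2-C$ with $C$ independent of $N\ge N_0$, and the range $z\le-(1-1/N_0)$ is symmetric. Since $F_N\ge0$, this yields $\tilde F_N(z)=F_N(z)-\tfrac\lambda2 z^2\ge\tfrac12 F_N(z)-\tfrac12 C$. The only subtlety is the $N$-dependence of the break points of $f_N$; otherwise the computation is elementary.

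What remains is routine assembly. For (b): Young's inequality handles $(h_1,v)_\Omega$ and $(h_2,v)_\Gamma$; the estimate $|G(z)-\tfrac12 z^2|\le C(1+|z|)$ (since $g(z)-z=g_0(z)$ with $\|g_0\|_{C^2(\R)}\le C$) gives $(G(v),1)_\Gamma\le C(1+\|v\|^2_{L^2(\Gamma)})$; and $(\tilde F_N(v),1)_\Omega=(F_N(v),1)_\Omega-\tfrac\lambda2\|v\|^2_{L^2(\Omega)}\le(F_N(v),1)_\Omega+C\|v\|^2_{L^2(\Omega)}$, all controlled by the right-hand side of (b). For (a): from the inequality above, $(\tilde F_N(v),1)_\Omega\ge\tfrac12(F_N(v),1)_\Omega-C$; moreover $(G(v),1)_\Gamma\ge\tfrac12\|v\|^2_{L^2(\Gamma)}-C\|v\|_{L^1(\Gamma)}-C\ge(\tfrac12-\delta)\|v\|^2_{L^2(\Gamma)}-C_\delta$, $(h_1,v)_\Omega\ge-\delta\|v\|^2_{L^2(\Omega)}-C_\delta\|h_1\|^2_{L^2(\Omega)}$ and $-(h_2,v)_\Gamma\ge-\delta\|v\|^2_{L^2(\Gamma)}-C_\delta\|h_2\|^2_{L^2(\Gamma)}$; finally, the Poincar\'e--Wirtinger inequality $\|v-\<v\>\|_{L^2(\Omega)}\le C_P\|\Nx v\|_{L^2(\Omega)}$ together with $|\<v\>|=|c|<1$ gives $\|v\|^2_{L^2(\Omega)}\le C_P\|\Nx v\|^2_{L^2(\Omega)}+C$, so that for $\delta$ small enough the dangerous $L^2(\Omega)$- and $L^2(\Gamma)$-contributions are absorbed into $\tfrac12\|\Nx v\|^2_{L^2(\Omega)}$ and $\tfrac12\|v\|^2_{L^2(\Gamma)}$. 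What survives is $\mathcal E_N(v)\ge\tfrac14\|\Nx v\|^2_{L^2(\Omega)}+\tfrac12\|\nabla_\Gamma v\|^2_{L^2(\Gamma)}+\tfrac14\|v\|^2_{L^2(\Gamma)}+\tfrac12(F_N(v),1)_\Omega-C(\|h_1\|^2_{L^2(\Omega)}+\|h_2\|^2_{L^2(\Gamma)}+1)$, and one more use of $\|v\|^2_{L^2(\Omega)}\le C_P\|\Nx v\|^2_{L^2(\Omega)}+C$ upgrades $\|\Nx v\|^2_{L^2(\Omega)}$ to the full $\|v\|^2_{H^1(\Omega)}$. This is (a), and \eqref{1.en-est} follows.
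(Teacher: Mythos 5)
Your proof is correct and follows essentially the same route as the paper: integrate the energy identity \eqref{1.energy} in time, establish the pointwise comparison $\tilde F_N(z)\ge\tfrac12 F_N(z)-C$ for $N\ge N_0(\lambda)$ (this is precisely the paper's inequality \eqref{1.flambda}), and then use the global boundedness of $g_0$, Young's inequality and Poincar\'e--Wirtinger to obtain coercivity. The paper leaves these last steps as ``obvious estimates''; you have simply filled them in, including the (harmless) additive constant on the right-hand side that the paper's statement implicitly absorbs.
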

Indeed, owing to our assumptions on $f$ and the explicit form of the
approximations $f_N$, see \eqref{1.f}, we can easily show that
%$$
\begin{equation}\label{1.flambda}
2F_N(z)+C\ge\tilde F_N(z)\ge \frac12 F_N(z)-C
\end{equation}
if $N\ge N_0(\lambda)$ is large enough, where the constant $C$ only
depends on $\lambda$. Integrating now \eqref{1.energy} with respect to
$t$ and using \eqref{1.flambda}, the fact that $g_0(u)$ is globally
bounded and obvious estimates, we end up with \eqref{1.en-est}.
\par
As a next step, we obtain the {\it dissipative} analogue of
estimate \eqref{1.en-est}.

\begin{lemma} \label{Lem1.sm} Let the assumptions of Lemma \ref{Lem1.en}
hold, $u$ be a sufficiently regular solution of \eqref{1.req} and
$N$ be large enough (depending on $\lambda$ and $c=\<u_0\>$).
Then, the following estimate holds:
%$$
\begin{multline}\label{1.dis}
\|u(t)\|^2_{H^1(\Omega)}+\|u(t)\|^2_{H^1(\Gamma)}+(F_N(u(t)),1)_\Omega+\\+\int_t^{t+1}(\|\Dt
u(s)\|^2_{H^{-1}(\Omega)}+\|\Dt u(s)\|^2_{L^2(\Gamma)}+\|f_N(u(s))\|_{L^1(\Omega)})\,ds\le\\\le
C(\|u(0)\|^2_{H^1(\Omega)}+\|u(0)\|^2_{H^1(\Gamma)}+(F_N(u(0)),1)_\Omega)e^{-\alpha t}+C(1+\|h_1\|^2_{L^2(\Omega)}+
\|h_2\|^2_{L^2(\Gamma)}),
\end{multline}
%$$
where the positive constants $C$ and $\alpha$ are independent of $N$ and
$u$, but can depend on the value $c$ in the mass conservation
 \eqref{1.mass}. In addition, the following smoothing
property holds:
%$$
\begin{multline}\label{1.smoothing}
\|u(t)\|^2_{H^1(\Omega)}+\|u(t)\|^2_{H^1(\Gamma)}+(F_N(u(t)),1)_\Omega\le\\\le
Ct^{-1}(\|u(0)-c\|_{H^{-1}(\Omega)}^2+\|u(0)\|^2_{L^2(\Gamma)}+\|h_1\|^2_{L^2(\Omega)}+\|h_2\|^2_{L^2(\Gamma)}+1),
\ \ t\in (0,1],
\end{multline}
%$$
where the constant $C$ is independent of $N$.
\end{lemma}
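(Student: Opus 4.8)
The plan is to base the dissipative estimate on two ingredients: the Lyapunov structure already present in the energy equality~\eqref{1.energy}, and a second differential inequality obtained by testing the reduced equation~\eqref{1.req} with $u-c$, which supplies the coercivity missing in the low-order norms. One then combines them through a modified energy functional on which a uniform (in $N$) Gronwall argument is run; the smoothing estimate~\eqref{1.smoothing} comes out of the same two inequalities after inserting the weight $t$.

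First I would multiply~\eqref{1.req} by $u-c$ and integrate over $\Omega$; the term coming from $\<\mu\>$ drops because $u-c$ has zero mean. After Green's formula in $\Omega$ one substitutes $\partial_n u=h_2-\Dt u+\Delta_\Gamma u-g(u)$ on $\Gamma$ from~\eqref{1.dyn} and integrates by parts on the closed surface $\Gamma$; using also $(A\Dt u,u-c)_\Omega=\tfrac12\frac d{dt}\|u-c\|^2_{H^{-1}(\Omega)}$ and $(\Dt u,u-c)_\Gamma=\tfrac12\frac d{dt}\|u-c\|^2_{L^2(\Gamma)}$ this yields
\[
\tfrac12\tfrac d{dt}\big(\|u-c\|^2_{H^{-1}(\Omega)}+\|u-c\|^2_{L^2(\Gamma)}\big)+\|\Nx u\|^2_{L^2(\Omega)}+\|\nabla_\Gamma u\|^2_{L^2(\Gamma)}+(\tilde f_N(u),u-c)_\Omega+(g(u),u-c)_\Gamma=(h_2,u-c)_\Gamma-(h_1,u-c)_\Omega .
\]
Since $g=\mathrm{id}+g_0$ with $g_0$ globally bounded one has $(g(u),u-c)_\Gamma\ge\tfrac12\|u\|^2_{L^2(\Gamma)}-C$, and, by the Poincar\'e inequality for $u-c$ (recall $\<u\>=c$), the right-hand side is absorbed up to $C(1+\|h_1\|^2_{L^2(\Omega)}+\|h_2\|^2_{L^2(\Gamma)})$ plus small multiples of $\|\Nx u\|^2_{L^2(\Omega)}$ and $\|u\|^2_{L^2(\Gamma)}$. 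The decisive ingredient is the coercivity of the singular term: for fixed $c\in(-1,1)$ there are $\delta,C>0$ and $N_0$, depending only on $c,\lambda,f$, with the \emph{$N$-uniform} pointwise bound
\[
\tilde f_N(s)(s-c)\ge\delta\big(|f_N(s)|+F_N(s)\big)-C,\qquad s\in\R,\quad N\ge N_0 ,
\]
which rests on the monotonicity of $f$ and its blow-up at $\pm1$ together with the explicit affine tails~\eqref{1.f} (near $\pm1$ one factors $\tilde f_N(s)(s-c)\ge f_N(s)\,(1\mp1/N-c)-|\lambda s(s-c)|$ and takes $N_0$ with $1/N_0<\tfrac12(1-|c|)$, while for large $|s|$ both sides grow quadratically with comparable leading coefficients). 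This $N$-uniform coercivity is, in my view, the one genuinely delicate point; the rest is bookkeeping of lower-order absorptions. Inserting it and choosing the small parameters appropriately gives
\[
\tfrac d{dt}\big(\|u-c\|^2_{H^{-1}(\Omega)}+\|u-c\|^2_{L^2(\Gamma)}\big)+\|\Nx u\|^2_{L^2(\Omega)}+\|\nabla_\Gamma u\|^2_{L^2(\Gamma)}+\|u\|^2_{L^2(\Gamma)}+\delta\big(\|f_N(u)\|_{L^1(\Omega)}+\|F_N(u)\|_{L^1(\Omega)}\big)\le C(1+\|h_1\|^2_{L^2(\Omega)}+\|h_2\|^2_{L^2(\Gamma)}).
\]

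Next, let $E_N(t)$ be the functional differentiated in~\eqref{1.energy}. From~\eqref{1.flambda}, the boundedness of $g_0$ and of the analogous $G_0$ in $G=\tfrac12(\cdot)^2+G_0$, the Poincar\'e inequality for $u-c$ and Young's inequality applied to $(h_1,u)_\Omega$ and $(h_2,u)_\Gamma$, one obtains the two-sided control $\tfrac14\big(\|u\|^2_{H^1(\Omega)}+\|u\|^2_{H^1(\Gamma)}+(F_N(u),1)_\Omega\big)-K\le E_N\le C\big(\|u\|^2_{H^1(\Omega)}+\|u\|^2_{H^1(\Gamma)}+(F_N(u),1)_\Omega\big)+K$ with $K:=C(1+\|h_1\|^2_{L^2(\Omega)}+\|h_2\|^2_{L^2(\Gamma)})$, all constants $N$-independent for $N$ large. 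Setting $\Phi_N:=E_N+\nu\big(\|u-c\|^2_{H^{-1}(\Omega)}+\|u-c\|^2_{L^2(\Gamma)}\big)$ with $\nu>0$ small and adding $\nu$ times the dissipative inequality of the previous step to~\eqref{1.energy}, the resulting dissipation controls $\tfrac\nu C(\Phi_N-K)$ from below (using the two-sided bound together with $\|u-c\|^2_{H^{-1}}+\|u-c\|^2_{L^2(\Gamma)}\le C(\|\Nx u\|^2+\|u\|^2_{L^2(\Gamma)}+1)$), so that for some $\alpha>0$
\[
\tfrac d{dt}\Phi_N+\alpha\Phi_N+\|\Dt u\|^2_{H^{-1}(\Omega)}+\|\Dt u\|^2_{L^2(\Gamma)}+\delta\nu\|f_N(u)\|_{L^1(\Omega)}\le C(1+\|h_1\|^2_{L^2(\Omega)}+\|h_2\|^2_{L^2(\Gamma)}).
\]
Gronwall's lemma gives the exponential decay $\Phi_N(t)\le\Phi_N(0)e^{-\alpha t}+C(1+\|h_1\|^2_{L^2(\Omega)}+\|h_2\|^2_{L^2(\Gamma)})$, and integrating over $[t,t+1]$ with $\Phi_N(t+1)\ge-K$ bounds $\int_t^{t+1}\!\big(\|\Dt u\|^2_{H^{-1}(\Omega)}+\|\Dt u\|^2_{L^2(\Gamma)}+\|f_N(u)\|_{L^1(\Omega)}\big)\,ds$ by the same right-hand side; rewriting $\Phi_N(0)$ and $\|u(t)\|^2_{H^1(\Omega)}+\|u(t)\|^2_{H^1(\Gamma)}+(F_N(u(t)),1)_\Omega$ through the two-sided bound on $E_N$ produces~\eqref{1.dis}.

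Finally, for~\eqref{1.smoothing} I would use the same two inequalities but weighted by $t$. Integrating the dissipative inequality over $[0,1]$ and using $\|u(0)-c\|^2_{L^2(\Gamma)}\le 2\|u(0)\|^2_{L^2(\Gamma)}+C$ bounds $\int_0^1\!\big(\|\Nx u\|^2+\|\nabla_\Gamma u\|^2+\|u\|^2_{L^2(\Gamma)}+\|F_N(u)\|_{L^1}\big)\,ds$ by $CM_0$, where $M_0:=\|u(0)-c\|^2_{H^{-1}(\Omega)}+\|u(0)\|^2_{L^2(\Gamma)}+\|h_1\|^2_{L^2(\Omega)}+\|h_2\|^2_{L^2(\Gamma)}+1$; combined with the pointwise bound $\|u(s)-c\|^2_{H^{-1}}+\|u(s)-c\|^2_{L^2(\Gamma)}\le CM_0$ for $s\le1$ and the two-sided bound on $E_N$, this gives $\int_0^1\Phi_N(s)\,ds\le CM_0$. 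Since the differential inequality of the previous step yields $\tfrac d{dt}\Phi_N\le C(1+\|h_1\|^2_{L^2(\Omega)}+\|h_2\|^2_{L^2(\Gamma)})$, one has $\tfrac d{dt}(t\Phi_N(t))\le\Phi_N(t)+Ct(1+\|h_1\|^2_{L^2(\Omega)}+\|h_2\|^2_{L^2(\Gamma)})$; integrating on $[0,t]$ for $t\le1$ gives $t\Phi_N(t)\le\int_0^1\Phi_N+C(1+\|h_1\|^2_{L^2(\Omega)}+\|h_2\|^2_{L^2(\Gamma)})\le CM_0$, hence $\Phi_N(t)\le CM_0/t$. As $t\le1$ forces $K\le CM_0/t$, the lower bound on $E_N$ together with $E_N\le\Phi_N$ delivers~\eqref{1.smoothing}. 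All constants produced this way depend only on $\Omega,\Gamma,f,g,\lambda$ and $c$, never on $N$, as claimed.
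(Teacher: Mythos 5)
Your proof is correct and follows essentially the same route as the paper: both rest on testing \eqref{1.req} with $u-c$, the $N$-uniform coercivity $\tilde f_N(z)(z-c)\ge\alpha |f_N(z)|-C$ (the paper's \eqref{1.imp}, quoted from \cite{MZ2}, which together with \eqref{1.pot} gives exactly your pointwise bound), the energy identity \eqref{1.energy}, and the $t$-weighted Gronwall argument for the smoothing property. The only difference is organizational: you run a single Gronwall estimate on the combined functional $E_N+\nu(\|u-c\|^2_{H^{-1}(\Omega)}+\|u-c\|^2_{L^2(\Gamma)})$, whereas the paper first derives exponential decay in the weak norm \eqref{1.grest}, then the smoothing property \eqref{1.smoothing}, and obtains \eqref{1.dis} by combining these with \eqref{1.en-est}; the ingredients and the resulting constants are the same either way.
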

\begin{proof} We have, owing to assumptions \eqref{A.2} on the
nonlinearity $f$ and the fact that $c\in(-1,1)$,
%$$
\begin{equation}\label{1.imp}
\tilde f_N(z).(z-c)\ge \alpha f_N(z)z-C\ge\alpha/2|f_N(z)|-C_1,\ \ z\in\R,
\end{equation}
%$$
where $N$ is large enough and the positive constants $\alpha$ and $C_i$
depend on $c$ and $\lambda$, but are independent of $N$ (see \cite{MZ2}).
Multiplying now equation \eqref{1.req} by $\bar u(t):=u(t)-c$ and
using the above inequality, we find
%$$
\begin{multline}\label{1.est}
\frac12 \frac d{dt}(\|\bar u(t)\|^2_{H^{-1}(\Omega)}+\|\bar
u(t)\|^2_{L^2(\Omega)})+\alpha((f_N(u(t)),u(t))_\Omega+\\+\|\bar
u(t)\|^2_{H^1(\Omega)}+\|\bar u(t)\|^2_{H^1(\Gamma)})\le
C(1+\|h_1\|^2_{L^2(\Omega)}+\|h_2\|^2_{L^2(\Gamma)}),
\end{multline}
%$$
for some positive constants $\alpha$ and $C$. Applying the
Gronwall inequality to this relation, we obtain
%$$
\begin{multline}\label{1.grest}
\|\bar u(t)\|^2_{H^{-1}(\Omega)}+\|\bar
u(t)\|^2_{L^2(\Gamma)}+\\+\int_t^{t+1}(\|\bar
u(s)\|^2_{H^1(\Omega)}+\|\bar u(s)\|^2_{H^1(\Gamma)}+(f_N(u(s),u(s))_\Omega)\,ds
\le\\\le C(\|\bar u(0)\|^2_{H^{-1}(\Omega)}+\|\bar
u(0)\|^2_{L^2(\Gamma)})e^{-\alpha t}+C(1+\|h_1\|^2_{L^2(\Omega)}+\|h_2\|^2_{L^2(\Gamma)})
\end{multline}
%$$
for some positive constants $C$ and $\alpha$. In order to finish the proof
of the lemma, there only remains to note that, owing to the monotonicity of the function $f_N$,
%$$
\begin{equation}\label{1.pot}
F_N(z)\le f_N(z).z,\ \ z\in \R.
\end{equation}
%$$
Then, the smoothing property \eqref{1.smoothing} follows in a
standard way from \eqref{1.en-est}, \eqref{1.grest} and
\eqref{1.pot} and the dissipative estimate \eqref{1.dis} is an
immediate consequence of the dissipative estimate \eqref{1.grest}
(in a weaker norm) and the smoothing property \eqref{1.smoothing}, together with \eqref{1.en-est}.
This finishes the proof of Lemma \ref{Lem1.sm}.
\end{proof}
We are now ready to obtain additional regularity on $\Dt
u(t)$. To this end, we differentiate equation \eqref{1.req} with
respect to $t$
and set $\theta(t):=\Dt u(t)$. Then, this function solves
%$$
\begin{equation}\label{1.dif}
(-\Dx)^{-1}\Dt\theta=\Dx\theta-\tilde f_N'(u)\theta+\<\Dt\mu\>,\ \
\theta\big|_{t=0}=\theta_0,
\end{equation}
%$$
where $\theta_0:=-\Dx(\Dx u_0-\tilde f_N(u_0)-h_1)$, and
$$
\Dt\theta-\Delta_\Gamma\theta+\partial_n\theta+g'(u)\theta=0,\ \
\text{on $\Gamma$}.
$$

\begin{lemma}\label{Lem1.dt} Let the assumptions of Lemma
\ref{Lem1.en} hold. Then, the following estimate is valid for the derivative
$\theta(t):=\Dt u(t)$:
%$$
\begin{multline}\label{1.dtest}
\|\theta(t)\|_{H^{-1}(\Omega)}^2+\|\theta(t)\|^2_{L^2(\Gamma)}+\int_t^{t+1}(\|\theta(s)\|^2_{H^1(\Omega)}+
\|\theta(s)\|^2_{H^1(\Gamma)})\,ds\le\\\le
C(\|u(0)\|^2_{H^1(\Omega)}+\|u(0)\|^2_{H^1(\Gamma)}+\|\theta(0)\|^2_{H^{-1}(\Omega)}+
\|\theta(0)\|^2_{L^2(\Gamma)})e^{-\alpha
t}+\\+C(1+\|h_1\|^2_{L^2(\Omega)}+\|h_2\|^2_{L^2(\Gamma)}),
\end{multline}
%$$
where the positive constants $C$ and $\alpha$ can depend on the
total mass $c$, but are independent of $N$.
In addition, the following smoothing
property holds:
%$$
\begin{multline}\label{1.smdt}
\|\theta(t)\|^2_{H^{-1}(\Omega)}+\|\theta(t)\|_{L^2(\Gamma)}^2\le\\\le
Ct^{-2}(\|u(0)-c\|_{H^{-1}(\Omega)}^2+\|u(0)\|^2_{L^2(\Gamma)}+\|h_1\|^2_{L^2(\Omega)}+\|h_2\|^2_{L^2(\Gamma)}+1),
\ \ t\in (0,1],
\end{multline}
%$$
where the constant $C$ is independent of $N$.
\end{lemma}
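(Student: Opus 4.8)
The plan is to mimic the energy/dissipative estimates already obtained for $u$, but now applied to the time-differentiated equation \eqref{1.dif} and its boundary condition. The natural "energy" for $\theta$ is obtained by multiplying \eqref{1.dif} by $\theta$ itself (not by $\Dt\theta$): testing with $\theta$, integrating over $\Omega$ and by parts, and using the boundary relation $\Dt\theta-\Delta_\Gamma\theta+\partial_n\theta+g'(u)\theta=0$ on $\Gamma$, one gets
\begin{multline*}
\frac12\frac d{dt}\big(\|\theta(t)\|^2_{H^{-1}(\Omega)}+\|\theta(t)\|^2_{L^2(\Gamma)}\big)+\|\Nx\theta(t)\|^2_{L^2(\Omega)}+\|\nabla_\Gamma\theta(t)\|^2_{L^2(\Gamma)}\\
+(\tilde f_N'(u)\theta,\theta)_\Omega+(g'(u)\theta,\theta)_\Gamma=0,
\end{multline*}
where we have also used $\<\Dt\mu\>\cdot\<\theta\>=0$ since $\<\theta\>=\<\Dt u\>=0$. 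Here I would write $\tilde f_N'=f_N'-\lambda$ and use $f_N'\ge0$ from \eqref{A.2}(3); the term $(f_N'(u)\theta,\theta)_\Omega\ge0$ is then favorable and can simply be dropped (or kept for the smoothing step), while $-\lambda\|\theta\|^2_{L^2(\Omega)}$ and the boundary term $(g'(u)\theta,\theta)_\Gamma$, using $\|g'\|_{C^1}\le C$, are lower-order and must be absorbed.

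The absorption is the routine but slightly delicate part. To control $\lambda\|\theta\|^2_{L^2(\Omega)}$ by the good gradient term plus the weak norm, I would use the interpolation inequality $\|\theta\|^2_{L^2(\Omega)}\le\eps\|\Nx\theta\|^2_{L^2(\Omega)}+C_\eps\|\theta\|^2_{H^{-1}(\Omega)}$ valid for zero-mean $\theta$; similarly $\|\theta\|^2_{L^2(\Gamma)}$ is controlled by the trace/interpolation estimate $\|\theta\|^2_{L^2(\Gamma)}\le\eps(\|\Nx\theta\|^2_{L^2(\Omega)}+\|\nabla_\Gamma\theta\|^2_{L^2(\Gamma)})+C_\eps(\|\theta\|^2_{H^{-1}(\Omega)}+\|\theta\|^2_{L^2(\Gamma)})$, which is the kind of auxiliary inequality collected in Appendix 1. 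Choosing $\eps$ small yields a differential inequality of the form
$$
\frac d{dt}\big(\|\theta\|^2_{H^{-1}(\Omega)}+\|\theta\|^2_{L^2(\Gamma)}\big)+\beta\big(\|\theta\|^2_{H^1(\Omega)}+\|\theta\|^2_{H^1(\Gamma)}\big)\le C\big(\|\theta\|^2_{H^{-1}(\Omega)}+\|\theta\|^2_{L^2(\Gamma)}\big),
$$
but this by itself only gives exponential growth. To upgrade to the dissipative estimate \eqref{1.dtest} I would exploit that, by \eqref{1.dis} from Lemma \ref{Lem1.sm}, $\int_t^{t+1}(\|\theta(s)\|^2_{H^{-1}(\Omega)}+\|\theta(s)\|^2_{L^2(\Gamma)})\,ds$ is already bounded by the right-hand side of \eqref{1.dtest}; combined with the above differential inequality, the uniform Gronwall lemma then converts this into the pointwise dissipative bound on $\|\theta(t)\|^2_{H^{-1}(\Omega)}+\|\theta(t)\|^2_{L^2(\Gamma)}$ and, after one more integration, on the $\int_t^{t+1}$ of the $H^1$ norms.

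For the smoothing property \eqref{1.smdt} I would multiply the same differential inequality by $t^2$ (an extra power compared with the $t^{-1}$ in \eqref{1.smoothing} because $\theta$ is one time-derivative further out): setting $y(t):=t^2(\|\theta\|^2_{H^{-1}(\Omega)}+\|\theta\|^2_{L^2(\Gamma)})$ one obtains $y'\le Cy+2t(\|\theta\|^2_{H^{-1}(\Omega)}+\|\theta\|^2_{L^2(\Gamma)})+(\text{gradient terms with favorable sign})$, and the borderline term $2t(\|\theta\|^2_{H^{-1}}+\|\theta\|^2_{L^2(\Gamma)})$ is integrable in time precisely because $\int_0^1(\|\theta(s)\|^2_{H^{-1}(\Omega)}+\|\theta(s)\|^2_{L^2(\Gamma)})\,ds$ is controlled via \eqref{1.smoothing} together with the energy estimate \eqref{1.en-est} applied on $(0,1]$ — indeed $\|\Dt u\|^2_{H^{-1}(\Omega)}+\|\Dt u\|^2_{L^2(\Gamma)}$ integrated from $0$ to $1$ is bounded by $\|u(0)\|^2_{H^1(\Omega)}+\|u(0)\|^2_{H^1(\Gamma)}+(F_N(u(0)),1)_\Omega+\dots$, and \eqref{1.smoothing} bounds the latter for $t$ bounded away from $0$. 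Integrating in $t$ then gives $y(t)\le C(\dots)$, which is exactly \eqref{1.smdt}. The main obstacle, as usual in these dissipative/smoothing arguments, is keeping all constants independent of $N$: this is where one must invoke the $N$-uniform bound $0\le f_N'$ and the $N$-uniform constants in \eqref{1.imp} and \eqref{1.flambda} (from \cite{MZ2}), and make sure the interpolation constants, being purely functional-analytic, do not see $N$ at all.
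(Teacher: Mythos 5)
Your derivation of the dissipative estimate \eqref{1.dtest} coincides with the paper's proof: testing \eqref{1.dif} with $\theta$, using $\tilde f_N'\ge-\lambda$, the boundary condition for $\theta$ and the boundedness of $g'$ gives exactly the paper's differential inequality \eqref{1.est2}, and your upgrade via interpolation between $H^{-1}$ and $H^1$ together with the uniform Gronwall lemma fed by the integral control of $\int_t^{t+1}(\|\Dt u\|^2_{H^{-1}(\Omega)}+\|\Dt u\|^2_{L^2(\Gamma)})\,ds$ from \eqref{1.dis} is precisely what the paper's terse ``interpolating \dots and applying the Gronwall inequality'' amounts to. This part is fine.

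The smoothing step \eqref{1.smdt}, however, has a genuine gap. Writing $E(t):=\|\theta(t)\|^2_{H^{-1}(\Omega)}+\|\theta(t)\|^2_{L^2(\Gamma)}$ and letting $M$ denote the bracket on the right-hand side of \eqref{1.smdt}, your $t^2$-weighted argument requires $\int_0^t 2\tau E(\tau)\,d\tau\le CM$, and you justify this by asserting that $\int_0^1E(s)\,ds$ is controlled. It is not: \eqref{1.en-est} bounds $\int_0^1E$ only in terms of $\|u(0)\|^2_{H^1(\Omega)}+\|u(0)\|^2_{H^1(\Gamma)}+(F_N(u(0)),1)_\Omega$, which are exactly the quantities that \eqref{1.smdt} must avoid (they are not dominated by $M$, uniformly in $N$ and in the approximating data --- the whole point of the $t^{-2}$ singularity is that the initial datum need not have finite energy). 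Your parenthetical appeal to \eqref{1.smoothing} does not repair this, since \eqref{1.smoothing} controls $\|u(s)\|_{H^1}$ only for $s>0$; restarting \eqref{1.en-est} at time $s$ and invoking \eqref{1.smoothing} there yields only $\int_s^1E\le CMs^{-1}$, and a dyadic decomposition shows this bound alone is compatible with $\int_0^t\tau E(\tau)\,d\tau=+\infty$ (each dyadic block $[t2^{-k-1},t2^{-k}]$ contributes $O(M)$). Two standard repairs, either of which is presumably what the authors mean by ``arguing in a standard way'': (i) use the mean value theorem to pick $s_0\in[t/2,t]$ with $E(s_0)\le\frac{2}{t}\int_{t/2}^{t}E\le CMt^{-2}$ (using $\int_{t/2}^{1}E\le CMt^{-1}$ as above) and then propagate forward on $[s_0,t]$ with the crude inequality $E'\le CE$, $t\le1$; or (ii) keep the $t^2$ weight but bound $\int_0^t2\tau E(\tau)\,d\tau$ by integrating the energy identity \eqref{1.energy} against $2\tau$, which reduces the problem to the bound $\int_0^1(\|u(\tau)\|^2_{H^1(\Omega)}+\|u(\tau)\|^2_{H^1(\Gamma)}+(F_N(u(\tau)),1)_\Omega)\,d\tau\le CM$ available from \eqref{1.grest} and \eqref{1.pot} --- i.e.\ you need the $L^1$-in-time control of the energy from the \emph{proof} of Lemma \ref{Lem1.sm}, not merely the statements of \eqref{1.en-est} and \eqref{1.smoothing}.
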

\begin{proof} We multiply equation
\eqref{1.dif} by $\theta(t)$, integrate over $\Omega $ and use the fact that $\tilde
f'_N(u)\ge-\lambda$. Then, using also the boundary conditions and
the fact that $g'(u)$ is uniformly bounded, we find
%$$
\begin{multline}\label{1.est2}
\frac d{dt}(\|\theta(t)\|^2_{H^{-1}(\Omega)}+\|\theta
(t)\|_{L^2(\Gamma)}^2)+\alpha(\|\theta(t)\|^2_{H^1(\Omega)}+
\|\theta(t)\|_{H^1(\Gamma)}^2)\le \\ \le
C(\|\Dt u(t)\|^2_{L^2(\Omega)}+\|\Dt u(t)\|^2_{L^2(\Gamma)}),
\end{multline}
%$$
for some positive constants $\alpha$ and $C$ which are independent of $N$.
Interpolating between $H^{-1}$ and $H^1$ and applying the Gronwall inequality to this relation, we obtain the
desired estimate \eqref{1.dtest}. Combining this estimate with
\eqref{1.dis} and \eqref{1.smoothing} and arguing in a standard
way, we end up with \eqref{1.smdt} and finish the proof of the
lemma.
\end{proof}
The next lemma gives $H^1$-estimates on the solutions for every
fixed time $t\ge0$.
\begin{lemma}\label{Lem1,add1} Let the above assumptions hold.
Then, for every fixed $t\ge0$, the following estimate holds:
%$$
\begin{multline}\label{1.h1}
\|u(t)\|_{H^1(\Omega)}^2+\|u(t)\|_{H^1(\Gamma)}^2+\|f_N(u(t))\|_{L^1(\Omega)}\le\\\le
C(1+\|\Dt u(t)\|_{H^{-1}(\Omega)}^2+\|\Dt
u(t)\|_{L^2(\Gamma)}^2+\|h_1\|_{L^2(\Omega)}^2+\|h_2\|_{L^2(\Gamma)}^2),
\end{multline}
%$$
where the constant $C$ depends on $c$, but is independent of $t$
and $N$.
\end{lemma}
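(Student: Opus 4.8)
The plan is to treat \eqref{1.req} at a fixed time $t$ as an elliptic problem for $u(t)$, with the time derivative $\Dt u(t)$ playing the role of a given right-hand side, and to extract the desired $H^1$- and $L^1$-bounds by testing with a suitable multiplier. Concretely, rewrite \eqref{1.req} in the form $-\Dx u(t)+\tilde f_N(u(t))=-A\Dt u(t)-h_1+\<\mu(t)\>$ in $\Omega$, coupled with the boundary relation $\partial_n u(t)=h_2-\Dt u(t)+\Delta_\Gamma u(t)-g(u(t))$ on $\Gamma$ coming from \eqref{1.dyn}. The mean value $\<\mu(t)\>$ is already controlled in terms of $\Dt u(t)$ and $\Dt\<u\>_\Gamma$ by \eqref{1.mu} together with the global bound on $g_0$, so it contributes only lower-order terms.

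First I would test this elliptic identity with $\bar u(t)=u(t)-c$ (exactly the multiplier used in Lemma~\ref{Lem1.sm}, which is legitimate since $\<\Dt u(t)\>=0$ makes $A\Dt u(t)$ well-defined). Integrating by parts produces $\|\Nx u(t)\|_{L^2(\Omega)}^2$ from the Laplacian term, the surface term $\partial_n u(t)$ on $\Gamma$ combines with the boundary condition to give $\|\nabla_\Gamma u(t)\|_{L^2(\Gamma)}^2$ after another integration by parts, and the nonlinear term yields $(\tilde f_N(u(t)),\bar u(t))_\Omega$, which by \eqref{1.imp} dominates $\tfrac{\alpha}{2}\|f_N(u(t))\|_{L^1(\Omega)}-C_1|\Omega|$. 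The right-hand side of the tested identity is bounded, via Cauchy--Schwarz and the Poincaré-type inequality $\|A\Dt u(t)\|$ controlled by $\|\Dt u(t)\|_{H^{-1}(\Omega)}$, by $C(1+\|\Dt u(t)\|_{H^{-1}(\Omega)}^2+\|\Dt u(t)\|_{L^2(\Gamma)}^2+\|h_1\|_{L^2(\Omega)}^2+\|h_2\|_{L^2(\Gamma)}^2+\|u(t)\|_{H^{1}(\Gamma)}\|\Dt u(t)\|_{L^2(\Gamma)})$; here the boundary terms $\Delta_\Gamma u(t)$ and $g(u(t))$ paired against $\bar u(t)$ on $\Gamma$ also enter, the former being absorbed into $\|\nabla_\Gamma u(t)\|_{L^2(\Gamma)}^2$ after integration by parts and a small-coefficient Young inequality. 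To control $\|u(t)\|_{L^2(\Omega)}^2$ and $\|u(t)\|_{L^2(\Gamma)}^2$, which are not directly produced by this multiplier, I would invoke the a~priori bound $|u(t,x)|<1$ from \eqref{1.bound} on $\Omega$, and on $\Gamma$ use the trace/interpolation inequality $\|u(t)\|_{L^2(\Gamma)}^2\le \epsilon\|\Nx u(t)\|_{L^2(\Omega)}^2+C_\epsilon\|u(t)\|_{L^2(\Omega)}^2$ to absorb it; alternatively one can bound $\|u(t)\|_{L^2(\Gamma)}$ directly from $\|u(t)\|_{H^1(\Gamma)}$ and then note the claimed estimate already has $\|u(t)\|_{H^1(\Gamma)}^2$ on its left, so a standard rearrangement suffices. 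Collecting the Gradient terms on the left and everything else on the right, and absorbing the small-coefficient terms, yields \eqref{1.h1}.

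The main obstacle I anticipate is the careful bookkeeping of the boundary contributions: the product $(\Delta_\Gamma u(t),\bar u(t))_\Gamma$ must be integrated by parts on the closed manifold $\Gamma$ to become $-\|\nabla_\Gamma u(t)\|_{L^2(\Gamma)}^2$, and one has to check that the cross terms involving $\Dt u(t)$ on $\Gamma$ (which come with a factor $\|u(t)\|_{H^1(\Gamma)}$ after Cauchy--Schwarz) can genuinely be absorbed rather than creating a circular dependence — this is where a Young inequality with a sufficiently small parameter, applied to $\|u(t)\|_{H^1(\Gamma)}\|\Dt u(t)\|_{L^2(\Gamma)}\le \epsilon\|u(t)\|_{H^1(\Gamma)}^2+C_\epsilon\|\Dt u(t)\|_{L^2(\Gamma)}^2$, is essential. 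A secondary technical point is ensuring all manipulations are justified for the regular approximate problem \eqref{1.regmain} (which they are, by the stated well-posedness results for the regular case), with every constant independent of $N$; the only $N$-dependence that could creep in is through \eqref{1.imp}, but that inequality holds uniformly in $N$ for $N$ large depending only on $c$ and $\lambda$, so the final bound is $N$-independent as claimed.
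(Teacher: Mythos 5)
Your proposal is correct and follows essentially the same route as the paper: the paper's proof is precisely to multiply \eqref{1.req} by $\bar u(t)=u(t)-c$ and repeat the computation leading to \eqref{1.est} without integrating in time, using \eqref{1.imp} for the nonlinear term and Young's inequality to absorb the terms involving $\Dt u(t)$. The only small caveat is that the $L^2(\Omega)$-part of the norm should not be obtained from \eqref{1.bound} (which is a priori imposed only on the singular limit, not on the approximate solutions $u_N$), but rather from the coercivity $f_N(z)z\ge\delta z^2-C$ (uniform in large $N$) already implicit in \eqref{1.imp}.
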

Indeed, multiplying equation \eqref{1.req} by $\bar u(t):=u(t)-c$
and arguing as in the derivation of \eqref{1.est} (but now
without integrating with respect to $t$), we deduce the desired estimate
\eqref{1.h1}. Here, we have used the inequality \eqref{1.imp} again.
\par
Furthermore, using \eqref{1.h1} and expression \eqref{1.mu} for
the mean value of $\mu$, we have
%$$
\begin{equation}\label{1.mumu}
|\<\mu(t)\>|\le C(1+\|\Dt u(t)\|_{H^{-1}(\Omega)}^2+\|\Dt
u(t)\|_{L^2(\Gamma)}^2+\|h_1\|_{L^2(\Omega)}^2+\|h_2\|_{L^2(\Gamma)}^2).
\end{equation}
%$$

We finally  rewrite
equation \eqref{1.req} in the form of a nonlinear elliptic
problem,
%$$
\begin{equation}\label{1.ell}
\begin{cases}
\Dx u(t)-f_N(u(t))-u(t)=\tilde h_1(t):=h_1-\\\text{\phantom{ooooooo0000000}}-u(t)-\lambda
u(t)+(-\Dx)^{-1}\Dt u(t)-\<\mu(t)\> ,\ \ \text{in $\Omega$},\\
\Delta_\Gamma u(t)-u(t)-\partial_n u(t)=\tilde h_2(t):=-h_2+g_0(u(t))+\Dt
u(t), \ \text{ on  $\Gamma$},
\end{cases}
\end{equation}
%$$
for every fixed $t$ and note that the estimates derived above
yield the following control of the right-hand sides in
\eqref{1.ell}:
%$$
\begin{multline}\label{1.good}
\|\tilde h_1(t)\|_{L^2(\Omega)}+\|\tilde
h_2(t)\|_{L^2(\Gamma)}\le\\\le
C(1+\|\Dt u(t)\|_{H^{-1}(\Omega)}^2+\|\Dt
u(t)\|_{L^2(\Gamma)}^2+\|h_1\|_{L^2(\Omega)}^2+\|h_2\|_{L^2(\Gamma)}^2)
\end{multline}
%$$
for some positive constant $C$ which is independent of $N$.
\par
Therefore, additional smoothness on the
solution $u:=u_N$ can be obtained by a proper elliptic regularity theorem (see \cite{MZ2}).
Unfortunately, in contrast to the case of regular potentials,
this problem does not satisfy the maximal regularity estimate in $L^2$ for
singular potentials $f$, see Appendix 1. Nevertheless, the
partial regularity formulated below is crucial for what follows.

\begin{lemma}\label{Cor1.H} Let the above assumptions hold and
set $\Omega_\eb:=\{x\in\Omega,\ \ d(x,\Gamma)>\eb\}$. Denote
by $n=n(x)$ some  smooth extension of the unit normal vector
field at the boundary inside the domain $\Omega$. Let also
$D_\tau u:=\Nx u-(\partial_n u)n$ be the
tangential part of the gradient $\Nx u$. Then, for every $\eb>0$,  the following
estimate holds:
%$$
\begin{multline}\label{1.max}
\|u(t)\|_{C^\alpha(\Omega)}+\|\Nx D_\tau
u(t)\|_{L^2(\Omega)}+\|u(t)\|_{H^2(\Omega_\eb)}+\|u(t)\|_{H^2(\Gamma)}\le\\\le
C_\eb(\|\tilde h_1(t)\|_{L^2(\Omega)}+\|\tilde h_2(t)\|_{L^2(\Gamma)})
\end{multline}
%$$
for some positive constants $\alpha$ and $C_\eb$ which are independent of~$N$.
\end{lemma}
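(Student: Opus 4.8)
The plan is to treat \eqref{1.ell} as a linear elliptic system for $u$ with the monotone (maximal monotone, by \eqref{A.2}) nonlinearity $f_N$ on the right-hand side, and to exploit the crucial sign structure: when we test the interior equation against a localized second tangential difference quotient, the $f_N$-term contributes with a favorable sign because $f_N$ is nondecreasing. More precisely, I would first observe that the pair of equations in \eqref{1.ell} is precisely the Euler--Lagrange system, for fixed $t$, of a convex functional
\[
E(v)=\tfrac12\|\Nx v\|_{L^2(\Omega)}^2+\tfrac12\|v\|_{L^2(\Omega)}^2+(F_N(v),1)_\Omega
+\tfrac12\|\nabla_\Gamma v\|_{L^2(\Gamma)}^2+\tfrac12\|v\|_{L^2(\Gamma)}^2
+(\tilde h_1,v)_\Omega+(\tilde h_2,v)_\Gamma,
\]
so that $u(t)$ is its unique minimizer and all the estimates below are estimates on this minimizer; the a priori bound \eqref{1.good} on $(\tilde h_1(t),\tilde h_2(t))$ in $L^2(\Omega)\times L^2(\Gamma)$ is the only input on the data we use. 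The boundary equation $\Delta_\Gamma u-u-\partial_n u=\tilde h_2$ is, on the compact manifold $\Gamma$ without boundary, a standard elliptic equation for $\psi=u|_\Gamma$ with right-hand side $\tilde h_2-\partial_n u\in L^2(\Gamma)$ once $\partial_n u\in L^2(\Gamma)$ is known; hence $\|u(t)\|_{H^2(\Gamma)}\le C(\|\tilde h_2(t)\|_{L^2(\Gamma)}+\|\partial_n u(t)\|_{L^2(\Gamma)})$, and the latter trace is controlled once we have the tangential-gradient estimate, so I would postpone it.

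The core is the estimate on $\Nx D_\tau u$ and on $\|u\|_{C^\alpha(\Omega)}$. Working in a boundary chart, flatten $\Gamma$ locally and write tangential coordinates $y'=(y_1,y_2)$ and normal coordinate $y_3\ge 0$; let $\tau$ be a tangential direction and $D_h^\tau$ the difference quotient of step $h$ in that direction. Testing the (localized) interior equation with $-D_{-h}^\tau(\zeta^2 D_h^\tau u)$, where $\zeta$ is a cutoff in the chart, the principal part yields $\int \zeta^2|\Nx D_h^\tau u|^2$ up to lower-order commutator terms; the monotone term gives
\[
\int f_N'(u)\,\zeta^2\,(D_h^\tau u)^2\,(\text{difference-quotient version})\ \ge\ 0
\]
because $f_N'\ge 0$ (this is where assumption \eqref{A.2}(3) and the monotone extension \eqref{1.f} are indispensable — the analogous test with the \emph{normal} difference quotient would pick up $\partial_n u\,f_N'(u)$ without a sign, which is exactly the loss of $L^2$ maximal regularity mentioned in the paragraph before the lemma); the boundary equation, tested with the tangential difference quotient of $\psi$ on $\Gamma$, contributes $\int_\Gamma \zeta^2|\nabla_\Gamma D_h^\tau\psi|^2$ plus $\int_\Gamma g'(u)\zeta^2(D_h^\tau\psi)^2\ge -C\int_\Gamma\zeta^2(D_h^\tau\psi)^2$ (using $\|g'\|_\infty\le C$), and the normal-derivative coupling term $\int_\Gamma \zeta^2 D_h^\tau\psi\,\partial_n D_h^\tau u$ is precisely what matches, after integration by parts back in $\Omega$, the boundary contribution from the interior test. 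Absorbing commutators and the bounded $g'$-term into the good quadratic terms via the $H^1(\Omega)$ and $H^1(\Gamma)$ bounds from \eqref{1.h1}–\eqref{1.good}, and letting $h\to 0$, we obtain $\|\zeta\,\Nx D_\tau u\|_{L^2(\Omega)}\le C_\eb(\|\tilde h_1\|_{L^2(\Omega)}+\|\tilde h_2\|_{L^2(\Gamma)})$ locally near $\Gamma$; combined with the trivial interior estimate (away from $\Gamma$ the full $H^2$ bound $\|u\|_{H^2(\Omega_\eb)}$ follows from standard interior elliptic regularity, since there the equation is just $\Dx u=f_N(u)+u+\tilde h_1$ with $f_N(u)\in L^2(\Omega_{\eb/2})$ by the $C^\alpha$ bound established next) and a partition of unity this gives the global $\Nx D_\tau u\in L^2(\Omega)$ bound, and as noted the $H^2(\Gamma)$ bound then follows from the flat boundary equation.

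For the $C^\alpha(\Omega)$ bound I would argue as follows: the bound on $\Nx D_\tau u$ together with $u\in H^1(\Omega)$ places $u$ in an anisotropic Sobolev space — three tangential second derivatives and one first normal derivative in $L^2$ of the half-space — which by an anisotropic Sobolev/Morrey embedding (e.g. iterating the trace theorem in the normal variable, or directly: $H^1$ in $y_3$ with values in $H^1$ in $y'$, hence $L^\infty_{y_3}H^1_{y'}$, and $\Nx D_\tau u\in L^2$ upgrades the tangential regularity enough to embed into $C^\alpha$) gives Hölder continuity in the chart; patching over a finite cover yields $u\in C^\alpha(\Omega)$ with the stated bound. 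An alternative, perhaps cleaner, route to the $C^\alpha$ bound is De Giorgi–Nash–Moser: rewrite the interior equation as $\Dx u = V$ with $V:=f_N(u)+u+\tilde h_1$; since $u\in H^1(\Omega)$ and $f_N$ is at worst of the growth permitted near $\pm1$, $V$ lies in $L^{q}$ for some $q>3/2$ uniformly in $N$ (using the $L^1$ bound on $f_N(u)$ from \eqref{1.h1} plus an interpolation — here one must be a little careful to get $q>3/2$ rather than just $q=1$, which is precisely why we cannot expect $H^2$), whence $u\in C^\alpha$ by standard linear elliptic $L^q$–$C^\alpha$ theory up to the boundary once the boundary condition is handled, the boundary condition being of the required oblique-derivative-with-good-sign type. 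I would spell out whichever of these two is shorter.

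The main obstacle is organizing the difference-quotient argument uniformly in $N$ while keeping track of the coupling between the interior and boundary equations: one must choose the test functions $-D_{-h}^\tau(\zeta^2 D_h^\tau u)$ in $\Omega$ and the induced trace $-D_{-h}^\tau(\zeta^2 D_h^\tau\psi)$ on $\Gamma$ consistently so that the cross terms involving $\partial_n u$ cancel exactly, and then verify that every commutator term and every term multiplied by the bounded $g'$ is absorbable using only the $N$-independent $H^1(\Omega)\times H^1(\Gamma)$ bound — the sign of $f_N'$ does the rest, but only for tangential directions, so no normal second derivative ever appears and the estimate is genuinely anisotropic. Because $f_N\in C^1$ for each fixed $N$ the computations are legitimate, and since no constant depends on $N$ (by \eqref{A.2} and the construction \eqref{1.f}), the bound \eqref{1.max} is uniform, as claimed.
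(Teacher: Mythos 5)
Your core argument is the same as the paper's: the proof of this lemma is deferred to Theorem \ref{ThA.1} in Appendix 1, where the boundary is flattened in local charts, the interior equation and the boundary equation are differentiated in the \emph{tangential} directions only, the resulting identities are coupled so that the $\partial_n$-terms cancel, and the sign $f_N'\ge0$ makes the nonlinear term harmless; your difference-quotient formulation is just the rigorous version of that formal tangential differentiation, and your first route to the $C^\alpha$ bound (the anisotropic embedding of $L^2_{y_1}H^2_{y'}\cap H^1_{y_1}H^1_{y'}$ into $C^\alpha$, $\alpha<1/4$) is exactly the paper's step from \eqref{A.20}. The $H^2(\Gamma)$ bound also falls out of the coupled tangential estimate (the term $\int_\Gamma\zeta^2|\nabla_\Gamma D_h^\tau\psi|^2$ you already carry), so there is no need to postpone it to a separate elliptic argument on $\Gamma$.

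Two sub-steps, however, do not survive scrutiny. First, your justification of the interior bound $\|u\|_{H^2(\Omega_\eb)}$ via ``$f_N(u)\in L^2(\Omega_{\eb/2})$ by the $C^\alpha$ bound'' is wrong uniformly in $N$: the $C^\alpha$ estimate bounds the H\"older norm but does \emph{not} separate $u$ from $\pm1$ (indeed the whole point of the paper is that $u$ may touch $\pm1$), so $f_N(u)$ can blow up as $N\to\infty$ even deep inside $\Omega$, and no $N$-independent $L^2$ bound on the right-hand side is available. The fix is the same monotonicity device you already use at the boundary: in the interior all directions are tangential, so test with $\sum_i\partial_{x_i}(\theta\partial_{x_i}u)$ (equivalently, difference-quotient in every direction under a cutoff $\theta$ supported in $\Omega$); the nonlinear contribution is $(f_N'(u)|\Nx u|^2,\theta)_\Omega\ge0$ and the full $H^2(\Omega_\eb)$ bound follows with constants independent of $N$ --- this is precisely \eqref{A.15}. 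Second, your De Giorgi--Nash--Moser alternative for the $C^\alpha$ bound does not close: the only uniform information on $f_N(u)$ is the $L^1(\Omega)$ bound, and there is no interpolation that upgrades it to $L^q$ with $q>3/2$ uniformly in $N$; you should drop that alternative and keep the anisotropic embedding, which needs nothing beyond \eqref{A.20}.
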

The proof of this estimate is based on some variant of the
nonlinear localization technique and is given in Appendix 1 (see Theorem \ref{ThA.1}).
\par
We summarize the a priori estimates obtained so far in the following
theorem which is the main result of this section.

\begin{theorem}\label{Th1.mainest} Let the above assumptions hold and let $u$ be a
sufficiently regular solution of problem \eqref{1.req} with a
sufficiently large $N$ (depending on the constant $\lambda$ and
the total mass $c\in(-1,1)$). Then, the following estimate is
valid for every $\eb>0$:
%$$
\begin{multline}\label{1.regular}
\|u(t)\|_{C^\alpha(\Omega)}^2+\|u(t)\|^2_{H^2(\Gamma)}+
\|u(t)\|_{H^2(\Omega_\eb)}^2+\|u(t)\|^2_{H^1(\Omega)}+\\+
\|\Dt u(t)\|^2_{H^{-1}(\Omega)}+\|\Dt
u(t)\|^2_{L^2(\Gamma)}+\\+
\|\Nx D_\tau
u(t)\|^2_{L^2(\Omega)}+\|f_N(u(t))\|_{L^1(\Omega)}+\int_t^{t+1}(\|\Dt
u(s)\|^2_{H^1(\Omega)}+\|\Dt u(s)\|^2_{H^1(\Gamma)})\,ds\le\\
\le C(1+\|u(0)\|^2_{H^1(\Omega)}+\|u(0)\|^2_{H^1(\Gamma)}+\|\Dt u(0)\|^2_{H^{-1}(\Omega)}+
\|\Dt u(0)\|_{L^2(\Gamma)}^2)^2e^{-\beta t}+\\+
C(1+\|h_1\|^2_{L^2(\Omega)}+\|h_2\|^2_{L^2(\Gamma)})^2,
\end{multline}
%$$
where the positive constants $\alpha$, $\beta$ and $C$ (which can
depend on $\eb$) are independent of $N\to\infty$. In addition, the
following smoothing property holds:
%$$
\begin{multline}\label{1.smo}
\|\Dt u(t)\|_{H^{-1}(\Omega)}+\|\Dt u(t)\|_{L^2(\Gamma)}\le
C t^{-1}(\|u(0)-c\|_{H^{-1}(\Omega)}+\|u(0)\|_{L^2(\Gamma)}
+\\+\|h_1\|_{L^2(\Omega)}+\|h_2\|_{L^2(\Gamma)}+1),\ \ t\in (0,1],
\end{multline}
%$$
where the constant $C$ is also uniform with respect to
$N\to\infty$.
\end{theorem}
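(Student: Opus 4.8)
The plan is to obtain Theorem~\ref{Th1.mainest} by combining the estimates already established for sufficiently regular solutions; there is no new analytic ingredient, only a careful bookkeeping of decay rates and powers. I would first extract from Lemma~\ref{Lem1.dt} the full control of $\theta(t):=\Dt u(t)$: the pointwise dissipative bound \eqref{1.dtest} for $\|\theta(t)\|_{H^{-1}(\Omega)}^2+\|\theta(t)\|_{L^2(\Gamma)}^2$ and the time-integrated bound for $\int_t^{t+1}(\|\theta(s)\|_{H^1(\Omega)}^2+\|\theta(s)\|_{H^1(\Gamma)}^2)\,ds$, both decaying like $e^{-\alpha t}$ with constants uniform in $N$. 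Writing $E_0:=\|u(0)\|_{H^1(\Omega)}^2+\|u(0)\|_{H^1(\Gamma)}^2+\|\Dt u(0)\|_{H^{-1}(\Omega)}^2+\|\Dt u(0)\|_{L^2(\Gamma)}^2$ and $H:=\|h_1\|_{L^2(\Omega)}^2+\|h_2\|_{L^2(\Gamma)}^2$, estimate \eqref{1.dtest} reads $\|\theta(t)\|_{H^{-1}(\Omega)}^2+\|\theta(t)\|_{L^2(\Gamma)}^2\le CE_0e^{-\alpha t}+C(1+H)$. Inserting this into the bound \eqref{1.good} for the right-hand side of the elliptic problem \eqref{1.ell} (and into \eqref{1.mumu} for $\<\mu(t)\>$, which enters $\tilde h_1$) gives $\|\tilde h_1(t)\|_{L^2(\Omega)}+\|\tilde h_2(t)\|_{L^2(\Gamma)}\le C(1+E_0e^{-\alpha t}+H)$, and then the partial elliptic regularity of Lemma~\ref{Cor1.H}, being linear in the $L^2$-norm of $(\tilde h_1,\tilde h_2)$, yields the same type of bound for $\|u(t)\|_{C^\alpha(\Omega)}+\|\Nx D_\tau u(t)\|_{L^2(\Omega)}+\|u(t)\|_{H^2(\Omega_\eb)}+\|u(t)\|_{H^2(\Gamma)}$, with $\eb$-dependent but $N$-independent constants. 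Squaring and using $(a+b)^2\le 2a^2+2b^2$ produces precisely the envelope $C(1+E_0)^2e^{-\beta t}+C(1+H)^2$ that appears on the right of \eqref{1.regular}.

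The remaining terms on the left of \eqref{1.regular} are supplied directly: $\|u(t)\|_{H^1(\Omega)}^2+\|u(t)\|_{H^1(\Gamma)}^2$ and $\int_t^{t+1}(\|\Dt u(s)\|_{H^{-1}(\Omega)}^2+\|\Dt u(s)\|_{L^2(\Gamma)}^2)\,ds$ come from the dissipative estimate \eqref{1.dis} (and are in fact linear in $E_0$, hence a fortiori under the envelope above); the stronger time-integral $\int_t^{t+1}(\|\Dt u(s)\|_{H^1(\Omega)}^2+\|\Dt u(s)\|_{H^1(\Gamma)}^2)\,ds$ comes from \eqref{1.dtest}; the pointwise bound on $\|\Dt u(t)\|_{H^{-1}(\Omega)}^2+\|\Dt u(t)\|_{L^2(\Gamma)}^2$ again from \eqref{1.dtest}; and the pointwise bound on $\|f_N(u(t))\|_{L^1(\Omega)}$ from Lemma~\ref{Lem1,add1}, i.e.\ \eqref{1.h1}, in terms of $1+\|\Dt u(t)\|_{H^{-1}(\Omega)}^2+\|\Dt u(t)\|_{L^2(\Gamma)}^2+H$ and hence of $1+E_0e^{-\alpha t}+H$. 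I would stress that the argument is not circular: the parabolic estimates of Lemmas~\ref{Lem1.sm} and~\ref{Lem1.dt}, together with \eqref{1.mumu}, control all the data of the elliptic problem \eqref{1.ell} \emph{before} any elliptic regularity is invoked, and Lemma~\ref{Cor1.H} is then applied exactly once. Collecting everything and enlarging the constant gives \eqref{1.regular}.

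The smoothing property \eqref{1.smo} is then immediate from the smoothing estimate \eqref{1.smdt} of Lemma~\ref{Lem1.dt}: taking square roots turns the $Ct^{-2}(\cdots)$ bound on $\|\Dt u(t)\|_{H^{-1}(\Omega)}^2+\|\Dt u(t)\|_{L^2(\Gamma)}^2$ into the $Ct^{-1}(\cdots)$ bound on $\|\Dt u(t)\|_{H^{-1}(\Omega)}+\|\Dt u(t)\|_{L^2(\Gamma)}$ claimed in \eqref{1.smo}, with the same $N$-independent constant (and, if wanted, a small-time bound on the $C^\alpha$, $H^2(\Omega_\eb)$ and $H^2(\Gamma)$ norms follows by feeding \eqref{1.smdt} through \eqref{1.good} and \eqref{1.max} exactly as above). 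The main obstacle is purely one of bookkeeping: one must check that the exponent $2$ and a slightly reduced rate $\beta<2\alpha$ propagate consistently through the chain \eqref{1.dtest}$\,\to\,$\eqref{1.good}$\,\to\,$\eqref{1.max}$\,\to\,$(squaring), and that every constant remains uniform in $N\to\infty$, which it does because each feeding lemma was proved with $N$-independent constants once $N\ge N_0(\lambda,c)$.
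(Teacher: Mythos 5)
Your proposal matches the paper's own (implicit) proof: the theorem is stated there as a summary of Lemmas \ref{Lem1.sm}--\ref{Cor1.H} together with \eqref{1.mu}--\eqref{1.good}, assembled in exactly the order you describe (parabolic bounds on $\Dt u$ first, then the elliptic data $\tilde h_1,\tilde h_2$, then Lemma \ref{Cor1.H}, then squaring), and your treatment of \eqref{1.smo} via square roots of \eqref{1.smdt} is also the intended one. One small correction: for the terms $\|u(t)\|^2_{H^1(\Omega)}+\|u(t)\|^2_{H^1(\Gamma)}$ you should invoke \eqref{1.h1} (as you already do for $\|f_N(u(t))\|_{L^1(\Omega)}$) rather than \eqref{1.dis}, since the right-hand side of \eqref{1.dis} contains $(F_N(u(0)),1)_\Omega$, which is not controlled by $\|u(0)\|^2_{H^1(\Omega)}$ uniformly in $N$ and hence cannot be absorbed into the envelope $C(1+E_0)^2e^{-\beta t}$ claimed in \eqref{1.regular}; routing these norms through \eqref{1.h1} and \eqref{1.dtest} removes the issue with no other change to your argument.
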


\begin{remark} \label{Rem1.reg} We thus have a uniform
$H^2$-estimate on the solution $u$ {\it inside} the domain and
an $L^2$-estimate on the gradient of the tangential derivatives
$\Nx D_\tau u$. In contrast to this, we do not have a uniform
control of the second normal derivative $\partial^2_n u$ close to the
boundary. Nevertheless, since the $L^1$-norm of the nonlinearity $f_N$
is controlled, \eqref{1.ell}, together with the
control of the tangential derivatives, allow us to estimate the
$L^1$-norm of $\partial_n^2 u$. We thus have the control
%$$
\begin{equation}\label{1.l1}
\|u(t)\|_{W^{2,1}(\Omega)}\le C(1+\|\tilde
h_1\|_{L^2(\Omega)}+\|\tilde h_2\|_{L^2(\Gamma)}).
\end{equation}
%$$
This, in turn, gives a control of the $L^1$-norm of the normal
derivative $\partial_n u$ at the boundary (owing to a proper trace
theorem),
%$$
\begin{equation}\label{1.trace}
\|\partial_n u(t)\|_{L^1(\Gamma)}\le C\|u(t)\|_{W^{2,1}(\Omega)}.
\end{equation}
%$$
As we will see in the next section, estimates \eqref{1.l1} and
\eqref{1.trace} remain true for the limit (as $N\to\infty$)
solution $u$ of the singular problem as well and, consequently,
the trace of $\partial_n u(t)$ at the boundary is
well-defined. However, owing to the nonreflexivity of
$L^1$-spaces, this trace {\it may not coincide}
 with the limit of $\partial_n u_N(t)\big|_{\Gamma}$ computed on the
 boundary by using the dynamic boundary condition \eqref{1.dyn}. Therefore,
 the boundary condition \eqref{1.dyn} may be violated
 for the limit singular solution. As we will see below, this indeed
 happens, even in the 1D case with smooth data. We overcome this
 difficulty by using monotonicity arguments and a proper variational
 formulation of problem \eqref{1.req}, see Section \ref{s2}.
\end{remark}

We conclude this section by establishing the standard uniform Lipschitz
continuity of the solution $u$ of problem \eqref{1.req} with
respect to the initial data.

\begin{proposition} Let the above assumptions hold and let
$u_1(t)$ and $u_2(t)$ be two (sufficiently regular) solutions of problem
\eqref{1.req} such that
$$
\<u_1(0)\>=\<u_2(0)\>=c.
$$
Then, the following estimate holds:
%$$
\begin{multline}\label{1.lip}
\|u_1(t)-u_2(t)\|_{H^{-1}(\Omega)}+\|u_1(t)-u_2(t)\|_{L^2(\Gamma)}\le\\\le
C(\|u_1(0)-u_2(0)\|_{H^{-1}(\Omega)}+\|u_1(0)-u_2(0)\|_{L^2(\Gamma)})e^{Kt},
\end{multline}
%$$
where the constants $C$ and $K$ are independent of $t$, $N$,
$u_1$ and $u_2$.
\end{proposition}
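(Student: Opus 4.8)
The plan is the standard energy method for difference estimates, adapted to the $H^{-1}$-type metric dictated by the mass-conservation structure. Set $v(t):=u_1(t)-u_2(t)$; by assumption $\<v(0)\>=0$, and since both solutions obey the mass conservation law \eqref{1.mass} with the same total mass $c$, we have $\<v(t)\>\equiv0$, so $(-\Dx)^{-1}v$ is well-defined. Subtracting the two copies of \eqref{1.req} gives
\begin{equation*}
(-\Dx)^{-1}\Dt v=\Dx v-(\tilde f_N(u_1)-\tilde f_N(u_2))+\<\mu_1-\mu_2\>,
\end{equation*}
together with the difference of the dynamic boundary conditions \eqref{1.dyn},
\begin{equation*}
\Dt v-\Delta_\Gamma v+\partial_n v+(g(u_1)-g(u_2))=0\quad\text{on }\Gamma.
\end{equation*}

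First I would multiply the interior equation by $v$ and integrate over $\Omega$; integration by parts, using the boundary identity to substitute $\partial_n v$ on $\Gamma$, produces
\begin{equation*}
\tfrac12\tfrac{d}{dt}\big(\|v\|^2_{H^{-1}(\Omega)}+\|v\|^2_{L^2(\Gamma)}\big)+\|\Nx v\|^2_{L^2(\Omega)}+\|\nabla_\Gamma v\|^2_{L^2(\Gamma)}+\big(\tilde f_N(u_1)-\tilde f_N(u_2),v\big)_\Omega+\big(g(u_1)-g(u_2),v\big)_\Gamma=0,
\end{equation*}
where the term $\<\mu_1-\mu_2\>(v,1)_\Omega$ drops out because $\<v\>=0$. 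Next I would exploit the sign structure: since $f_N$ is monotone (it is nondecreasing, being built from $f$ with $f'\ge0$ and linear continuation), $(f_N(u_1)-f_N(u_2),v)_\Omega\ge0$, so the genuinely singular nonlinearity contributes a favorable sign and can simply be discarded. What remains is the linear part $-\lambda\|v\|^2_{L^2(\Omega)}$ from $\tilde f_N=f_N-\lambda z$ and the boundary term; writing $g(z)=z+g_0(z)$ with $\|g_0\|_{C^2}\le C$, the boundary contribution is $\|v\|^2_{L^2(\Gamma)}+(g_0(u_1)-g_0(u_2),v)_\Gamma\ge-C\|v\|^2_{L^2(\Gamma)}$.

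The remaining obstacle is controlling $\lambda\|v\|^2_{L^2(\Omega)}$ by the dissipative terms, since the natural left-hand side only furnishes $\|\Nx v\|^2_{L^2(\Omega)}$ and $\|v\|^2_{H^{-1}(\Omega)}$, not $\|v\|^2_{L^2(\Omega)}$ directly. This is handled by the standard interpolation inequality for zero-mean functions,
\begin{equation*}
\|v\|^2_{L^2(\Omega)}\le C\|v\|_{H^{-1}(\Omega)}\|v\|_{H^1(\Omega)}\le \eta\|\Nx v\|^2_{L^2(\Omega)}+C_\eta\|v\|^2_{H^{-1}(\Omega)},
\end{equation*}
valid because $\<v\>=0$. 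Choosing $\eta$ small enough absorbs the bad term into $\|\Nx v\|^2_{L^2(\Omega)}$, at the cost of a large multiple of $\|v\|^2_{H^{-1}(\Omega)}$. Collecting everything yields a differential inequality of the form
\begin{equation*}
\tfrac{d}{dt}\big(\|v(t)\|^2_{H^{-1}(\Omega)}+\|v(t)\|^2_{L^2(\Gamma)}\big)\le 2K\big(\|v(t)\|^2_{H^{-1}(\Omega)}+\|v(t)\|^2_{L^2(\Gamma)}\big)
\end{equation*}
for a suitable constant $K$ depending only on $\lambda$, the constant $C$ bounding $g_0$, and the constant in the interpolation inequality (hence independent of $N$, $t$, $u_1$, $u_2$). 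Gronwall's inequality then gives $\|v(t)\|^2_{H^{-1}(\Omega)}+\|v(t)\|^2_{L^2(\Gamma)}\le e^{2Kt}(\|v(0)\|^2_{H^{-1}(\Omega)}+\|v(0)\|^2_{L^2(\Gamma)})$, and taking square roots and using $\sqrt{a+b}\le\sqrt a+\sqrt b$ produces \eqref{1.lip}. The one point requiring a little care is that all constants must be traced to be $N$-independent: this is automatic since monotonicity of $f_N$ is used only through its sign, and the $\lambda$-shift and the bound on $g_0$ are uniform in $N$ by construction.
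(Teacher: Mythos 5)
Your argument is correct and follows essentially the same route as the paper: subtract the two copies of \eqref{1.req}, test with $v=u_1-u_2$ (using $\<v\>=0$ to kill the $\<\mu_1-\mu_2\>$ term), exploit $\tilde f_N'\ge-\lambda$ (equivalently, monotonicity of $f_N$ plus the $\lambda$-shift) and the global bound on $g'$, then interpolate $\|v\|_{L^2(\Omega)}^2\le C\|v\|_{H^{-1}(\Omega)}\|v\|_{H^1(\Omega)}$ and apply Gronwall. The constants are $N$-independent for exactly the reason you note, so nothing further is needed.
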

\begin{proof} Let $v(t)=u_1(t)-u_2(t)$. Then, this
function solves
%$$
\begin{equation}\label{1.difeq}
\begin{cases}
(-\Dx)^{-1}\Dt v-\Dx v+[\tilde f_N(u_1)-\tilde f_N(u_2)]=\<\mu_1-\mu_2\>,\ \ \text{in
$\Omega$},\\
\Dt v-\Delta_\Gamma v+\partial_n v+[g(u_1)-g(u_2)]=0,\ \ \text{on
$\Gamma$}.
\end{cases}
\end{equation}
%$$
Taking the scalar product of the first equation with $v(t)$,
integrating by parts and using the facts that $\<v(t)\>=0$, $\tilde
f'_N(z)\ge -\lambda$ and the nonlinearity $g'$ is globally
bounded, we obtain
%$$
\begin{multline}\label{1.difdif}
\frac
d{dt}(\|v(t)\|^2_{H^{-1}(\Omega)}+\|v(t)\|_{L^2(\Gamma)}^2)+\\+\alpha(\|v(t)\|^2_{H^1(\Omega)}+\|v(t)\|^2_{H^1(\Gamma)})\le
|\lambda|\|v(t)\|^2_{L^2(\Omega)}+C\|v(t)\|^2_{L^2(\Gamma)}
\end{multline}
%$$
for some positive constants $\alpha$ and $C$ which are independent of $N$.
Interpolating between $H^{-1}$ and $H^1$ in order to
estimate the $L^2$-norm of $v$ in $\Omega$ and applying the Gronwall inequality, we
find \eqref{1.lip} and finish the proof of the proposition.
\end{proof}

\section{The singular problem: variational
formulation and well-posedness}\label{s2}
The aim of this section is to pass to the limit $N\to\infty$ in
\eqref{1.req} and prove the existence and uniqueness of
solutions of the limit singular problem \eqref{1.main}. As
already mentioned, this limit solution is not necessarily a usual
distribution solution of the equations and we need to define it
in a proper way. To this end, we first fix a constant $L>0$ such
that
%$$
\begin{equation}\label{2.l}
\|\Nx u\|^2_{L^2(\Omega)}-\lambda \|u\|^2_{L^2(\Omega)}+L\|u\|_{H^{-1}(\Omega)}^2\ge
1/2\|u\|_{H^1(\Omega)}^2
\end{equation}
%$$
for all $u\in H^1(\Omega)$ with $\<u\>=0$ and introduce the
quadratic form
%$$
\begin{equation}\label{2.b}
B(u,v):=(\Nx u,\Nx
v)_{\Omega}-\lambda(u,v)_{\Omega}+L((-\Dx)^{-1}\bar u,\bar
v)_{\Omega}+(\nabla_\Gamma u,\nabla_\Gamma v)_{\Gamma},
\end{equation}
%$$
where $u,v\in H^1(\Omega)\cap H^1(\Gamma):=\{u,\ u\in H^1(\Omega ),\
u|_\Gamma \in H^1(\Gamma )\}$ (and a similar definition holds for
similar spaces) and $\bar u:=u-\<u\>$,
$\bar v=v-\<\bar v\>$. Then, obviously,
%$$
\begin{equation}\label{2.positive}
B(u,\bar u)=B(\bar u,\bar u)\ge0
\end{equation}
%$$
for all $u\in H^1(\Omega)\cap H^1(\Gamma)$.

The limit problem \eqref{1.req}, corresponding
to $N=\infty$, reads, {\it formally},
%$$
\begin{equation}\label{2.reg}
\begin{cases}
(-\Dx)^{-1}\Dt u=\Dx u-f(u)+\lambda u+\<\mu\>-h_1,\ \text{ in $\Omega$},\\
\mu:=-\Dx u+f(u)-\lambda u+h_1,\ \ u\big|_{\Gamma }=\psi,\\
\Dt \psi-\Delta_\Gamma \psi+g(\psi)+\partial_n u=h_2,\ \ \text{on $\Gamma$},\\
u\big|_{t=0}=u_0,\ \ \psi\big|_{t=0}=\psi_0.
\end{cases}
\end{equation}
%$$
We test the first equation (again formally) with the function $u-v$, where
$v=v(t,x)$ is smooth and satisfies
$$
\<u(t)-v(t)\>\equiv0.
$$
Then, after an integration by parts, we have
%$$
\begin{multline*}
(A\Dt u,u-v)_\Omega+(\Dt
u,u-v)_\Gamma+B(u,u-v)+(f(u),u-v)_{\Omega}=\\=L(Au,u-v)_\Omega-(g(u),u-v)_\Gamma-(h_1,u-v)_\Omega+(h_2,u-v)_\Gamma.
\end{multline*}
%$$
%Integrating this identity with respect to time, we find
%$$
%\begin{multline*}
%\frac12(\|\bar u(t)\|^2_{H^{-1}(\Omega)}-\|\bar
%u(s)\|^2_{H^{-1}(\Omega)}+\|u(t)\|_{L^2(\Gamma)}^2-\|u(s)\|_{L^2(\Gamma)}^2)
%+\\+\int_s^t(B(u,u-v)+(f(u),u-v)_{\Omega})\,d\tau =\\=\int_s^t (L(u,A(u-v))_\Omega-(g(u),u-v)_\Gamma+\\+
%(u,A\Dt v)_{\Omega}+(u,\Dt v)_{\Gamma}+(h_1,u-v)_\Omega+(h_2,u-v)_\Gamma)\,d\tau .
%\end{multline*}
%$$
Finally, since $B$ is positive and $f$ is monotone, we have
$$
B(u,u-v)\ge B(v,u-v),\ \ \ (f(u),u-v)_\Omega\ge(f(v),u-v)_{\Omega},
$$
which yields
%$$
\begin{multline*}
(A\Dt u,u-v)_\Omega+(\Dt
u,u-v)_\Gamma+B(v,u-v)+(f(v),u-v)_{\Omega}\le\\\le L(Au,u-v)_\Omega-(g(u),u-v)_\Gamma-(h_1,u-v)_\Omega+(h_2,u-v)_\Gamma.
\end{multline*}
%$$
%$$
%\begin{multline}\label{2.varmain}
%\frac12(\|\bar u(t)\|^2_{H^{-1}(\Omega)}-\|\bar
%u(s)\|^2_{H^{-1}(\Omega)}+\|u(t)\|_{L^2(\Gamma)}^2-\|u(s)\|_{L^2(\Gamma)}^2)
%+\\+\int_s^t(B(v,u-v)+(f(v),u-v)_{\Omega})\,d\tau \le\\\le\int_s^t (L(u,A(u-v))_\Omega-(g(u),u-v)_\Gamma+
%(u,A\Dt v)_{\Omega}+\\+(u,\Dt v)_{\Gamma}+(h_1,u-v)_\Omega+(h_2,u-v)_\Gamma)\,d\tau .
%\end{multline}
%$$
We recall that this inequality holds (again formally) for any
properly chosen test function $v$ such that $\<v(\tau)\>\equiv c$.
We are now ready to define a variational
solution of the limit problem \eqref{2.reg}.

\begin{definition}\label{Def2.var} Let
%$$
\begin{equation}\label{2.phsp}
(u_0,\psi_0)\in\Phi:=\{(u,\psi)\in L^\infty(\Omega)\cap
L^\infty(\Gamma),\ \ \|u\|_{L^\infty(\Omega)}\le 1,\ \
\|\psi\|_{L^\infty(\Gamma)}\le 1\}.
\end{equation}
%$$
A pair of functions $(u,\psi)$, $u=u(t,x)$, $x\in\Omega$,
$\psi=\psi(t,x)$, $x\in\Gamma$,
is a
variational solution of problem \eqref{2.reg} if
%$$
\begin{equation}\label{2.bb}
u(t)\big|_{\Gamma}=\psi(t)\ \ \text{for almost all $t>0$},\
u(0)=u_0,\ \ \psi(0)=\psi_0,
\end{equation}
%$$

%$$
\begin{equation}\label{2.bound}
\begin{aligned}
&1)\ \ -1<u(t,x)<1\ \ \text{for almost all
$(t,x)\in\R^+\times\Omega$},\\
&2)\ \ (u,\psi)\in C([0,\infty),H^{-1}(\Omega)\times L^2(\Gamma))\cap
L^2([0,T],H^1(\Omega)\times
H^1(\Gamma)),\  \forall T>0,\\
&3)\ \ f(u)\in L^1([0,T]\times\Omega),\ (\Dt u,\Dt\psi)\in
L^2([\tau,T], H^{-1}(\Omega)\times L^2(\Gamma)), \  \forall
T>\tau >0,
\end{aligned}
\end{equation}
%$$
$\<u(t)\>\equiv \<u(0)\>$ and the variational inequality
%$$
\begin{multline}\label{2.varmain}
(A\Dt u(t),u(t)-w)_\Omega+(\Dt
u(t),u(t)-w)_\Gamma+B(v,u(t)-w)+(f(w),u(t)-w)_{\Omega}\le\\\le L(Au(t),u(t)-w)_\Omega-
(g(u(t)),u(t)-w)_\Gamma-(h_1,u(t)-w)_\Omega+(h_2,u(t)-w)_\Gamma
\end{multline}
%$$ 
is satisfied for almost every $t>0$ and every test function $w=w(x)$ such that
$$
w\in H^1(\Omega)\cap H^1(\Gamma),\ \ f(w)\in L^1(\Omega)
$$
and $\<w\>=\<u(0)\>$.  Note that the
relation $u(t)\big|_{\Gamma}=\psi(t)$ is assumed to hold {\it
only} for $t>0$. At the initial time $t=0$, no relation between
$u_0$ and $\psi_0$ is assumed. However, for $t>0$ the function
$\psi$ can be found if $u$ is known. Therefore, we can indeed write
the variational inequality \eqref{2.varmain} in terms of the
function $u$ only.
\end{definition}
Before studying the existence and uniqueness of variational solutions, it is convenient
to rewrite the variational inequality in terms of test functions $v=v(t,x)$ depending on $t$ and $x$.
More precisely, let the test function $v$ satisfy the regularity assumptions \eqref{2.bound} and $\<v(t)\>\equiv\<u_0\>=c$
(we will call this class of functions {\it admissible test functions} below).
Then, we can write inequality \eqref{2.varmain} with $w=v(t)$ for almost all $t>0$. Moreover, due to the regularity
assumptions \eqref{2.bound} on $u$ and $v$, we see that all terms obtained are in $L^1$  with respect to $t$. Thus, we can
integrate this inequality with respect to $t$, which gives 
%$$
\begin{multline}\label{2.varmain1}
\int_s^t[(A\Dt u,u-v)_\Omega+(\Dt u,u-v)_\Gamma]d\,\tau
+\\+\int_s^t[B(v,u-v)+(f(v),u-v)_{\Omega}]\,d\tau \le\\\le\int_s^t [L(u,A(u-v))_\Omega-(g(u),u-v)_\Gamma-(h_1,u-v)_\Omega+(h_2,u-v)_\Gamma]\,d\tau 
\end{multline}
%$$
for all $t>s>0$.

The next theorem gives the uniqueness of such variational
solutions.
\begin{theorem}\label{Th2.unique} Let the nonlinearities $f$ and $g$
and the external forces $h_1$ and $h_2$ satisfy the assumptions of
Section \ref{s1}. Then, the variational solution of problem
\eqref{2.reg} (in the sense of Definition \ref{Def2.var}) is
unique and is independent of the choice of $L$ satisfying
\eqref{2.l}. Furthermore, for every two variational solutions $u_1$
and $u_2$ such that $\<u_1(0)\>=\<u_2(0)\>$, the following estimate holds:
%$$
\begin{multline}\label{2.lip}
\|u_1(t)-u_2(t)\|_{H^{-1}(\Omega)}+\|\psi_1(t)-\psi_2(t)\|_{
L^2(\Gamma)}\le\\\le
Ce^{Kt}(\|u_1(0)-u_2(0)\|_{H^{-1}(\Omega)}+\|\psi_1(0)-\psi_2(0)\|_{
L^2(\Gamma)}),
\end{multline}
%$$
where the constants $C$ and $K$ are independent of $t$, $u_1$ and
$u_2$.
\end{theorem}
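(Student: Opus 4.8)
The natural strategy is the classical ``plug each solution's slice into the other's inequality'' trick adapted to the variational inequality \eqref{2.varmain1}. Let $u_1,u_2$ be two variational solutions with $\langle u_1(0)\rangle=\langle u_2(0)\rangle=c$, and set $v=u_1(t)-u_2(t)$, $\psi_i=u_i|_\Gamma$. Since both solutions satisfy the regularity \eqref{2.bound}, in particular $f(u_i)\in L^1([0,T]\times\Omega)$ and $u_i(t)\in H^1(\Omega)\cap H^1(\Gamma)$ with $\langle u_i(t)\rangle\equiv c$, each $u_i(t)$ is itself an admissible test function for the other inequality. So the plan is: write \eqref{2.varmain1} for $u_1$ with test function $v(\tau)=u_2(\tau)$, write \eqref{2.varmain1} for $u_2$ with test function $v(\tau)=u_1(\tau)$, and add the two. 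On the left we get, among other terms, $B(u_2,u_1-u_2)+B(u_1,u_2-u_1)=-B(u_1-u_2,u_1-u_2)\le -\tfrac12\|v\|_{H^1(\Omega)}^2-\|\nabla_\Gamma v\|^2_{L^2(\Gamma)}$ (up to the $L$-term, using \eqref{2.l}), and $(f(u_2),u_1-u_2)_\Omega+(f(u_1),u_2-u_1)_\Omega=-(f(u_1)-f(u_2),u_1-u_2)_\Omega\le0$ by monotonicity of $f$. The derivative terms combine to $-\int_s^t[(A\Dt v,v)_\Omega+(\Dt v,v)_\Gamma]\,d\tau=-\tfrac12[\|v(t)\|^2_{H^{-1}(\Omega)}+\|v(t)\|^2_{L^2(\Gamma)}]+\tfrac12[\|v(s)\|^2_{H^{-1}(\Omega)}+\|v(s)\|^2_{L^2(\Gamma)}]$, which is legitimate because $\Dt v\in L^2([\tau,T],H^{-1}(\Omega)\times L^2(\Gamma))$ and $v\in C([0,\infty),H^{-1}(\Omega)\times L^2(\Gamma))$, so $t\mapsto\|v(t)\|^2_{H^{-1}}+\|v(t)\|^2_{L^2(\Gamma)}$ is absolutely continuous on $[\tau,T]$.

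Collecting these, one arrives at a differential (integrated) inequality of the form
\begin{multline*}
\tfrac12[\|v(t)\|^2_{H^{-1}(\Omega)}+\|v(t)\|^2_{L^2(\Gamma)}]+\tfrac12\int_s^t[\|v\|^2_{H^1(\Omega)}+\|v\|^2_{H^1(\Gamma)}]\,d\tau\le\\
\le\tfrac12[\|v(s)\|^2_{H^{-1}(\Omega)}+\|v(s)\|^2_{L^2(\Gamma)}]+\int_s^t[L(Av,v)_\Omega-(g(u_1)-g(u_2),v)_\Gamma]\,d\tau,
\end{multline*}
where I have kept the $L$-term on the right and used that $\langle v\rangle\equiv0$ so $(Av,v)_\Omega=\|v\|^2_{H^{-1}(\Omega)}$. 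The term $(g(u_1)-g(u_2),v)_\Gamma$ is bounded by $C\|v\|^2_{L^2(\Gamma)}$ since $g'\in C(\R)$ is bounded on the range of the solutions (recall $g(z)=z+g_0(z)$ with $\|g_0\|_{C^2}\le C$ and $\psi_i$ take values in $[-1,1]$). The $H^1$ dissipation on the left more than absorbs any need to interpolate; in particular we simply drop it. What remains is
$$
y(t)\le y(s)+K\int_s^t y(\tau)\,d\tau,\qquad y(\tau):=\|v(\tau)\|^2_{H^{-1}(\Omega)}+\|v(\tau)\|^2_{L^2(\Gamma)},
$$
for all $t>s>0$, with $K$ depending only on $L$ and $\sup|g'|$. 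Gronwall then gives $y(t)\le y(s)e^{K(t-s)}$ for $0<s<t$; letting $s\to0^+$ and using the continuity $v\in C([0,\infty),H^{-1}(\Omega)\times L^2(\Gamma))$ together with the prescribed initial data $u(0)=u_0$, $\psi(0)=\psi_0$ (so $y(s)\to\|u_1(0)-u_2(0)\|^2_{H^{-1}}+\|\psi_1(0)-\psi_2(0)\|^2_{L^2(\Gamma)}$) yields \eqref{2.lip}, with $C=1$ after taking square roots and adjusting $K$. Uniqueness of the variational solution for fixed initial data is the special case $u_1(0)=u_2(0)$, $\psi_1(0)=\psi_2(0)$. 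Independence of the choice of $L$ follows because, once uniqueness is known, a solution associated with $L$ is also seen (by the same monotonicity manipulation in reverse, or by noting that the variational inequality for a given pair $(u,\psi)$ with larger $L'>L$ follows from that with $L$ after adding $(L'-L)(Au,u-w)_\Omega\ge\ldots$) to satisfy the inequality with any admissible $L'$; alternatively one simply observes the constructed limit solution of Section 3 will be shown independent of $L$, and uniqueness pins it down.

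The main obstacle is not the Gronwall estimate itself but the justification of using $u_2(\tau)$ as a test function in $u_1$'s inequality \eqref{2.varmain1}, i.e.\ checking admissibility: one needs $u_i(\cdot)$ to satisfy the full regularity \eqref{2.bound} with $\langle u_i(t)\rangle\equiv c$ and $f(u_i)\in L^1([s,t]\times\Omega)$ — all of which are built into Definition \ref{Def2.var} — and, more delicately, that \emph{all terms} appearing after the substitution are genuinely in $L^1$ in $\tau$ so that the integrated inequality \eqref{2.varmain1} is valid with this particular time-dependent test function. The paper has already flagged this point (the paragraph preceding \eqref{2.varmain1}), so the cross term $(f(v(\tau)),u(\tau)-v(\tau))_\Omega=(f(u_2),u_1-u_2)_\Omega$ being in $L^1_t$ is exactly the content of $f(u_i)\in L^1([0,T]\times\Omega)$ combined with $\|u_i\|_{L^\infty}\le1$; similarly $B(v,u-v)$ is in $L^1_t$ because $v,u\in L^2([0,T],H^1(\Omega)\cap H^1(\Gamma))$ and $(Av,v)_\Omega$ is controlled by the $H^{-1}$-continuity. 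The second, more technical, point is the passage $s\to0^+$: strictly speaking one first proves the estimate on $[s,t]$ for every $s>0$, and only then uses the continuity of $(u_i,\psi_i)$ up to $t=0$ in $H^{-1}(\Omega)\times L^2(\Gamma)$ — guaranteed by \eqref{2.bound}(2) — to reach the stated bound with the initial data. I would present the argument in that order: (i) admissibility of the mutual test functions; (ii) the symmetrized inequality and identification of the monotone/coercive terms; (iii) Gronwall on $[s,t]$; (iv) the limit $s\to0^+$; (v) the uniqueness and $L$-independence corollaries.
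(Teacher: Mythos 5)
There is a genuine gap here, and it is a sign issue at the very heart of the argument. In Definition \ref{Def2.var} the inequality \eqref{2.varmain1} carries the terms $B(v,u-v)$ and $(f(v),u-v)_\Omega$ with the nonlinearity evaluated at the \emph{test function}, not at the solution. If you write \eqref{2.varmain1} for $u_1$ with test function $u_2$, then for $u_2$ with test function $u_1$, and add, the monotone terms combine on the left-hand side to
\[
-B(u_1-u_2,u_1-u_2)-(f(u_1)-f(u_2),u_1-u_2)_\Omega\;\le\;0 ,
\]
i.e.\ with the \emph{unfavorable} sign: a nonpositive term on the small side of a ``$\le$'' cannot be discarded. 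Moving it to the right produces $+\int B(v,v)\,d\tau+\int(f(u_1)-f(u_2),v)_\Omega\,d\tau$, which is nonnegative and, for the singular $f$, not controllable by the $\Phi^w$-norm, so no Gronwall inequality results. Your displayed intermediate estimate, in which the $H^1$-dissipation appears with a $+$ sign on the left, therefore does not follow from the summation you describe (note also that your formula for the combined time-derivative terms carries a spurious overall minus; the correct combination is $+\tfrac12[y(t)-y(s)]$). The symmetrization you propose is the one that works for the ``strong'' form of a parabolic variational inequality, but not for the weak form adopted in Definition \ref{Def2.var}.

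The missing idea --- and the actual content of the paper's proof --- is a Minty-type upgrade: one first derives from \eqref{2.varmain1} the additional inequality \eqref{2.varadd}, in which $B$ and $f$ are evaluated at the solution $u$ itself. This is done by inserting the convex combination $v_\alpha=(1-\alpha)u+\alpha w$ as a test function (admissible because $|f|$ is convex by \eqref{A.2}(4), so $|f(v_\alpha)|\le|f(u)|+|f(w)|$), dividing by $\alpha$ and letting $\alpha\to0$ via dominated convergence. Pairing \eqref{2.varmain1} for $u_1$ (test function $u_2$) with \eqref{2.varadd} for $u_2$ (test function $u_1$) makes the $B$- and $f$-terms cancel \emph{exactly} --- both are then evaluated at $u_2$ --- and one lands directly on \eqref{2.gr}. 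From that point on, your Gronwall step, the passage $s\to0^+$ using the continuity in \eqref{2.bound}(2), and the discussion of $L$-independence are essentially what the paper does. But without the intermediate inequality \eqref{2.varadd} the uniqueness proof does not close.
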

\begin{proof} We first need to deduce one more variational
inequality for a solution $u$. Let $w$ be a test
function satisfying the assumptions of Definition \ref{Def2.var}
and set
$$
v_\alpha:=(1-\alpha)u+\alpha w,\ \ \alpha\in[0,1].
$$
Then, owing to assumption \eqref{A.2}(4), the function $|f(u)|$ is
convex and, therefore,
$$
|f(v_\alpha)|\le |f(u)|+|f(w)|.
$$
Consequently, $v_\alpha$ is an admissible test function for
every $\alpha\in[0,1]$. Inserting $v=v_\alpha$ in the variational
inequality \eqref{2.varmain1}, dividing it by $\alpha$ and using
the fact that $u\in AC([s,t], H^{-1}(\Omega)\cap
L^2(\Omega))$ (here, $AC$ stands for absolutely continuous), we see that
%$$
\begin{multline}\label{2.var1}
\int_s^t[(A\Dt u,u-w)_\Omega+(\Dt u,u-w)_{\Gamma}]\,d\tau +\\+\int_s^t[B(v_\alpha,u-w)+(f(v_\alpha),u-w)_{\Omega}]\,d\tau \le\\
\le\int_s^t [L(u,A(u-w))_\Omega-(g(u),u-w)_\Gamma-(h_1,u-w)_\Omega+(h_2,u-w)_\Gamma]\,d\tau .
\end{multline}
%$$
Passing to the limit $\alpha\to0$ in \eqref{2.var1} and using the
Lebesgue dominated convergence theorem for the nonlinear term,
we end up with the desired additional vartiational inequality, namely,
%$$
\begin{multline}\label{2.varadd}
\int_s^t[(A\Dt u,u-w)_\Omega+(\Dt u,u-w)_{\Gamma}]\,d\tau
+\\+\int_s^t[B(u,u-w)+(f(u),u-w)_{\Omega}]\,d\tau \le\\\le\int_s^t [L(u,A(u-w))_\Omega-(g(u),u-w)_\Gamma-(h_1,u-w)_\Omega+(h_2,u-w)_\Gamma]\,d\tau ,
\end{multline}
%$$
where $w=w(t,x)$ is an arbitrary admissible test function.
\par
We are now ready to prove the uniqueness. Let $u_1$ and
$u_2$ be two variational solutions of problem \eqref{2.reg}. We
consider the variational inequality \eqref{2.varmain1} with
$u=u_1$ and $v=u_2$, together with the additional variational
inequality \eqref{2.varadd} with $u=u_2$ and $w=u_1$ (this makes
sense, since $u_1$ and $u_2$ are admissible test functions), and
sum the two resulting inequalities. Then, the terms containing $B$,
$f$, $h_1$ and $h_2$ vanish and, using, in addition, the fact that
$u_i\in AC([s,t], H^{-1}(\Omega)\cap L^2(\Gamma))$, $i=1,2$, we end up with
the following inequality:
%$$
\begin{multline}\label{2.gr}
\frac12(\|u_1(t)-u_2(t)\|^2_{H^{-1}(\Omega)\cap
L^2(\Gamma)}-\|u_1(s)-u_2(s)\|^2_{H^{-1}(\Omega)\cap
L^2(\Gamma)})\le\\\le
\int_s^t[L\|u_1(\tau)-u_2(\tau)\|^2_{H^{-1}(\Omega)}-(g(u_1(\tau))-g(u_2(\tau)),u_1(\tau)-u_2(\tau))_{\Gamma}]\,d\tau,
\end{multline}
%$$
where $\|u\|^2_{H^{-1}(\Omega )\cap L^2(\Gamma )}
=\|u\|^2_{H^{-1}(\Omega )}+\|u\|^2_{L^2(\Gamma )}$. Using now the fact that $g\in C^1([-1,1])$ and applying the
Gronwall inequality to \eqref{2.gr}, we see that
$$
\|u_1(t)-u_2(t)\|^2_{H^{-1}(\Omega)\cap L^2(\Gamma)}\le
Ce^{K(t-s)}\|u_1(s)-u_2(s)\|^2_{H^{-1}(\Omega)\cap L^2(\Gamma)}
$$
for some positive constants $C$ and $K$ which are independent of
$t>s>0$ and $u_i$, $i=1,2$. Passing to the limit $s\to0$ in this estimate
and using the continuity \eqref{2.bound}(2) of $u_1$ and
$u_2$, we deduce the desired estimate \eqref{2.lip} which, in
particular, gives the uniqueness.
\par
Thus, we only need to prove that the above definition of a
solution is independent of the choice of $L$. To this end, we
assume that $u_1$ is a variational solution for $L=L_1$ and $u_2$
is a variational solution for
$L=L_2$. Let also $u_1(0)=u_2(0)$. Using then the obvious relation
%$$
\begin{multline*}
B_{L_1}(v,u_1-v)-L_1(u_1,A(u_1-v))_\Omega=\\=B_{L_2}(v,u_1-v)-L_2(u_1,A(u_1-v))_{\Omega}-(L_1-L_2)\|u_1-v\|_{H^{-1}(\Omega)}^2
\end{multline*}
%$$
and arguing exactly as in the proof of \eqref{2.lip}, we have
%$$
\begin{multline}\label{2.gr1}
\frac12(\|u_1(t)-u_2(t)\|^2_{H^{-1}(\Omega)\cap
L^2(\Gamma)}-\|u_1(s)-u_2(s)\|^2_{H^{-1}(\Omega)\cap
L^2(\Gamma)})\le\\\le
\int_s^t[L_1\|u_1(\tau)-u_2(\tau)\|^2_{H^{-1}(\Omega)}-(g(u_1(\tau))-g(u_2(\tau)),u_1(\tau)-u_2(\tau))_{\Gamma}]\,d\tau,
\end{multline}
%$$
which coincides with \eqref{2.gr} and, therefore, also leads to estimate
\eqref{2.lip}. Thus, $u_1\equiv u_2$ and Theorem \ref{Th2.unique}
is proved.
\end{proof}

We are now able to prove the existence of a variational solution
$u$ of problem \eqref{2.reg} by passing to the limit $N\to\infty$ in equations
\eqref{1.req}.

\begin{theorem}\label{Th2.main} Let the  assumptions of the previous theorem hold.
Then, for every pair $(u_0,\psi_0)\in\Phi$,
problem \eqref{2.reg} possesses a unique variational
solution $(u,\psi )$ in the sense of Definition \ref{Def2.var}. Furthermore,
this solution regularizes as $t>0$ and all the uniform estimates obtained in Section \ref{s1} hold
for the solutions of the the limit singular equation \eqref{2.reg}. In particular, the following
estimate is valid for every $\eb>0$:
%$$
\begin{multline}\label{2.regular}
\|u(t)\|_{C^\alpha(\Omega)}^2+\|u(t)\|^2_{H^2(\Gamma)}+\\+
\|u(t)\|_{H^2(\Omega_\eb)}^2+\|u(t)\|^2_{H^1(\Omega)}+
\|\Dt u(t)\|^2_{H^{-1}(\Omega)}+\|\Dt
u(t)\|^2_{L^2(\Gamma)}+\\+
\|\Nx D_\tau
u(t)\|^2_{L^2(\Omega)}+\|f(u(t))\|_{L^1(\Omega)}+\int_t^{t+1}(\|\Dt
u(s)\|^2_{H^1(\Omega)}+\|\Dt u(s)\|^2_{H^1(\Gamma)})\,ds\le\\
\le C\frac{t^4+1}{t^4}(1+\|h_1\|^2_{L^2(\Omega)}+\|h_2\|^2_{L^2(\Gamma)})^2,\ \ t>0,
\end{multline}
%$$
for some positive constants $\alpha$ and $C$ which are independent
of $t$ and $u$ (we recall that $\Omega_\eb:=\{x\in\Omega,\
d(x,\Gamma)>\eb\}$), where $D_\tau u$ denotes the tangential part of
$\nabla _xu$ (see Lemma 2.6); in addition, all
norms in the
left-hand side of \eqref{2.regular} make sense for any variational
solution $u$.
\end{theorem}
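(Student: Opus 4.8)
The plan is to prove Theorem~\ref{Th2.main} by passing to the limit $N\to\infty$ in the regularized problems \eqref{1.regmain}, \eqref{1.dyn}. The essential tool is the set of $N$-uniform a priori estimates gathered in Theorem~\ref{Th1.mainest}, together with the uniform Lipschitz estimate \eqref{1.lip}. First I would fix $(u_0,\psi_0)\in\Phi$ and approximate it by smooth data $(u_0^N,\psi_0^N)$ with $\langle u_0^N\rangle=c\in(-1,1)$, $|u_0^N|\le 1-1/N$ on $\Omega$, $\psi_0^N=u_0^N|_\Gamma$, and convergence in $H^{-1}(\Omega)\times L^2(\Gamma)$; one must also arrange that the initial energy in \eqref{1.en-est} stays bounded, using $(F_N(u_0^N),1)_\Omega\le C$ (this is where one exploits that $|u_0^N|\le 1-1/N$ makes $F_N(u_0^N)=F(u_0^N)$ pointwise and the at-most-logarithmic nature of the estimate in Corollary~\ref{Cor1.en}; for data reaching $\pm1$ on a large set one instead discards the initial energy and uses only the smoothing estimates \eqref{1.smoothing}, \eqref{1.smdt} for $t>0$). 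The corresponding regular problems \eqref{1.regmain}, \eqref{1.dyn} have unique regular solutions $(u_N,\mu_N)$ by the cited references \cite{Gal}, \cite{GMS}, \cite{MZ2}, \cite{PRZ}, \cite{RZ}.

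Next I would extract the limit. By Theorem~\ref{Th1.mainest}, for every $\tau>0$ the family $u_N$ is bounded in $L^\infty([\tau,T];H^1(\Omega)\cap H^1(\Gamma))\cap L^2([\tau,T];\,\cdot\,)$ with $\Dt u_N$ bounded in $L^2([\tau,T];H^1(\Omega)\times H^1(\Gamma))$ and in $L^\infty([\tau,T];H^{-1}(\Omega)\times L^2(\Gamma))$, while $f_N(u_N)$ is bounded in $L^1([\tau,T]\times\Omega)$ and $u_N$ is bounded in $L^2([0,T];H^1)$ up to $t=0$ via \eqref{1.en-est}. Aubin--Lions and a diagonal argument over $\tau\downarrow 0$, $T\uparrow\infty$ give a subsequence and a limit $(u,\psi)$ with $u_N\to u$ strongly in $C([0,T];H^{-1}(\Omega))\cap C([\tau,T];L^2(\Omega))$ and $u_N|_\Gamma\to\psi$ strongly in $C([0,T];L^2(\Gamma))\cap C([\tau,T];L^2(\Gamma))$, $\Dt u_N\rightharpoonup\Dt u$ weakly, and (after a further subsequence) $u_N\to u$ a.e.; the a.e. limit plus $|u_N|\le 1$ gives $|u(t,x)|\le 1$, and the bound $(F_N(u_N),1)_\Omega\le C$ together with $F_N\ge F$ and $F\to\infty$ near $\pm1$ upgrades this to the strict inequality \eqref{2.bound}(1) a.e. in $\R^+\times\Omega$, while Fatou on $f_N(u_N)$ (using $|f_N|\nearrow|f|$ on $(-1,1)$ and monotonicity) yields $f(u)\in L^1_{loc}$. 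This establishes the regularity \eqref{2.bound} and the boundary/initial identities \eqref{2.bb}; the uniform estimates of Section~\ref{s1} pass to $u$ by weak/weak-$*$ lower semicontinuity, which in particular gives \eqref{2.regular} after inserting the smoothing bounds for $t\in(0,1]$ and the dissipative bounds for $t\ge1$.

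Then I would verify the variational inequality \eqref{2.varmain}. Starting from the regular equation \eqref{1.req} for $u_N$, testing with $u_N-w$ where $w$ is any admissible test function ($w\in H^1(\Omega)\cap H^1(\Gamma)$, $f(w)\in L^1(\Omega)$, $\langle w\rangle=c$), using $B$-positivity \eqref{2.positive} and the monotonicity inequality $(f_N(u_N),u_N-w)_\Omega\ge(f_N(w),u_N-w)_\Omega$, integrating in $t$ over $[s,t]$, yields the $N$-level analogue of \eqref{2.varmain1} with $f_N$ in place of $f$. Now pass to the limit: the terms linear in $u_N$ (the $A\Dt u_N$, $\Dt u_N$, $B(w,\cdot)$, $L(u_N,A\,\cdot)$, $h_1$, $h_2$ terms) converge by the weak/strong convergences above; the term $(g(u_N),u_N-w)_\Gamma$ converges by strong $L^2(\Gamma)$ convergence of the traces and Lipschitz continuity of $g$ on $[-1,1]$; and $(f_N(w),u_N-w)_\Omega\to(f(w),u-w)_\Omega$ since $f_N(w)\to f(w)$ in $L^1(\Omega)$ (eventually $f_N(w)=f(w)$ where $|w|\le 1-1/N$, and dominated convergence handles the rest because $|f_N(w)|\le|f(w)|\in L^1$). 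This gives \eqref{2.varmain1} for $u$, and then \eqref{2.varmain} a.e. in $t$ by a Lebesgue-point argument. Uniqueness of the variational solution is Theorem~\ref{Th2.unique}, so the whole sequence (not just a subsequence) converges and $(u,\psi)$ is the unique variational solution.

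The main obstacle is the nonlinear term: because $f_N(u_N)$ is controlled only in $L^1$, one cannot identify its weak limit with $f(u)$, which is precisely why the monotone-operator trick is used --- one tests the inequality against $f_N(w)$ of a fixed admissible $w$, never against $f_N(u_N)$, so only $f_N(w)\to f(w)$ in $L^1$ is needed, not any convergence of $f_N(u_N)$. A secondary delicate point is the treatment of the initial time: the identity $u(t)|_\Gamma=\psi(t)$ and the energy bounds are only claimed for $t>0$, so one must separate the estimates valid up to $t=0$ (only the $H^{-1}(\Omega)\times L^2(\Gamma)$ continuity, via \eqref{1.lip}) from the regularizing estimates \eqref{1.smoothing}, \eqref{1.smdt}, \eqref{1.smo} valid for $t>0$, and make sure the limit function attains $(u_0,\psi_0)$ in the weak topology at $t=0$ while being smooth for $t>0$; here the strong $C([0,T];H^{-1}(\Omega))$ convergence obtained from Aubin--Lions (which needs only the $L^2([0,T];H^1)$ bound \eqref{1.en-est}, itself requiring the uniform initial-energy bound) does the job.
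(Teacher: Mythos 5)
Your overall strategy is the paper's: solve the regularized problems \eqref{1.regmain}, \eqref{1.dyn}, use the $N$-uniform estimates of Theorem \ref{Th1.mainest} to extract a limit, and pass to the limit in the variational inequality by testing against $f_N(w)$ for a fixed admissible $w$ (so that only $f_N(w)\to f(w)$ in $L^1(\Omega)$ is needed, never any convergence of $f_N(u_N)$); uniqueness then comes from Theorem \ref{Th2.unique}. Two steps, however, do not close as written. First, your verification of \eqref{2.bound}(1) rests on ``$F\to\infty$ near $\pm1$'', which is \emph{not} among the standing assumptions \eqref{A.2}: only $f\to\pm\infty$ is assumed, and for the physically relevant logarithmic potential $F$ is bounded (this is precisely the case the paper cares most about; $\lim_{u\to\pm1}F(u)=\infty$ is the \emph{extra} hypothesis \eqref{3.fsing} of Corollary \ref{Cor3.strong}). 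Also $F_N\le F$ on $(-1,1)$, not $F_N\ge F$. The correct argument, used in the paper, is a Chebyshev-type estimate: the uniform $L^1$ bound on $f_N(u_N)$ together with $f(u)\to\pm\infty$ gives ${\rm meas}\{|u_M|>1-1/N\}\le C/\max\{|f(1-1/N)|,|f(1/N-1)|\}\to0$, whence $|u|<1$ almost everywhere in the limit.

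Second, your treatment of general initial data $(u_0,\psi_0)\in\Phi$ is internally inconsistent. You acknowledge that the initial energy $(F_N(u_0^N),1)_\Omega+\|u_0^N\|^2_{H^1(\Omega)}+\|u_0^N\|^2_{H^1(\Gamma)}$ cannot be kept bounded (indeed $\Phi$ is only an $L^\infty$ ball, so even the $H^1$ part blows up, independently of whether the data touch $\pm1$), yet your mechanism for continuity at $t=0$ and for $u(0)=u_0$ is an Aubin--Lions compactness in $C([0,T];H^{-1}(\Omega))$ which you yourself say requires the uniform initial-energy bound of \eqref{1.en-est}. The smoothing estimates \eqref{1.smoothing}, \eqref{1.smdt} only act for $t\ge\tau>0$ and say nothing about attainment of the initial datum, and \eqref{1.lip} compares two solutions of the \emph{same} regularized problem at the \emph{same} time, so it does not give equicontinuity in $t$ near $t=0$. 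The paper closes this by a two-stage argument: first construct the variational solution for smooth data separated from $\pm1$ (where Theorem \ref{Th1.mainest} applies up to $t=0$ and even yields $u_N\to u$ strongly in $C^\gamma([0,T]\times\Omega)$), then, for general $(u_0,\psi_0)\in\Phi$, approximate by such data and invoke the Lipschitz estimate \eqref{2.lip} \emph{for variational solutions} (already established in Theorem \ref{Th2.unique}) to obtain a Cauchy sequence in $C([0,T];H^{-1}(\Omega)\times L^2(\Gamma))$; the limit inherits both the initial condition and, from the interior and smoothing bounds, the variational inequality and estimate \eqref{2.regular} for $t>0$. You should restructure your argument along these lines.
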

\begin{proof} Let $u_N$ be the solution of the
approximate problem \eqref{1.req}. Then, repeating the
derivation of the variational inequality \eqref{2.varmain1}, we see
that
%$$
\begin{multline}\label{2.varmainN}
\int_s^t[(A\Dt u_N,u_N-v)_{\Omega}+(\Dt u_N,u_N-v)_{\Gamma}]d\,\tau
+\\+\int_s^t(B(v,u_N-v)+(f_N(v),u_N-v)_{\Omega})\,d\tau \le\\\le\int_s^t [L(u_N,A(u_N-v))_\Omega-(g(u_N),u_N-v)_\Gamma-(h_1,u_N-v)_\Omega+(h_2,u_N-v)_\Gamma]\,d\tau
\end{multline}
%$$
for every admissible test function $v$ and every $t>s>0$ (we
recall that the solution $u_N$ of the regularized problem \eqref{1.req}
is smooth and all the formal calculations performed in the derivation
of \eqref{2.varmain} can be easily justified in that case).
\par
Our aim is to pass to the limit $N\to\infty$ in \eqref{2.varmainN}.
We start with the case where the initial datum $u_0$ is smooth and
satisfies the additional conditions
%$$
\begin{equation}\label{2.sep}
|u_0(x)|\le 1-\delta,\ \ \delta>0,\ \ \psi_0:=u_0\big|_\Gamma.
\end{equation}
%$$
Then, according to Theorem \ref{Th1.mainest}, the sequence $u_N$
satisfies the uniform estimate \eqref{1.regular} and, therefore, we can assume, without loss of generality, that $u_N$
converges to some limit function $u$ in the following sense:
%$$
\begin{equation}\label{2.conv}
\begin{aligned}
&1)\ \ \  u_N\to u\ \text{ weakly-$*$ in }
L^\infty([0,T],(H^1(\Omega)\cap H^2(\Gamma))\cap H^2(\Omega_\eb)),\\
&2)\ \ \ \Dt u_N\to\Dt u\ \text{ weakly-$*$ in }\
L^\infty([0,T],H^{-1}(\Omega)\cap L^2(\Gamma))\\
&\text{\phantom{rooooryyyyyyyiii} and weakly in }
L^2([0,T],H^1(\Omega)\cap H^1(\Gamma)),\\
&3)\ \ \  D^2_\tau u_N\to D^2_\tau u\ \text{ weakly-$*$ in }
L^\infty([0,T],L^2(\Omega)),\\
&4)\ \  \ u_N\to u\ \text{ strongly  in } C^\gamma([0,T]\times\Omega) \text{ for some $\gamma>0$}.
\end{aligned}
\end{equation}
%$$
Indeed, the last strong convergence follows from the facts that
$u_N$ is uniformly bounded in
$L^\infty([0,T],C^{\alpha}(\Omega))$, $\alpha >0$, and $\Dt u_N$ is uniformly
bounded in $L^\infty([0,T],H^{-1}(\Omega))$ (owing to Theorem
\ref{Th1.mainest} and the assumption that the initial datum $u_0$ is
smooth and is separated from the singularities $\pm1$).
\par
These convergence results allow us to pass to the limit $N\to\infty$  in \eqref{2.varmainN}
and prove that the limit function satisfies \eqref{2.varmain1} for
any admissible function $v$. The only nontrivial term containing
the nonlinearity $f_N$ can be treated by using the inequality
$$
|f_N(v)|\le |f(v)|,
$$
the fact that $f(v)\in L^1([0,T]\times\Omega)$ and the Lebesgue
dominated convergence theorem.
\par
Thus, we only need to show that the function $u$ thus
constructed satisfies the regularity assumptions \eqref{2.bound}.
The only nontrivial statements that we need to prove are
that \eqref{2.bound}(1) holds and $f(u)\in
L^1([0,T]\times\Omega)$ (the other ones are immediate
consequences of \eqref{2.conv}). Let us check the
first one. Since the $L^1$-norm of $f_N(u_N)$ is
uniformly bounded, we conclude from the expression of the function
$f_N$ that
%$$
\begin{equation}\label{2.mn}
{\rm meas}\{(t,x)\in[T,T+1]\times\Omega,\ |u_M(t,x)|>1-\frac1N\}\le
\varphi(\frac1N),\ \ M\ge N,
\end{equation}
%$$
where
%$$
\begin{equation}
\varphi(x):=\frac C{\max\{|f(1-x)|,|f(x-1)|\}}
\end{equation}
%$$
for some constant $C$ which is independent of $T\in\R^+$, $M\ge N$ and
$N\in\Bbb N$. Thus, passing to the limit $M,N\to\infty$ in
\eqref{2.mn} and using the fact that $\varphi(x)\to0$ as $x\to0$, we
conclude that
$$
{\rm meas}\{(t,x)\in[T,T+1]\times\Omega,\ |u(t,x)|=1\}=0
$$
and \eqref{2.bound}(1) is verified. In order to prove that $f(u)$
is integrable, there only remains to note that the already proved statement
\eqref{2.bound}(1), together with the convergence \eqref{2.conv}(4),
imply the almost everywhere convergence $f_N(u_N)\to f(u)$ and,
therefore, owing to the Fatou lemma,
%$$
\begin{equation}\label{2.f}
\|f(u)\|_{L^1([T,T+1]\times\Omega)}\le
\operatorname{liminf}_{N\to\infty}\|f_N(u_N)\|_{L^1([T,T+1]\times\Omega)}<\infty.
\end{equation}
%$$
Thus, $u$ is indeed the desired variational solution of
\eqref{2.reg} and estimate \eqref{2.regular} immediately follows
from \eqref{2.conv} and Theorem \ref{Th1.mainest} (in order to deduce the $L^1$-estimate
on $f(u)$, one needs to use, in addition, inequality
\eqref{2.f}).
\par
Finally, we are now able to remove assumption \eqref{2.sep}. To
this end, we approximate the initial datum $(u_0,\psi_0)\in\Phi$ by
a sequence $(u_0^k,\psi_0^k)$ of smooth functions satisfying \eqref{2.sep} (of
course, with $\delta=\delta_k$ which can tend to zero as
$k\to\infty$) in such a way that
%$$
\begin{equation}\label{2.appin}
\|u_0-u_0^k\|_{L^2(\Omega)}\to0,\ \
\|u_0^k\big|_\Gamma-\psi_0\|_{L^2(\Gamma)}\to0,\ \
\<u_0^k\>\equiv\<u_0\>.
\end{equation}
%$$
Let $(u_k(t),\psi_k(t))$ (where $\psi_k=u_k\big|_\Gamma$) be a
sequence of variational solutions of problem \eqref{2.reg}
satisfying $(u_k(0),\psi_k(0))=(u_0^k,\psi_0^k)$ (whose existence
is proved above). Then, owing to the uniform Lipschitz continuity
estimate \eqref{2.lip} and assumption \eqref{2.appin},
$(u_k,\psi_k)$ is a Cauchy sequence in
$C([0,T],H^{-1}(\Omega)\times L^2(\Gamma))$ and, therefore, the
limit function
$$
(u,\psi):=\lim_{k\to\infty} (u_k,\psi_k)
$$
exists and also belongs to $C([0,T],H^{-1}(\Gamma)\times
L^2(\Omega))$. The fact that $u$ is a variational solution of
\eqref{2.reg}, as well as estimate \eqref{2.regular}, can be verified, based
on the uniform estimates derived in Section \ref{s1}, exactly as was
done above for smooth initial data. This finishes the proof of Theorem
\ref{Th2.main}.
\end{proof}

\begin{corollary}\label{Cor2.sem} Under the assumptions of Theorem
\ref{Th2.main}, equation \eqref{2.reg} generates a solution
semigroup $S(t)$ in the phase space $\Phi$,
%$$
\begin{equation}\label{2.sem}
S(t)(u_0,\psi_0):=(u(t),\psi(t)),\ \ S(t):\Phi\to\Phi,\ \ t\ge0,
\end{equation}
%$$
where $(u(t),\psi(t))$ is the unique variational solution of problem
\eqref{2.reg} with initial datum $(u_0,\psi_0)$. Furthermore, this
semigroup is globally Lipschitz continuous,
%$$
\begin{equation}\label{2.slip}
\|S(t)(u_0^1,\psi_0^1)-S(t)(u_0^2,\psi_0^2)\|_{H^{-1}(\Omega)\times
L^2(\Gamma)}\le
Ce^{Kt}\|(u^1_0-u_0^2,\psi_0^1-\psi_0^2)\|_{H^{-1}(\Omega)\times
L^2(\Gamma)},
\end{equation}
%$$
in the metric of the space $\Phi^w:=H^{-1}(\Omega)\times
L^2(\Omega)$.
\end{corollary}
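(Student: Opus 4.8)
The plan is to derive Corollary~\ref{Cor2.sem} directly from Theorems~\ref{Th2.unique} and~\ref{Th2.main}. First I would check that $S(t)$ as defined in \eqref{2.sem} is well-defined: for each $(u_0,\psi_0)\in\Phi$, Theorem~\ref{Th2.main} guarantees the existence of a unique variational solution $(u,\psi)$ with $(u(0),\psi(0))=(u_0,\psi_0)$, and by \eqref{2.bound}(2) this solution lies in $C([0,\infty),H^{-1}(\Omega)\times L^2(\Gamma))$, so $(u(t),\psi(t))$ makes sense for every $t\ge 0$. To see that $S(t)$ maps $\Phi$ into $\Phi$, I would observe that \eqref{2.bound}(1) gives $|u(t,x)|<1$ a.e.\ in $\Omega$, hence $\|u(t)\|_{L^\infty(\Omega)}\le 1$; the bound $\|\psi(t)\|_{L^\infty(\Gamma)}\le 1$ follows from the regularization estimate \eqref{2.regular}, which in particular controls $\|u(t)\|_{C^\alpha(\Omega)}$ for $t>0$ (so $u(t)$ is continuous up to $\Gamma$ and its trace $\psi(t)$ satisfies $|\psi(t,x)|\le 1$ by continuity from the interior bound), while at $t=0$ we have $\psi(0)=\psi_0\in L^\infty(\Gamma)$ with $\|\psi_0\|_{L^\infty(\Gamma)}\le1$ by the assumption $(u_0,\psi_0)\in\Phi$.

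Next I would verify the semigroup property $S(t+s)=S(t)S(s)$. This is the standard consequence of uniqueness: given $(u_0,\psi_0)$, let $(u,\psi)$ be its variational solution and fix $s\ge0$; one checks that the time-shifted pair $(u(\cdot+s),\psi(\cdot+s))$ is a variational solution of \eqref{2.reg} with initial datum $(u(s),\psi(s))$. All the conditions in Definition~\ref{Def2.var} are invariant under time translation — the variational inequality \eqref{2.varmain} holds for a.e.\ $t$, the regularity \eqref{2.bound} is translation-invariant on $[\tau,T]$ for the $\Dt$ part and on $[0,T]$ otherwise (here one uses that for $s>0$ the shifted solution already has the improved regularity from \eqref{2.regular}, and for $s=0$ the statement is trivial), and the mass is conserved. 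By the uniqueness part of Theorem~\ref{Th2.unique}, this shifted solution coincides with $S(\cdot)(u(s),\psi(s))=S(\cdot)S(s)(u_0,\psi_0)$, and evaluating at time $t$ gives $S(t+s)(u_0,\psi_0)=S(t)S(s)(u_0,\psi_0)$. Together with $S(0)=\Id$ (immediate from $u(0)=u_0$, $\psi(0)=\psi_0$), this establishes the semigroup structure.

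Finally, the Lipschitz estimate \eqref{2.slip} is simply estimate \eqref{2.lip} of Theorem~\ref{Th2.unique} rewritten in the semigroup notation: with $(u_i,\psi_i)=S(\cdot)(u_0^i,\psi_0^i)$, $i=1,2$, and noting that $\<u_1(0)\>=\<u_0^1\>$ and $\<u_2(0)\>=\<u_0^2\>$ need not be equal in general, I should point out that the Lipschitz estimate \eqref{2.lip} was proved under the hypothesis $\<u_1(0)\>=\<u_2(0)\>$, so strictly speaking \eqref{2.slip} should be read on each mass level set, or one extends \eqref{2.lip} to differing masses by a direct additional estimate on $\frac{d}{dt}\|u_1-u_2\|^2$ absorbing the extra mean-value terms; in any case, on a fixed mass level $\<u_0^1\>=\<u_0^2\>=c$, \eqref{2.lip} gives exactly $\|S(t)(u_0^1,\psi_0^1)-S(t)(u_0^2,\psi_0^2)\|_{H^{-1}(\Omega)\times L^2(\Gamma)}\le Ce^{Kt}\|(u_0^1-u_0^2,\psi_0^1-\psi_0^2)\|_{H^{-1}(\Omega)\times L^2(\Gamma)}$, which is the claimed continuity in the metric of $\Phi^w=H^{-1}(\Omega)\times L^2(\Gamma)$. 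The only real obstacle is the bookkeeping in the semigroup property, namely making sure that for $s=0$ the shifted solution has precisely the regularity required of a variational solution on $[0,\infty)$ — which it does, since the regularity demanded at $t=0$ in Definition~\ref{Def2.var} is only that $(u_0,\psi_0)\in\Phi$ — and otherwise everything is a direct transcription of the already-established theorems.
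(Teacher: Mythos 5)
Your proposal is correct and follows exactly the route the paper intends: the corollary is stated without proof as an immediate consequence of Theorems \ref{Th2.unique} and \ref{Th2.main}, and your verification of well-definedness, invariance of $\Phi$, the semigroup property via uniqueness and time-translation, and the Lipschitz bound as a restatement of \eqref{2.lip} fills in precisely those standard details. Your remark that \eqref{2.lip} was established only for data with equal means, so that \eqref{2.slip} should properly be read on each mass level set $\<u_0^1\>=\<u_0^2\>=c$, is a legitimate and worthwhile precision, consistent with the paper's later restriction to the hyperplanes $\Phi_c$ in Section \ref{s4}.
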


We now start investigating the analytic structure of a
 solution $(u(t),\psi(t))$ of problem \eqref{2.reg}
(this will be continued in Section \ref{s3}).
\begin{proposition}\label{Prop2.int} Let $(u(t),\psi(t))$ be a variational
solution of problem \eqref{2.reg} constructed in Theorem
\ref{Th2.main}. Then, $\psi(t)=u(t)\big|_{\Gamma}$ for $t>0$ and, for every $\varphi\in
C_0^\infty((0,T)\times\Omega)$ such that $\<\varphi(t)\>\equiv0$,
there holds
%$$
\begin{multline}\label{2.dist}
\int_{\R^+}((-\Dx)^{-1}\Dt u(t),\varphi(t))_\Omega\,dt=\\=\int_{\R^+}((\Dx
u(t),\varphi(t))_\Omega-(f(u(t)),\varphi(t))_{\Omega}+\lambda(u(t),\varphi(t))_{\Omega}-(h_1,\varphi(t))_\Omega)\,dt.
\end{multline}
%$$
Furthermore,

%$$
\begin{equation}\label{2.r}
u\in L^\infty([\tau,T],W^{2,1}(\Omega)),\ \ T>\tau>0,
\end{equation}
%$$

\noindent and the trace of
the normal derivative on the boundary,
%$$
\begin{equation}\label{2.trace}
[\partial_n u]_{int}:=\partial_n u\big|_{\Gamma}\in
L^\infty([\tau ,T],L^1(\Gamma)),\ \ T>\tau >0,
\end{equation}
%$$
exists.
\end{proposition}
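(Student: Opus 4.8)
The plan is to prove Proposition~\ref{Prop2.int} in three stages, using the approximate solutions $u_N$ from Section~\ref{s1} together with the convergences \eqref{2.conv} established in the proof of Theorem~\ref{Th2.main}.

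First, I would establish the distributional identity \eqref{2.dist}. For each $N$, the smooth solution $u_N$ satisfies equation \eqref{1.req} literally, so multiplying by a test function $\varphi\in C_0^\infty((0,T)\times\Omega)$ with $\<\varphi(t)\>\equiv 0$ and integrating gives the analogue of \eqref{2.dist} with $f$ replaced by $f_N$ (the mean-value term $\<\mu\>$ drops out precisely because $\<\varphi(t)\>\equiv 0$). Now pass to the limit $N\to\infty$: the linear terms converge by \eqref{2.conv}(1)--(2), and the nonlinear term $(f_N(u_N),\varphi)_\Omega$ converges to $(f(u),\varphi)_\Omega$ by the almost-everywhere convergence $f_N(u_N)\to f(u)$ (which was proved in Theorem~\ref{Th2.main}), together with the uniform $L^1$-bound on $f_N(u_N)$ and a Vitali/uniform-integrability argument — here the uniform integrability follows from the fact, noted in \eqref{2.mn}, that $\mathrm{meas}\{|u_M|>1-1/N\}\le\varphi(1/N)\to 0$ uniformly, which controls the tails of $f_N(u_N)$. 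This also shows $\psi(t)=u(t)\big|_\Gamma$ for $t>0$, which is already part of \eqref{2.bb}.

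Second, I would prove the $W^{2,1}$-regularity \eqref{2.r}. Fix $t>0$ (outside the null set where the estimates fail). By Theorem~\ref{Th2.main}, estimate \eqref{2.regular} holds, so $f(u(t))\in L^1(\Omega)$, $u(t)\in H^1(\Omega)\cap H^2(\Gamma)$, and $\Nx D_\tau u(t)\in L^2(\Omega)$; moreover $\Dt u(t)\in H^{-1}(\Omega)\cap L^2(\Gamma)$ and $\<\mu(t)\>$ is controlled via \eqref{1.mumu}. Writing $u(t)$ as the solution of the elliptic problem \eqref{1.ell} (whose right-hand sides $\tilde h_1(t),\tilde h_2(t)$ are in $L^2$ by \eqref{1.good}), the tangential second derivatives $D_\tau^2 u(t)$ are in $L^2(\Omega)$ uniformly, as is $u(t)\|_{H^2(\Omega_\eb)}$ for each $\eb>0$. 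The missing ingredient is the second \emph{normal} derivative $\partial_n^2 u(t)$ near $\Gamma$: but from the first line of \eqref{1.ell}, $\partial_n^2 u(t) = \Dx u(t) - (\text{tangential part, in }L^2) = f(u(t)) + u(t) + \tilde h_1(t) + (\text{tangential Laplacian terms})$, so $\partial_n^2 u(t)\in L^1(\Omega)$ because $f(u(t))\in L^1$ and everything else is in $L^2\subset L^1$. Combining, $u(t)\in W^{2,1}(\Omega)$ with a bound uniform in $t\ge\tau$, which is \eqref{1.l1}; this is exactly Remark~\ref{Rem1.reg}. Passing this uniform-in-$N$ estimate to the limit (it is an estimate in the non-reflexive space $W^{2,1}$, so one uses weak-$*$ convergence of measures or simply lower semicontinuity of the $L^1$-norm under a.e.\ convergence, which the compact Sobolev embeddings provide) yields \eqref{2.r}.

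Third, the trace statement \eqref{2.trace} follows from \eqref{2.r} by the trace theorem $W^{2,1}(\Omega)\to W^{1,1}(\Gamma)\hookrightarrow L^1(\Gamma)$ applied to $\partial_n u(t)$, i.e.\ inequality \eqref{1.trace}, with the bound uniform on $[\tau,T]$ by the uniform $W^{2,1}$-control. The main obstacle is the second stage: the lack of $L^2$ maximal elliptic regularity for the singular problem (pointed out in the text before Lemma~\ref{Cor1.H} and in Appendix~1) means one cannot simply put $u(t)\in H^2(\Omega)$; the whole point is to exploit the \emph{decomposition} $\Nx^2 u = D_\tau^2 u\ (L^2) + \partial_n^2 u\ (L^1\text{, but only }L^1)$ near the boundary, so that $W^{2,1}$ — and not $H^2$ — is the natural and sharp regularity, and correspondingly the normal-derivative trace lives only in $L^1(\Gamma)$, which (as anticipated in Remark~\ref{Rem1.reg}) is why the dynamic boundary condition \eqref{1.dyn} may fail in the limit.
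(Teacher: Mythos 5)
Your overall strategy (pass to the limit in the weak formulation for $u_N$, then use the elliptic structure plus the tangential control $\Nx D_\tau u\in L^2$ to upgrade to $W^{2,1}$, then apply the trace theorem) is the same as the paper's, and your third stage is fine. The genuine gap is in your justification of the limit $(f_N(u_N),\varphi)_\Omega\to(f(u),\varphi)_\Omega$. Uniform integrability of $f_N(u_N)$ on $\Omega$ does \emph{not} follow from the uniform $L^1$-bound together with the measure estimate \eqref{2.mn}: that estimate only says that the set where $|u_N|$ is close to $1$ has small measure, and gives no control of $\int_{\{|u_N|>1-1/N\}}|f_N(u_N)|\,dx$ beyond the global $L^1$ bound. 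Worse, global uniform integrability up to the boundary is genuinely false in the situations the paper is designed to handle: as Example \ref{ExA.?} shows, a portion of the $L^1$-mass of $f_N(u_N)$ can concentrate on $\Gamma$ as $N\to\infty$, which is precisely why $[\partial_n u]_{int}$ may differ from $[\partial_n u]_{ext}$. The step is nevertheless salvageable, and the fix is the one the paper uses: since $\varphi$ is compactly supported in $\Omega$, only the behaviour on $\Omega_\eb$ matters, and there the uniform $L^\infty([\tau,T],H^2(\Omega_\eb))$ bound of Theorem \ref{Th1.mainest}, read through equation \eqref{1.ell}, gives $f_N(u_N)$ uniformly bounded in $L^\infty([\tau,T],L^2(\Omega_\eb))$; combined with the a.e.\ convergence, this yields $f_N(u_N)\rightharpoonup f(u)$ weakly in $L^2([\tau,T]\times\Omega_\eb)$ for every $\eb>0$, which suffices to pass to the limit.

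A secondary caution on your second stage: do not try to transfer the uniform $W^{2,1}(\Omega)$ bound on $u_N$ to the limit via ``weak-$*$ convergence of measures or lower semicontinuity of the $L^1$-norm''. Since $W^{2,1}$ is non-reflexive, a bounded sequence there only guarantees that the limiting second derivatives are measures, and the possible singular part is exactly what you cannot exclude a priori by soft arguments. The clean route --- and the paper's --- is to use the already established identity \eqref{2.dist} to write $(-\Dx)^{-1}\Dt u=\Dx u-f(u)+\lambda u-h_1+c(t)$ as an identity in $L^2_{loc}([\tau,T]\times\Omega)$ for the \emph{limit} function $u$; since $f(u)\in L^\infty([\tau,T],L^1(\Omega))$ by \eqref{2.regular} (itself obtained via Fatou), this identity shows that the distribution $\Dx u$ on the open set $\Omega$ is represented by an $L^1(\Omega)$ function, and together with the control of $\Nx D_\tau u$ in $L^2(\Omega)$ this gives \eqref{2.r} directly; \eqref{2.trace} then follows from the trace theorem as you say.
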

\begin{proof} Since, according to Theorem
\ref{Th1.mainest}, the approximating sequence $u_N$ is uniformly
bounded in $L^\infty([\tau,T],H^2(\Omega_\eb))$, for any $\eb>0$,
the sequence $f_N(u_N)$ is also uniformly bounded in
$L^\infty([\tau,T],L^2(\Omega_\eb))$. This fact, together with the
almost everywhere convergence established in the proof of Theorem
\ref{Th2.main}, guarantee that $f_N(u_N)\to f(u)$ weakly in
$L^2([\tau,T]\times\Omega_\eb)$ for all $\eb>0$ and this, in
turn, allows us to verify identity \eqref{2.dist} by passing to
the limit in the analogous identity for the approximate
solutions $u_N$.
\par
In order to check the remaining statements of the proposition, we
first deduce from \eqref{2.dist}  that
%$$
\begin{equation}\label{2.di}
(-\Dx)^{-1}\Dt u(t)=\Dx u(t)-f(u(t))+\lambda u(t)-h_1+c(t)
\end{equation}
%$$
for some function $c\in L^\infty([\tau,T])$, $T>\tau >0$. At this point,
equality \eqref{2.di} is understood as an equality in
$L^2_{\rm loc}([\tau,T]\times\Omega)$. However, owing to estimate
\eqref{2.regular}, $f(u)\in L^\infty([\tau,T], L^1(\Omega))$ and
the term $(-\Dx)^{-1}\Dt u$ also belongs at least to this space.
Thus, we see that
$$
\Dx u\in L^\infty([\tau,T],L^1(\Omega))
$$
and, consequently, since $\nabla _xD_\tau u$ is
controlled by \eqref{2.regular}, we can finally conclude
that \eqref{2.r} holds,
which gives the existence of the trace \eqref{2.trace} and
finishes the proof of Proposition \ref{Prop2.int}.
\end{proof}

Note that, using the obvious fact that $\<(-\Dx)^{-1}\Dt u\>=0$, we
can find the explicit formula for the function $c(t)$ in \eqref{2.di}, namely,
%$$
\begin{equation}\label{2.c}
c(t)=\<\Dx u(t)-f(u(t))+\lambda u(t)-h_1\>=-\<\mu(t)\>
\end{equation}
%$$
and, therefore, the first equation of \eqref{2.reg} is satisfied
in a usual sense (say, as an equality in
$L^2_{loc}([\tau,T]\times\Omega)$ or/and almost everywhere).
\par
We now investigate the third equation of \eqref{2.reg} (the
equation on the boundary). According to Theorem
\ref{Th1.mainest}, we see that the approximating sequence
$(u_N(t),\psi_N(t))$ satisfies
$$
\|\Dt\psi_N(t)\|_{L^\infty([\tau,T],L^2(\Gamma))}+
\|\psi_N(t)\|_{L^2([\tau,T],H^2(\Gamma))}\le C
$$
and, therefore, using the fact that the approximate solutions satisfy the
second equation of \eqref{2.reg}, we
can assume, without loss of generality, that we have the convergence
%$$
\begin{equation}\label{2.ext}
[\partial_n u]_{ext}:=\lim_{N\to\infty}\partial_n
u_N\big|_{\Gamma}\in L^\infty([\tau,T],L^2(\Gamma)),\ \ T>\tau>0,
\end{equation}
%$$
where the limit is understood as a weak-star limit in
$L^\infty([\tau,T],L^2(\Gamma))$. Then, obviously,
%$$
\begin{equation}\label{2.bou}
\Dt\psi-\Delta_\Gamma\psi+g(\psi)+[\partial_n u]_{ext}=h_2, \ \
\text{on $\Gamma$},
\end{equation}
%$$
and, in order to verify that the variational solution $(u,\psi)$
satisfies equations \eqref{2.reg} in the usual sense, there only
remains to check that
%$$
\begin{equation}\label{2.intext}
[\partial_n u]_{int}=[\partial_n u]_{ext}\ \ \text{for almost every
$(t,x)\in\R^+\times\Gamma$}.
\end{equation}
%$$
However, as the example in Appendix 1 shows, this identity can be
violated even in the simplest 1D stationary case. In the next section, we formulate several sufficient conditions which
ensure that
\eqref{2.intext} holds for every (variational) solution of
\eqref{2.reg}.

\section{Additional regularity and separation from the
singularities}\label{s3}
The main aim of this section is to study the analytic properties
of the variational solutions $u$ of problem \eqref{2.reg}, especially
close to the singular points $\pm1$. We start with the following
result which gives an additional regularity on
$u(t,x)$ close to the points where $|u(t,x)|<1$.

\begin{proposition}\label{Prop3.h2reg} Let the assumptions of
Theorem \ref{Th2.unique} hold and let $u$ be a variational
solution of problem \eqref{2.reg}. Let also $\delta>0$, $T>0$ be given
and set
%$$
\begin{equation}\label{3.odelta}
\Omega_\delta(T):=\{x\in\Omega,\ \ |u(T,x)|<1-\delta\}.
\end{equation}
%$$
Then, $u\in W^{2,2}(\Omega_\delta(T))$ and the
following estimate holds:
%$$
\begin{equation}\label{3.rest}
\|u\|_{H^{2}(\Omega_\delta(T))}\le Q_{\delta,T},
\end{equation}
%$$
where the constant $Q_{\delta,T}$ only depends on $T$ and
$\delta$, but is independent of the concrete choice of the
solution $u$.
\end{proposition}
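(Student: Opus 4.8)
The plan is to work with the approximating sequence $u_N$ constructed in Theorem~\ref{Th2.main} and to exploit the interior elliptic estimate together with the uniform $L^2_{\mathrm{loc}}$ control of $f_N(u_N)$ away from the singular set. Fix $\delta>0$ and $T>0$. First I would pick an intermediate level $\delta'\in(0,\delta)$ and, using the strong convergence $u_N\to u$ in $C^\gamma([\,\tau,T+1\,]\times\Omega)$ from \eqref{2.conv}(4) (after first regularizing at some $\tau\in(0,T)$ so that the estimates of Section~\ref{s1} apply), observe that there is $N_0$ such that for all $N\ge N_0$ one has $|u_N(T,x)|<1-\delta'$ on a neighbourhood (in $\Omega$) of $\overline{\Omega_\delta(T)}$; more precisely, the open set $\{x:|u_N(T,x)|<1-\delta'\}$ contains a fixed open set $U$ with $\Omega_\delta(T)\subset\subset U$, uniformly in $N\ge N_0$. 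On $U$ we have the pointwise bound $|f_N(u_N(T,x))|=|f(u_N(T,x))|\le \max_{|s|\le 1-\delta'}|f(s)|=:C_{\delta'}<\infty$, because $f\in C^2((-1,1))$ and $f_N$ agrees with $f$ on $[-1+1/N,1-1/N]\supset[-1+\delta',1-\delta']$ once $N>1/\delta'$.

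Next I would use equation \eqref{2.di} (equivalently the elliptic reformulation \eqref{1.ell} for $u_N$) at the fixed time $t=T$: it reads $\Dx u_N(T)=f_N(u_N(T))-\lambda u_N(T)+(-\Dx)^{-1}\Dt u_N(T)-\langle\mu_N(T)\rangle+h_1$. By Theorem~\ref{Th1.mainest} the terms $(-\Dx)^{-1}\Dt u_N(T)$ and $\langle\mu_N(T)\rangle$ are bounded in $L^2(\Omega)$ respectively in $\R$ by a constant $Q_{\delta,T}$ depending only on $T$, $\delta$ and the data (through $\|h_i\|_{L^2}$), uniformly in $N$; combined with the bound on $f_N(u_N(T))$ on $U$ and $\|u_N(T)\|_{H^1(\Omega)}\le Q_{\delta,T}$, this gives $\|\Dx u_N(T)\|_{L^2(U)}\le Q_{\delta,T}$ uniformly in $N\ge N_0$. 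Now apply interior elliptic regularity: for any $V$ with $\Omega_\delta(T)\subset\subset V\subset\subset U$ one has the Caccioppoli-type estimate $\|u_N(T)\|_{H^2(V)}\le C(V,U)\big(\|\Dx u_N(T)\|_{L^2(U)}+\|u_N(T)\|_{L^2(U)}\big)\le Q_{\delta,T}$, uniformly in $N\ge N_0$. Finally, pass to the limit: $u_N(T)\rightharpoonup u(T)$ weakly in $H^2(V)$ (up to a subsequence), and by weak lower semicontinuity of the norm $\|u(T)\|_{H^2(\Omega_\delta(T))}\le\|u(T)\|_{H^2(V)}\le Q_{\delta,T}$, which is \eqref{3.rest}.

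The one point that needs care — and which I expect to be the main obstacle — is the transfer of the ``good set'' from the limit $u$ to the approximations $u_N$ at the single time slice $t=T$. The set $\Omega_\delta(T)$ is defined through the limit solution $u(T,\cdot)$, which a priori is only known to be Hölder continuous; one must ensure that on a fixed neighbourhood of $\overline{\Omega_\delta(T)}$ the $u_N(T,\cdot)$ stay away from $\pm1$ uniformly in $N$. This is exactly where the uniform $C^\gamma$ bound in space-time from \eqref{2.conv}(4) is essential: it is not enough to have a.e. convergence, since that would not control $u_N$ on a full open set. Provided the initial datum is first regularized (replace $0$ by an arbitrary $\tau\in(0,T)$, using the smoothing of Theorem~\ref{Th2.main} to make the Section~\ref{s1} estimates available), the uniform equicontinuity of $\{u_N\}$ near $t=T$ upgrades the pointwise statement $|u(T,x)|<1-\delta$ on $\Omega_\delta(T)$ to the uniform statement $|u_N(T,x)|<1-\delta'$ on a slightly larger open set, as claimed above. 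A minor additional subtlety is that $\Omega_\delta(T)$ need not have smooth boundary; this is harmless because we only use \emph{interior} elliptic estimates on the strictly smaller open set $\Omega_\delta(T)\subset\subset V$, never a boundary regularity result on $\partial\Omega_\delta(T)$.
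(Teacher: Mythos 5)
Your first two steps --- transferring the ``good set'' from $u(T,\cdot)$ to the approximations $u_N(T,\cdot)$ via the uniform space--time H\"older bound, and then bounding $f_N(u_N(T,\cdot))$ pointwise on a slightly larger open set --- are correct and coincide with what the paper does (cf.\ \eqref{3.cut}, \eqref{3.nbound}, \eqref{3.l2}). The gap is in the elliptic regularity step. The set $\Omega_\delta(T)=\{x\in\Omega:\ |u(T,x)|<1-\delta\}$ is \emph{not} in general compactly contained in $\Omega$: its closure may (and in the intended applications does) meet $\Gamma=\partial\Omega$. Consequently you cannot choose open sets $V\subset\subset U$ with $\Omega_\delta(T)\subset\subset V$ and then invoke a purely interior Caccioppoli estimate; near the points of $\overline{\Omega_\delta(T)}\cap\Gamma$ an interior estimate says nothing, and the bound $\|\Dx u_N(T)\|_{L^2(U)}\le Q$ alone does not control $\|u_N(T)\|_{H^2}$ up to $\Gamma$. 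Your closing remark that ``we only use interior elliptic estimates, never a boundary regularity result'' is precisely where the argument fails: on the part of $\Omega_\delta(T)$ that touches $\Gamma$ one \emph{must} use a boundary regularity result, and for that one needs the second (dynamic/Ventcel-type) boundary equation. This is not a cosmetic issue: the interior $H^2(\Omega_\eb)$ bound is already contained in \eqref{2.regular}, so the only new content of Proposition \ref{Prop3.h2reg} is the $H^2$ control up to $\Gamma$ on the region where $|u(T,\cdot)|<1-\delta$; this is exactly what Corollary \ref{Cor3.reg} uses at boundary points $x_0\in\Gamma$ to identify $[\partial_n u]_{int}$ with $[\partial_n u]_{ext}$.

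The paper's proof repairs this by localizing with a cut-off $\theta$ supported in $\Omega_{\delta/2}(T)$ and setting $v_N:=\theta\,u_N(T)$, which solves the full \emph{coupled} linear elliptic boundary value problem \eqref{3.smooth}, i.e.\ $\Dx v_N-v_N=h_1(u_N)$ in $\Omega$ together with $\Delta_\Gamma w_N-w_N-\partial_n v_N=h_2(u_N)$ on $\Gamma$, with right-hand sides bounded in $L^2(\Omega)\times L^2(\Gamma)$ uniformly in $N$ (here the $L^2(\Gamma)$ bound on $h_2(u_N)$ uses $\Dt u_N(T)\in L^2(\Gamma)$ from \eqref{1.smdt} and \eqref{1.good}). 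The $H^2$-regularity theorem for this linear problem --- valid up to $\Gamma$ because the boundary condition is now part of the data --- then gives $\|v_N\|_{H^2(\Omega)}\le Q_{\delta,T}$, hence $\|u_N(T)\|_{H^2(\Omega_\delta(T))}\le Q_{\delta,T}$, and one concludes by weak convergence as you do. To fix your proof you would need to add this boundary ingredient; the interior part of your argument can stay as is (or simply be quoted from \eqref{2.regular}).
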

\begin{proof}
Since the solution $u(T,x)$ is H\"older continuous with respect to $x$ (see \eqref{2.regular}),
 there exists a smooth nonnegative cut-off function $\theta(x)$ such
that
%$$
\begin{equation}\label{3.cut}
\begin{cases}
1) \ \ \theta(x)\equiv1,\ \ x\in \Omega_\delta(T),\\
2)\ \ \theta(x)\equiv0,\ \
x\in\Omega\backslash\Omega_{\delta/2}(T),\\
3)\ \ \|\theta\|_{C^2(\R^3)}\le K_{\delta,T},
\end{cases}
\end{equation}
%$$
where $K_{\delta,T}$ depends on the constants in
\eqref{2.regular}, but is independent of the concrete choice of
the solution $u$.
\par
Furthermore,
let $u_N(t,x)$ be a sequence of approximate solutions
of problems \eqref{1.req} which converges to the variational
solution $u(t,x)$ as $N\to\infty$. Then, since this convergence
holds in the space $C^\gamma([t,T]\times\Omega)$ for some
$\gamma>0$,
%$$
\begin{equation}\label{3.nbound}
|u_N(T,x)|<1-\delta/4,\ \ x\in\Omega_{\delta/2}(T)
\end{equation}
%$$
if $N$ is large enough. Set now $v_N(x):=\theta(x)u_N(T,x)$. Then,
this function obviously solves the following elliptic boundary
value problem (compare with \eqref{1.ell}):
%$$
\begin{equation}\label{3.smooth}
\begin{cases}
\Dx v_N-v_N=h_1(u_N):=\theta f_N(u_N(T))+\theta\tilde h_1(T)+2\Nx\theta .\Nx
u_N(T)+u_N(T)\Dx\theta ,\\
v_N\big|_{\Gamma}=w_N,\\
\Delta_\Gamma w_N-w_N-\partial_n v_N=h_2(u_N):=\\\text{\phantom{eggogeggog}}:=\theta\tilde
h_2(T)+2\nabla_\Gamma \theta .\nabla_\Gamma u_N(T)+u_N(T)\Delta_\Gamma\theta
-u_N(T)\partial_n\theta ,
\end{cases}
\end{equation}
where the functions $\tilde h_i$, $i=1,2$, are the same as in \eqref{1.ell}.
In addition, owing to estimates \eqref{1.smdt}, \eqref{1.h1},
\eqref{1.good}, \eqref{3.cut} and
\eqref{3.nbound}, we see that
%$$
\begin{equation}\label{3.l2}
\|h_1(u_N)\|_{L^2(\Omega)}+\|h_2(u_N)\|_{L^2(\Gamma)}\le Q_{\delta,T},
\end{equation}
%$$
where the constant $Q_{\delta,T}$ is independent of $N$ and of the
concrete choice of the solution $u$. Applying the $H^2$-regularity
theorem to the linear elliptic problem \eqref{3.smooth} (see \cite{MZ2}) and
recalling \eqref{3.cut}, we deduce that
%$$
\begin{equation}\label{3.c}
\|u_N(T)\|_{H^{2}(\Omega_\delta(T))}\le Q_{\delta,T}
\end{equation}
%$$
and, consequently, by passing to the limit $N\to\infty$, we see
that $u(T)\in H^{2}(\Omega_\delta(T))$ and \eqref{3.rest}
holds. This finishes the proof of the proposition.
\end{proof}

\begin{remark}\label{Rem3.reg} Applying the $L^p$-regularity
theorem to the elliptic boundary value problem \eqref{3.smooth},
together with a proper interpolation inequality, we have
$$
\|u(T)\|_{W^{2,p}(\Omega_\delta(T))}\le Q_{\delta,p,T}
$$
for any $p<\infty$. However, it seems difficult to obtain further regularity results on
$u$ in $\Omega_\delta(T)$ by directly using equations \eqref{1.req} or
\eqref{3.smooth}, owing to the presence of the nonlocal term
$\<\mu(t)\>$ (which is only $L^\infty$ with respect to $t$).
Alternatively, one can use the standard interior estimates for the
initial fourth-order problem \eqref{1.main}. Then,
it is not difficult to see that the factual regularity of the
solution $u$ in $\Omega_\delta(T)$ is only restricted by the
regularity of the data $f$, $g$, $h_i$, $i=1,2$, and $\Omega$ (and, if these
data are of class $C^\infty$, the solution $u$ is of class $C^\infty$
in $\Omega_\delta(T)$ as well).
\end{remark}

\begin{corollary}\label{Cor3.reg} Let the assumptions of Theorem
\ref{Th2.unique} hold and let $u$ be a variational solution of
problem \eqref{2.reg}. Assume, in addition, that
$$
|u(t_0,x_0)|<1
$$
for some $(t_0,x_0)\in\R^+\times\Gamma$, with $t_0>0$. Then, there exists a
neighborhood $(t_0-\delta,t_0+\delta)\times V$ of $(t_0,x_0)$ in
$\R\times\Gamma$ such that
%$$
\begin{equation}\label{3.equiv}
[\partial_n u]_{int}(t,x)=[\partial_n u]_{ext}(t,x),\ \ \forall
(t,x)\in(t_0-\delta,t_0+\delta)\times V.
\end{equation}
%$$
In particular, if
%$$
\begin{equation}\label{3.breg}
|u(t,x)|<1\ \ \text{for almost all $(t,x)\in\R^+\times\Gamma$},
\end{equation}
%$$
then the equality $[\partial_n u]_{ext}=[\partial_n u]_{int}$
holds almost everywhere in $\R^+\times\Gamma$ and, therefore,
the variational solution $u$ solves equations \eqref{2.reg} in the
usual sense.
\end{corollary}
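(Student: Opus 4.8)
The plan is to prove Corollary \ref{Cor3.reg} by localizing near the point $(t_0,x_0)\in\R^+\times\Gamma$ where $|u(t_0,x_0)|<1$ and showing that on a small space-time neighborhood the variational solution is genuinely a strong solution, so that the two notions of normal trace must agree. First I would use the H\"older continuity of $u$ in $(t,x)$ — which follows from the space regularity in \eqref{2.regular} combined with the $L^2([\tau,T],H^1)$ control on $\Dt u$ (hence $u\in C^\gamma$ in time as well, via the approximating sequence, exactly as in \eqref{2.conv}(4)) — to find $\delta>0$ and a boundary neighborhood $V\subset\Gamma$ so that $|u(t,x)|\le 1-\eta$ for some $\eta>0$ on $(t_0-\delta,t_0+\delta)\times V$. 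On this set the nonlinearity $f$ is smooth and bounded, so the singular potential plays no role: the problem is locally a regular Cahn-Hilliard problem with dynamic boundary conditions.

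The key step is then to run a cut-off argument just as in the proof of Proposition \ref{Prop3.h2reg}, but now in a half-ball-type neighborhood meeting the boundary rather than an interior subdomain. Concretely, I would pick a smooth cut-off $\theta$ supported in $(t_0-\delta,t_0+\delta)\times U$ with $U\Supset V$, equal to $1$ on a smaller neighborhood, and apply it to the approximate solutions $u_N$. Because $|u_N|\le 1-\eta/2$ on the support of $\theta$ for $N$ large (by the strong convergence $u_N\to u$ in $C^\gamma$), the product $v_N=\theta u_N$ solves a linear elliptic boundary-value problem whose right-hand sides are controlled in $L^2(\Omega)\times L^2(\Gamma)$ uniformly in $N$ — here one uses \eqref{1.smdt}, \eqref{1.h1}, \eqref{1.good} to control $\tilde h_1,\tilde h_2$, the boundedness of $f_N(u_N)=f(u_N)$ on the cut-off support, and the control of the tangential derivatives. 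The $H^2$ elliptic regularity theorem (including up to the boundary, as in \cite{MZ2}) then gives a uniform $H^2$-bound for $u_N$ near $(t_0,x_0)$, which passes to the limit. Having $u\in H^2$ up to the boundary near $(t_0,x_0)$ means $\partial_n u\big|_\Gamma$ is a genuine $H^{1/2}$ trace; since $\partial_n u_N\big|_\Gamma$ converges to this trace and also (by \eqref{2.ext}) converges weakly to $[\partial_n u]_{ext}$, while \eqref{2.r}–\eqref{2.trace} identify the $W^{2,1}$-trace as $[\partial_n u]_{int}$, all three coincide locally, which is \eqref{3.equiv}.

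The final assertion is then routine: if \eqref{3.breg} holds, then for almost every $(t,x)\in\R^+\times\Gamma$ with $t>0$ we have $|u(t,x)|<1$, and covering $\R^+\times\Gamma$ (modulo a null set) by the neighborhoods just constructed gives $[\partial_n u]_{int}=[\partial_n u]_{ext}$ almost everywhere; plugging this into \eqref{2.bou} shows the dynamic boundary condition \eqref{1.dyn} is satisfied in the usual sense, and we already know from \eqref{2.di}–\eqref{2.c} that the interior equations hold in the usual sense, so $u$ is a bona fide solution of \eqref{2.reg}.

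I expect the main obstacle to be handling the time variable cleanly in the localization: the estimates in Section \ref{s1} are stated pointwise in $t$ together with integrated control of $\Dt u$, so one must be careful that the cut-off $\theta(t,x)$ does not destroy the uniform bounds — in particular the term $\partial_t\theta\cdot u_N$ must be absorbed using the $L^2_t H^{-1}_x\cap L^2_t L^2_\Gamma$ bound on $\Dt u_N$ from \eqref{1.dtest}, and one should phrase the elliptic regularity for almost every fixed $t$ in the cut-off window, then integrate in $t$. A secondary subtlety is that the convergence $\partial_n u_N\big|_\Gamma\to[\partial_n u]_{ext}$ is only weak-$*$ in $L^\infty_t L^2_\Gamma$, so the identification with the $H^2$-trace of the limit requires that on the localized set the $H^2$-bound upgrades this to, say, weak convergence in $L^2_t H^{1/2}_\Gamma$, which is where the uniform $H^2$ estimate really earns its keep.
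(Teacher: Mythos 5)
Your proposal is correct and follows essentially the same route as the paper: Hölder continuity of $u$ in $(t,x)$ gives a space--time neighborhood where $|u|\le 1-\delta$, a uniform-in-$N$ localized $H^2$ estimate for the approximations $u_N$ then lets you pass to the limit in $\partial_n u_N\big|_\Gamma$ and identify it with both $[\partial_n u]_{ext}$ and the trace of the $H^2$ limit, and \eqref{3.breg} plus a covering argument yields the global statement. The only difference is one of packaging: the paper simply invokes Proposition \ref{Prop3.h2reg}, whose cut-off is purely spatial and applied at each fixed time $t$ in the window (with $\Dt u(t)$ absorbed into $\tilde h_1,\tilde h_2$ via \eqref{1.smdt} and \eqref{1.good}), so the $\partial_t\theta$ term you worry about never arises.
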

\begin{proof} Since the solution $u$ is H\"older
continuous with respect to $t$ and $x$, there exists $\delta>0$ such that the inequality
$$
|u(t,x)|\le 1-\delta
$$
holds for all $(t,x)$ belonging to some neighborhood
$(t_0-\delta,t_0+\delta)\times V_\delta$ of $(t_0,x_0)$ in
$\R\times\Omega$. According to Proposition \ref{Prop3.h2reg}, the
sequence $u_N$ of approximate solutions (converging to the variational solution $u$) satisfies
$$
\|u_N\|_{L^\infty([t_0-\delta,t_0+\delta], H^{2}( V_\delta))}\le C,
$$
where the constant $C$ is independent of $N$. Consequently,
we can assume, without loss of generality, that $u_N\to u$
weakly-star
in this space. Thus,
$$
\partial_n u_N\big|_{\Gamma}\to\partial_n u\big|_{\Gamma}
$$
weakly in $L^2([t_0-\delta,t_0+\delta]\times V)$ (for a proper
choice of the small neighborhood $V$ of $x_0$). This convergence,
together with the definition \eqref{2.ext} of the function
$[\partial_n u]_{ext}$, give the desired equality \eqref{3.equiv}.
Thus, the first part of the statement is proved and the second one
is an immediate consequence of the first one, which finishes the proof of Corollary
\ref{Cor3.reg}.
\end{proof}

Thus, in order to prove that any variational solution $u$ of
problem \eqref{2.reg} satisfies the equations in the usual
sense, it is sufficient to check \eqref{3.breg}.
The next corollary shows that this will be the case if the
nonlinearity $f(u)$ has sufficiently strong singularities at
$\pm1$.

\begin{corollary}\label{Cor3.strong} Let the assumptions of
Theorem \ref{Th2.unique} hold and let, in addition, the potential
$F(u)$ be such that
%$$
\begin{equation}\label{3.fsing}
\lim_{u\to\pm1}F(u)=\infty.
\end{equation}
%$$
Then, for every variational solution $u$ of problem \eqref{2.reg},
$$
F(u(t))\in L^1(\Gamma) \ \text{ and }\
\|F(u(t))\|_{L^1(\Gamma)}\le C_T
$$
 for almost all $t\ge T>0$ and
condition \eqref{3.breg} holds.
\end{corollary}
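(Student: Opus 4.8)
The plan is to control the $L^1(\Gamma)$-norm of $F(u(t))$ uniformly along the approximating sequence $u_N$ and then pass to the limit, after which condition \eqref{3.breg} will follow from the fact that $F(u)$ being integrable on $\Gamma$ forces $|u|<1$ almost everywhere there. First I would test the dynamic boundary equation \eqref{1.dyn} for the approximate problem with a suitable function. The natural candidate is $G_N'(\psi_N)$-type multiplier, but since $G$ is regular, the right object to bring in the potential on the boundary is to multiply the boundary equation by $f_N(\psi_N)$ (or by $f_N(u_N)\big|_\Gamma$), which is legitimate for the smooth approximate solutions. Multiplying $\Dt\psi_N-\Delta_\Gamma\psi_N+g(\psi_N)+\partial_n u_N=h_2$ by $f_N(\psi_N)$ and integrating over $\Gamma$ produces: a term $\frac{d}{dt}(F_N(\psi_N),1)_\Gamma$ from the time derivative; a nonnegative term $(f_N'(\psi_N)|\nabla_\Gamma\psi_N|^2,1)_\Gamma\ge0$ from the Laplace–Beltrami part (using $f_N'\ge0$ from \eqref{A.2}(3)); the term $(g(\psi_N),f_N(\psi_N))_\Gamma$; the boundary coupling term $(\partial_n u_N,f_N(\psi_N))_\Gamma$; and $(h_2,f_N(\psi_N))_\Gamma$. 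The crucial point is that $(\partial_n u_N,f_N(u_N))_\Gamma$ is exactly the boundary contribution obtained when one multiplies the interior equation $\mu_N=-\Dx u_N+\tilde f_N(u_N)+h_1$ by $f_N(u_N)$ and integrates by parts over $\Omega$, so these two computations should be combined.

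Concretely, I would multiply the interior equation for $\mu_N$ by $f_N(u_N)$, integrate over $\Omega$, integrate by parts in the $\Dx u_N$ term, and add the boundary identity above. The integration by parts gives $(\nabla_x u_N, \nabla_x f_N(u_N))_\Omega = (f_N'(u_N)|\nabla_x u_N|^2,1)_\Omega\ge0$ plus the boundary term $(\partial_n u_N, f_N(u_N))_\Gamma$ which cancels against the coupling term coming from the dynamic condition. One is then left with a differential inequality of the form
\begin{equation*}
\frac{d}{dt}(F_N(\psi_N(t)),1)_\Gamma + (\text{nonnegative terms}) \le (f_N(u_N),\mu_N)_\Omega + (\text{lower-order terms involving } g, h_1, h_2).
\end{equation*}
The term $(f_N(u_N),\mu_N)_\Omega$ and the terms involving $\Dt u_N$, $h_1$, $h_2$ on $\Omega$ must be estimated using the uniform a priori bounds already established in Section~\ref{s1}, in particular \eqref{1.dis} (which controls $\|f_N(u_N)\|_{L^1(\Omega)}$ and $\int_t^{t+1}\|\Dt u_N\|^2_{H^{-1}(\Omega)}ds$), \eqref{1.mumu} (which controls $|\langle\mu_N\rangle|$), and the $H^2(\Gamma)$-control of $u_N$ from Theorem~\ref{Th1.mainest} which makes $\partial_n u_N\big|_\Gamma$ bounded in $L^2(\Gamma)$. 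Since $g$ is bounded on $[-1,1]$ and $f_N$ is sign-definite, $(g(\psi_N),f_N(\psi_N))_\Gamma$ can be absorbed using Young's inequality against $(|f_N(\psi_N)|,1)_\Gamma$, which is in turn controlled by the already-available $L^1(\Omega)$ bound on $f_N(u_N)$ combined with the $H^2(\Gamma)\hookrightarrow L^\infty(\Gamma)$ estimate (or directly via the interior estimate transported to the boundary). Integrating the differential inequality from $T/2$ to $t$ and using the dissipative decay of the right-hand side yields $(F_N(\psi_N(t)),1)_\Gamma\le C_T$ uniformly in $N$ for $t\ge T>0$.

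Having the uniform bound $\|F_N(u_N(t))\|_{L^1(\Gamma)}\le C_T$, I would pass to the limit $N\to\infty$: by the trace regularity in \eqref{2.regular} the sequence $u_N\big|_\Gamma$ converges to $\psi=u\big|_\Gamma$ (at least along a subsequence, and a.e.\ on $\Gamma$ after using $H^2(\Gamma)$-compactness), so $F_N(u_N)\to F(u)$ a.e.\ on $\Gamma$, and since $F_N\le F$ on $(-1,1)$ (the linear continuation of $f$ outside $1-1/N$ gives a smaller potential, cf.\ the argument in \eqref{2.f}), Fatou's lemma gives $\|F(u(t))\|_{L^1(\Gamma)}\le \liminf_N\|F_N(u_N(t))\|_{L^1(\Gamma)}\le C_T$ for a.e.\ $t\ge T$. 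Finally, condition \eqref{3.fsing}, namely $F(u)\to\infty$ as $u\to\pm1$, implies that the set $\{x\in\Gamma: |u(t,x)|=1\}$ must have zero measure (otherwise $\|F(u(t))\|_{L^1(\Gamma)}=\infty$), which is precisely \eqref{3.breg}; Corollary~\ref{Cor3.reg} then applies.

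The main obstacle I anticipate is the honest bookkeeping in the combined multiplication step: one must verify that multiplying the approximate system by $f_N(u_N)$ is justified (it is, since $u_N$ is a smooth solution of the regular problem and $f_N$ is Lipschitz, but one should note it), that the boundary coupling term $(\partial_n u_N, f_N(u_N))_\Gamma$ indeed cancels exactly between the two integrations by parts, and — most delicately — that the nonlocal term $(f_N(u_N),\langle\mu_N\rangle)_\Omega = \langle\mu_N\rangle(f_N(u_N),1)_\Omega$ is controlled, which requires combining \eqref{1.mumu} with the $L^1(\Omega)$-bound on $f_N(u_N)$ from \eqref{1.dis} and then using Young's inequality carefully so as not to lose the sign-definite good terms. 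Everything else is a routine application of the Section~\ref{s1} estimates together with Gronwall and Fatou.
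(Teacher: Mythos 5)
Your overall architecture (uniform bound on $\|F_N(u_N(t))\|_{L^1(\Gamma)}$, then a.e.\ convergence plus Fatou, then the measure argument for \eqref{3.breg}) matches the paper, but the way you propose to obtain the uniform boundary bound has a genuine gap. Multiplying the system by $f_N(u_N)$ and combining the interior and boundary identities is exactly the computation the paper performs in the proof of Proposition \ref{Prop3.boundary} (see \eqref{3.est7}), and there it only closes because the sign condition \eqref{3.bsign} turns the boundary term $(g(\psi_N)-h_2,f_N(\psi_N))_\Gamma$ into a \emph{good} term via \eqref{3.gf}. In the setting of Corollary \ref{Cor3.strong} no such condition is assumed, and your plan to absorb this term by controlling $(|f_N(\psi_N)|,1)_\Gamma$ ``from the $L^1(\Omega)$ bound on $f_N(u_N)$ combined with $H^2(\Gamma)\hookrightarrow L^\infty(\Gamma)$'' does not work: an $L^1(\Omega)$ bound gives no control whatsoever of the trace of $f_N(u_N)$ on $\Gamma$, and the uniform $L^\infty(\Gamma)$ bound on $u_N$ does not separate $u_N$ from $\pm1$ on the boundary, so $f_N(u_N)\vert_\Gamma$ may be as large as $f(1-1/N)\to\infty$ there. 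Indeed, obtaining an $L^1(\Gamma)$ bound on $f_N(u_N)$ is precisely the content of Proposition \ref{Prop3.boundary} and is in general false without \eqref{3.bsign} (cf.\ Example \ref{ExA.?} and Remark \ref{Rem3.negative}); your differential inequality therefore has an uncontrollable right-hand side.

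The paper's proof takes a different multiplier. It applies estimate \eqref{A.mest} of Theorem \ref{ThA.1} to the elliptic reformulation \eqref{1.ell} at fixed time $t$: in local coordinates near the boundary one multiplies the equation by $\phi\,\partial_{y_1}u$ (a localized \emph{normal derivative}), so that $f_N(u)\partial_{y_1}u=\partial_{y_1}F_N(u)$ and the principal second-order term becomes $\tfrac12\partial_{y_1}(a_{11}|\partial_{y_1}u|^2)$; integrating in the normal direction yields the first-integral relation \eqref{A.strgange}, i.e.\ $\|F_N(u_N)\|_{L^1(\Gamma)}\le C\|\partial_n u_N\|^2_{L^2(\Gamma)}+C(1+\|\tilde h_1\|^2_{L^2(\Omega)}+\|\tilde h_2\|^2_{L^2(\Gamma)})$, and $\|\partial_n u_N\|_{L^2(\Gamma)}$ is in turn controlled from the boundary equation together with the $H^2(\Gamma)$ bound \eqref{A.2g}. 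This is the 1D energy identity $\tfrac12|y'|^2-F(y)=\mathrm{const}$ of Example \ref{ExA.?} in disguise, and it is what lets one bound $F_N$ on $\Gamma$ without ever bounding $f_N$ on $\Gamma$. If you want to repair your argument, replace the multiplier $f_N(u_N)$ by this normal-derivative multiplier (or simply invoke \eqref{A.mest} directly, as the paper does); the remainder of your proof --- the measure estimate giving \eqref{3.breg} and the Fatou passage to the limit --- is then correct and agrees with the paper.
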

\begin{proof} Let $u_N$ be a sequence of approximate
solutions converging to the variational solution $u$.
Applying estimate \eqref{A.mest} in Appendix 1 to the elliptic problem
\eqref{1.ell}, we infer that
%$$
\begin{equation}\label{3.flim}
\|F_N(u_N(t))\|_{L^1(\Gamma)}\le C_T,\ \ t\ge T,
\end{equation}
%$$
where the constant $C_T$ is independent of $N$. Using
assumption \eqref{3.fsing} and arguing as in the proof of Theorem
\ref{Th2.main}, we see that condition \eqref{3.breg} indeed holds.
Then, owing to the convergence $u_N\to u$ in $C^\gamma([0,T]\times \Omega )$, $\gamma >0$, we conclude
that $F_N(u_N)\to F(u)$ almost everywhere in $\R^+\times\Gamma$.
The Fatou lemma finally yields that $F(u(t))\in L^1(\Gamma)$, which
finishes the proof of the corollary.
\end{proof}

In particular, condition \eqref{3.breg} is satisfied if the
nonlinearity $f$ is of the form
%$$
\begin{equation}\label{3.strong}
f(u)\sim \frac {u}{(1-u^2)^p}
\end{equation}
%$$
with $p>1$. Unfortunately, the assumptions of Corollary
\ref{Cor3.strong} are violated in the physically most relevant
case of a logarithmic potential,
%$$
\begin{equation}\label{3.log}
f(u)=\ln \frac{1+u}{1-u}.
\end{equation}
%$$
Furthermore, as explained in Appendix 1, in that case, the
variational solution $u$ may indeed not be a solution in the usual
sense (even in the 1D stationary case). However, the next
proposition gives another type of sufficient condition (in terms
of the nonlinearity $g$ and the boundary external forces $h_2$)
which guarantees the equality $[\partial_n
u]_{ext}=[\partial_nu]_{int}$ and holds for the logarithmic potential
\eqref{3.log}.

\begin{proposition}\label{Prop3.boundary} Let the assumptions of
Theorem \ref{Th2.unique} hold and let, in addition, the
following inequalities hold:
%$$
\begin{equation}\label{3.bsign}
g(-1)+\eb\le h_2(x)\le g(1)-\eb,\ \ x\in\Gamma,
\end{equation}
%$$
for some $\eb>0$. Then, condition \eqref{3.breg} holds and
%$$
\begin{equation}\label{3.bfint}
\|f(u)\|_{L^1([t,t+1]\times\Gamma)} \le C_{\eb,T},\ \ t\ge T>0,
\end{equation}
%$$
where the constant $C_{\eb,T}$ is independent of the concrete
choice of the variational solution $u$. In particular, every
variational solution of \eqref{2.reg} solves this system in the
usual sense.
\end{proposition}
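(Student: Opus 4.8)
The plan is to test the variational inequality \eqref{2.varadd} with a suitable admissible function $w$ that is a small perturbation of $u$ near the boundary, in order to extract an $L^1$-in-space-time bound on $f(u)$ restricted to $\Gamma$, and then to deduce \eqref{3.breg} as a consequence of such a bound (exactly as $L^1$-control of $f_N(u_N)$ in $\Omega$ was used in the proof of Theorem \ref{Th2.main}). Concretely, I would work at the approximate level: for the sequence $u_N$ converging to $u$, multiply the boundary equation \eqref{1.dyn} by a test function and use the sign condition \eqref{3.bsign} to control $g(\psi_N)-h_2$. The key elementary observation is that, since $g\in C^2([-1,1])$ with $g(\pm1)$ finite and $h_2$ is bounded strictly between $g(-1)$ and $g(1)$, there is a bounded measurable function $z_N$ on $\Gamma$ with $-1+\kappa\le z_N\le 1-\kappa$ for some $\kappa>0$ (depending on $\eb$) such that $g(z_N)=h_2$; then $(g(\psi_N)-h_2)(\psi_N-z_N)=(g(\psi_N)-g(z_N))(\psi_N-z_N)\ge \alpha\, g'$-type lower bounds, and more importantly one gets a one-sided control of $f_N(\psi_N)$ through the monotonicity $f_N'\ge 0$ combined with the structure of \eqref{1.ell} on $\Gamma$.

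The core step is therefore the following: rewrite the boundary equation of \eqref{1.ell} as
\begin{equation}
\Delta_\Gamma u_N(t)-\partial_n u_N(t)=-h_2+g(u_N(t))+\Dt u_N(t)+u_N(t),
\end{equation}
and test with $\psi_N(t)-z_N$ (which has the right regularity, $f(z_N)\in L^1(\Gamma)$ being automatic by $|z_N|\le 1-\kappa$). Integrating over $\Gamma$ and over $[t,t+1]$ in time, the Laplace--Beltrami term gives a nonnegative contribution $\int|\nabla_\Gamma\psi_N|^2$ up to a controlled remainder, the $\Dt$-term is absorbed using \eqref{1.dis} and \eqref{1.smdt}, the $\partial_n u_N$ term is controlled by the interior estimate $\|u_N\|_{H^2(\Omega_\eb)}$ and the tangential-gradient estimate from Theorem \ref{Th1.mainest} together with the $L^1$-trace bound \eqref{1.trace}, and the decisive term $(g(\psi_N)-h_2,\psi_N-z_N)_\Gamma$ is bounded below by $c\,\|f_N(\psi_N)\|_{L^1(\Gamma)}-C$ by the same convexity/monotonicity argument as in \eqref{1.imp} (here using that $h_2$ is strictly interior, so $\psi_N-z_N$ has the correct sign relative to the singular growth of $g$ near $\pm1$ only if $g$ itself were singular — but $g$ is regular, so the singular term must be fed back from the interior nonlinearity $f_N$ via the first line of \eqref{1.ell}). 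I would thus instead first obtain from \eqref{1.ell} in $\Omega$, tested against $\theta(x)(u_N(t)-\tilde z_N)$ with $\theta$ a cutoff localizing near $\Gamma$ and $\tilde z_N$ an interior-valued extension, a bound on $\|f_N(u_N(t))\|_{L^1(\Omega\cap\{d(x,\Gamma)<\eb\})}$, and then pass to the trace.

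The honest way to package this is probably via the second estimate \eqref{A.mest} of Appendix~1 applied to the elliptic problem \eqref{1.ell}, which already produced the boundary bound \eqref{3.flim}: the claim \eqref{3.bfint} should follow by combining that Appendix~1 estimate with the sign condition \eqref{3.bsign}, which ensures that the forcing $\tilde h_2(t)=-h_2+g_0(u_N(t))+\Dt u_N(t)$ keeps the relevant barrier arguments on the correct side of $\pm1$. Once $\|f_N(\psi_N)\|_{L^1([t,t+1]\times\Gamma)}$ is bounded uniformly in $N$, condition \eqref{3.breg} follows: as in \eqref{2.mn}--\eqref{2.f}, the measure of the set where $|\psi_N|>1-1/N$ on $[t,t+1]\times\Gamma$ is $O(\varphi(1/N))\to 0$, hence $|u(t,x)|<1$ a.e.\ on $\R^+\times\Gamma$; and the Fatou lemma applied to $f_N(\psi_N)\to f(\psi)$ (a.e., from the $C^\gamma$ convergence in \eqref{2.conv}(4)) gives \eqref{3.bfint}. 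Then Corollary \ref{Cor3.reg} immediately yields that every variational solution solves \eqref{2.reg} in the usual sense.

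\textbf{Main obstacle.} The delicate point is the uniform (in $N$) lower bound on the boundary term $(g(\psi_N)-h_2,\psi_N-z_N)_\Gamma$ in terms of $\|f_N(\psi_N)\|_{L^1(\Gamma)}$: since $g$ is regular, this bound cannot come from $g$ alone and must be transferred from the interior singular term $f_N(u_N)$ through the elliptic coupling in \eqref{1.ell}, which is precisely where the normal derivative $\partial_n u_N$ enters and where — without the sign condition — the $L^1$ rather than $L^2$ character of $\partial_n u_N$ would break the argument. Making the transfer rigorous, i.e.\ showing that \eqref{3.bsign} forces a two-sided barrier $|u_N(t,x)|\le 1-\kappa$ on $\Gamma$ with $\kappa$ independent of $N$ for a.e.\ $t\ge T$ — equivalently, that the boundary cannot reach the pure states — is the heart of the matter, and I expect it to rest on a De Giorgi / barrier comparison for the elliptic system \eqref{3.smooth} with the strictly interior Dirichlet-type data furnished by $h_2$, carried out in Appendix~1.
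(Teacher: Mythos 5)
Your overall skeleton is right (work with the approximations $u_N$, obtain a uniform-in-$N$ bound on $\|f_N(u_N)\|_{L^1([t,t+1]\times\Gamma)}$, deduce \eqref{3.breg} by the measure argument of \eqref{2.mn}--\eqref{2.f} plus Fatou, and conclude via Corollary \ref{Cor3.reg}), and you correctly identify the crux: how does the sign condition \eqref{3.bsign} on the \emph{regular} nonlinearity $g$ produce an $L^1$ bound on the \emph{singular} nonlinearity $f_N$ at the boundary? But none of the mechanisms you propose actually closes this step. Testing the boundary equation with $\psi_N-z_N$ (where $g(z_N)=h_2$) gives, as you yourself note, only information about $g$, not $f_N$; testing the interior equation with $\theta(u_N-\tilde z_N)$ reproduces the $L^1(\Omega)$ bound on $f_N$ near the boundary but does not pass to a trace bound on $\Gamma$ ($L^1$ functions have no trace); and the Appendix~1 estimate \eqref{A.mest} controls $F_N(u_N)$ on $\Gamma$, not $f_N(u_N)$, and makes no use of \eqref{3.bsign} at all. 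Your closing guess that the argument rests on a De Giorgi/barrier comparison yielding a uniform two-sided bound $|u_N|\le 1-\kappa$ on $\Gamma$ is not what happens (and would be a far stronger conclusion, of the type that requires the hypotheses \eqref{3.ssing} of Theorem \ref{Th3.main}).

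The missing idea is to multiply the \emph{interior} elliptic equation $\Dx u_N-f_N(u_N)-u_N=\tilde h_1$ of \eqref{3.ep} by $f_N(u_N)$ itself and integrate by parts. The volume terms are all good: $(f_N'(u_N)\Nx u_N,\Nx u_N)_\Omega\ge0$ and $\tfrac12\|f_N(u_N)\|^2_{L^2(\Omega)}$ absorbs $\tilde h_1$. The boundary term $\int_\Gamma \partial_n u_N\, f_N(\psi_N)\,dS$ is then rewritten using the dynamic boundary condition \eqref{1.dyn}: the piece $\Delta_\Gamma\psi_N\, f_N(\psi_N)$ integrates to $-(f_N'(\psi_N)\nabla_\Gamma\psi_N,\nabla_\Gamma\psi_N)_\Gamma\le0$ on the correct side, the piece $\Dt\psi_N\, f_N(\psi_N)$ is the exact derivative $\frac{d}{dt}\int_\Gamma F_N(\psi_N)\,dS$ (whose endpoint values are controlled by \eqref{A.mest}), and the decisive piece is $(g(\psi_N)-h_2)f_N(\psi_N)$, which after a suitable extension of $g$ outside $[-1,1]$ satisfies the pointwise inequality $(g(z)-h_2(x))f_N(z)\ge\tfrac\eb2|f_N(z)|+C_\eb$ uniformly in $N$ --- this is \eqref{3.gf}, and it is precisely where \eqref{3.bsign} enters: $f_N(z)$ is large positive only where $z$ is near $+1$, where $g(z)-h_2\ge\eb$, and symmetrically at $-1$. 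This yields the differential inequality \eqref{3.est7} and hence \eqref{3.bfint} after integration in time. Without this choice of multiplier the sign information in \eqref{3.bsign} never couples to $f_N$, so your proof as written does not go through.
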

\begin{proof} As above, it is sufficient to derive the uniform (with respect to $N\to\infty$) estimate
\eqref{3.bfint} for the approximate solution $u_N$ of
\eqref{1.req}. In order to do so, we rewrite the system in the
elliptic-parabolic form
%$$
\begin{equation}\label{3.ep}
\begin{cases}
\Dx u_N(t)-f_N(u_N(t))-u_N(t)=\tilde h_1(t),\\
u_N\big|_{\Gamma}=\psi _N,\\
\Dt \psi _N-\Delta_\Gamma \psi _N+\partial_nu_N+g(\psi _N)=h_2.
\end{cases}
\end{equation}
Furthermore, since only the values of $g$ on the segment $[-1,1]$ are
important for the limit problem, we can assume, without loss of
generality, that
$$
g(-u)+\eb\le h_2(x)\le g(u)-\eb,\ \ u\in\R,\ \ |u|\ge1,\ \ x\in \Gamma.
$$
It follows from these inequalities and the continuity
of $g$ that
%$$
\begin{equation}\label{3.gf}
(g(z)-h_2(x)).f_N(z)\ge \frac\eb2 |f_N(z)|+C_\eb,\ \ z\in\R,\ \
x\in\Gamma,
\end{equation}
%$$
where the constant $C_\eb$ depends on $\eb$ and $g$, but is independent of
$N$.
\par
We now multiply the first equation of \eqref{3.ep} by $f_N(u_N)$
and integrate with respect to $x$. Then,  integrating by parts and
using estimate \eqref{3.gf}, we find
%$$
\begin{multline}\label{3.est7}
\frac d{dt}\int _\Gamma F_N(u_N(t))\, dS+(f'_N(u_N(t))\Nx u_N(t),\Nx
u_N(t))_\Omega+\\+(f'_N(u_N(t))\nabla_\Gamma u_N(t),\nabla_\Gamma
u_N(t))_\Gamma+\\+
1/2\|f_N(u_N(t))\|^2_{L^2(\Omega)}+\eb/2\|f_N(u_N(t))\|_{L^1(\Gamma)}\le
C(1+\|\tilde h_1(t)\|^2_{L^2(\Omega)}).
\end{multline}
%$$
Integrating this inequality with respect to time and using the facts that $f'_N\ge0$ and
the $L^2$-norm of $\tilde h_1(t)$ is controlled (see
\eqref{1.dtest} and \eqref{1.good}), we obtain
%$$
\begin{equation}\label{3.fF}
\|f_N(u_N)\|_{L^1([t,t+1]\times\Gamma)}\le \frac
2\eb(\|F_N(u_N(t))\|_{L^1(\Gamma)}+\|F_N(u_N(t+1))\|_{L^1(\Gamma)})+C_{\eb,T}.
\end{equation}
%$$
There only remains to note that the right-hand side of \eqref{3.fF}
is controlled, owing to estimate \eqref{A.mest} (exactly as in
the proof of \eqref{3.flim}). Therefore, \eqref{3.fF} gives
uniform bounds on the $L^1$-norm of $f_N(u_N)$ on the boundary.
Passing to the limit $N\to\infty$ now gives the statement of the
proposition (exactly as in the proof of Corollary \ref{Cor3.strong}).
\end{proof}

\begin{remark}\label{Rem3.negative} As already mentioned,
the variational solution $u$ may not solve equations \eqref{2.reg}
in the usual sense if conditions \eqref{3.fsing} and \eqref{3.bsign}
are violated (see Example \ref{ExA.?} for details). Furthermore,
arguing as in this example, it is not difficult to show that,
for any singular nonlinearity $f$ which does not satisfy
\eqref{3.fsing}, there exist "nonusual" variational
solutions of problem \eqref{2.reg} if the external forces $h_2$
are large enough.
\end{remark}

We conclude this section by establishing that every solution $u$ of
problem \eqref{2.reg} is separated from the singularities $\pm1$ if
the nonlinearity $f$ is singular enough. To this end, we need to require {\it at
least} condition \eqref{3.fsing} to be satisfied (see again
Example \ref{ExA.?}). Actually, we will require slightly more, namely,
that the nonlinearity $f$ satisfies the following inequalities:
%$$
\begin{equation}\label{3.ssing}
\frac{\kappa_1}{(1-u^2)^{p-1}}\le \frac{f(u)}{u}\le \frac{\kappa_2}{(1-u^2)^{M}}
\end{equation}
%$$
for some positive constants $\kappa_i$, $i=1,2$, and $M$ and where $p>2$ (recall that condition \eqref{3.fsing}
 is violated if $p<2$, so that the sufficient condition \eqref{3.ssing} is close to the necessary one). In addition,
we assume more regularity on the external forces $h_1$ and
$h_2$, namely,
%$$
\begin{equation}\label{3.boun}
h_1\in L^3(\Omega),\  \ h_2\in L^\infty(\Gamma).
\end{equation}
%$$

\begin{theorem}\label{Th3.main} Let the assumptions of Theorem
\ref{Th2.unique} hold and let, in addition, \eqref{3.ssing} and
\eqref{3.boun} be satisfied. Then, every variational solution $u$
of problem \eqref{2.reg} is separated from the singularities
$\pm1$, namely, the following estimate holds:
%$$
\begin{equation}\label{3.separated}
|u(t,x)|\le 1-\delta_T,\ \ t\ge T>0,\ \ x\in\Omega,
\end{equation}
%$$
where the constant $\delta_T$ depends on $T$, but is independent
of $u$, $t$ and $x$.
\end{theorem}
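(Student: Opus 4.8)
The plan is to adapt the Moser/De Giorgi iteration scheme to the elliptic--parabolic form \eqref{3.ep} of the approximate problem, deriving a bound on $\|u_N\|_{L^\infty}$ that is uniform in $N$ and, for $t\ge T>0$, dissipative; passing to the limit $N\to\infty$ via the $C^\gamma$-convergence established in Theorem \ref{Th2.main} then yields \eqref{3.separated}. Working with $u=u_N$, I would first record that, by Proposition \ref{Prop3.boundary} (whose hypothesis \eqref{3.bsign} is implied, say after harmless modification of $g$ outside $[-1,1]$, or which we may simply assume is part of the regime \eqref{3.ssing}), the trace $[\partial_n u]_{\rm int}=[\partial_n u]_{\rm ext}$, so the boundary relation in \eqref{3.ep} is genuine; in any case \eqref{3.bfint} gives a uniform $L^1([t,t+1]\times\Gamma)$ bound on $f(u)$, which is what the boundary part of the iteration will consume.

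The core is an energy inequality for truncated powers of $u$. Fix a level $k\in(0,1)$ close to $1$ and a large exponent $q$, and test the elliptic equation $\Dx u-f_N(u)-u=\tilde h_1$ (for each fixed $t$) against $(u-k)_+^{q-1}$, and simultaneously test the boundary equation $\Dt\psi-\Delta_\Gamma\psi+\partial_n u+g(\psi)=h_2$ against $(\psi-k)_+^{q-1}$. Adding, the $\partial_n u$ terms cancel against the interior integration by parts, producing
\begin{equation*}
\frac1q\frac{d}{dt}\|(\psi-k)_+\|_{L^q(\Gamma)}^q
+c\,\|\nabla_x(u-k)_+^{q/2}\|_{L^2(\Omega)}^2
+c\,\|\nabla_\Gamma(\psi-k)_+^{q/2}\|_{L^2(\Gamma)}^2
+\int_\Omega f_N(u)(u-k)_+^{q-1}
\le \int_\Omega|\tilde h_1|(u-k)_+^{q-1}+\int_\Gamma|R|(\psi-k)_+^{q-1},
\end{equation*}
where $R$ collects $h_2-g(\psi)$. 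The left-hand lower-order term is where \eqref{3.ssing} enters: on $\{u>k\}$ one has $f_N(u)\ge \kappa_1 u/(1-u^2)^{p-1}\ge c_k(u-k)_+$ with $c_k=c\,(1-k^2)^{-(p-1)}\to\infty$, which both absorbs the $\|u\|$ coming from the $-u$ term and, crucially, supplies a large coercive constant growing like $(1-k^2)^{-(p-1)}$. Using $\tilde h_1\in L^2(\Omega)$ uniformly in $N$ (from \eqref{1.good}, \eqref{1.dtest}) and $h_2\in L^\infty(\Gamma)$ together with the uniform $L^1(\Gamma)$ bound on $f_N(\psi)$ to control $R$, plus the Sobolev embeddings $H^1(\Omega)\hookrightarrow L^6(\Omega)$ and $H^1(\Gamma)\hookrightarrow L^q(\Gamma)$ for all $q<\infty$ on the $2$-dimensional boundary, one sets up the standard recursion on $k_n=1-2^{-n}\delta_0$ and $q_n\to\infty$: the measure of the level set $\{u>k_n\}$ (or its $L^{q_n}$ super-level energy) decays super-geometrically provided $\delta_0$ is chosen small, i.e. $1-k_0^2$ small enough that the coercive constant beats the forcing. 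This forces $(u-k_\infty)_+\equiv 0$, giving $u\le 1-\delta_T$; the symmetric argument with $(u+k)_-$ and assumption \eqref{A.2}(2)-(4) gives the lower bound.

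The two genuinely delicate points are the following. First, the coupling between the stationary (in $x$) interior equation and the truly evolutionary boundary equation: the Gronwall-type step must be carried out on the boundary functional $\|(\psi-k)_+\|_{L^q(\Gamma)}$ while the interior gives no time derivative, so one integrates in time over $[T/2,t]$, invokes the smoothing estimate \eqref{1.smoothing}--\eqref{1.smdt} to bound initial data at $t=T/2$ uniformly in $N$, and must check that the constants produced by the Sobolev interpolation between $L^1$ and $H^1$ on $\Gamma$ do not grow too fast with $q$ — this is the classical $q$-bookkeeping in Moser iteration and is where $p>2$ (rather than $p>1$, which only gave \eqref{3.fsing}) is used to keep the geometric series convergent. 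Second, one must verify that the truncations $(u_N-k)_+^{q-1}$ are legitimate test functions for the regularized problem and that all integrations by parts are justified — this is routine since $u_N$ is smooth, but the bounds must be tracked to be $N$-independent, which they are because $f_N\le f$ on $(-k,k)$ for $N$ large and all appearances of $f_N$ above are of the correct sign. Once the $N$-uniform bound \eqref{3.c}-type estimate $|u_N(t,x)|\le 1-\delta_T$ is in hand for $t\ge T$, passing to the limit is immediate from the uniform $C^\gamma$-convergence in \eqref{2.conv}(4), completing the proof.
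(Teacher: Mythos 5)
Your plan is genuinely different from the paper's: the paper does not run a De Giorgi truncation scheme at all. It works with $\varphi(u)=\frac1{1-u^2}$, multiplies the elliptic equation by $u\varphi(u)^{n+1}$ and obtains, for every $n$, a differential inequality for $\|\varphi(u)\|^n_{L^n(\Gamma)}$ with coercive terms $\||\varphi(u)|^{n/2}\|^2_{H^1(\Omega)}$ and $\|\varphi(u)\|^{n+p}_{L^{n+p}(\Omega)}$ (Lemma \ref{Lem3.p-ell}); this yields $f(u)\in L^q([t,t+1]\times\Omega)$ for \emph{all finite} $q$, and the passage to $L^\infty$ is then made not by iteration but by elliptic regularity in $L^3$, the resulting $L^r_tW^{1,r}_x$ and $W^{1-\eb}_t$ bounds on $\varphi(u)$, and an anisotropic Sobolev embedding into $C([t,t+1]\times\Omega)$. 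A correct De Giorgi argument would be a legitimate alternative (Remark \ref{Rem3.ext} even alludes to a Moser route to $L^\infty$), but as written your scheme has a gap precisely at the step you flag as ``delicate'' and do not carry out: the absorption of the boundary forcing. The boundary equation supplies no zeroth-order coercive term with a large constant ($g$ is merely bounded and $h_2-g(\psi)$ is order one in $L^\infty(\Gamma)$); the only blow-up constant in your inequality, $c_k\sim(1-k^2)^{-(p-1)}$, lives in the interior, and it is large for \emph{every} $p>1$. If your recursion closed on the strength of that constant alone, the theorem would hold for all $p>1$ — but Example \ref{ExA.?} shows that for $p<2$ (where $F(\pm1)<\infty$) the solution can actually reach $\pm1$ on the boundary, so separation is false there. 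Hence the hypothesis $p>2$ must enter through the interior-to-boundary transfer (in the paper it appears via the trace inequality $\|\varphi(u)\|^{n+p/2}_{L^{n+p/2}(\Gamma)}\le C(\||\varphi(u)|^{n/2}\|^2_{H^1(\Omega)}+\|\varphi(u)\|^{n+p}_{L^{n+p}(\Omega)})$, which dominates the boundary term $\|\varphi(u)\|^{n+1}_{L^{n+1}(\Gamma)}$ only because $n+p/2>n+1$, and via the interpolation exponent $s>n+1$ for the interior forcing). Your sketch never exhibits the analogous mechanism, and the assertion that ``$\delta_0$ small enough that the coercive constant beats the forcing'' is exactly the claim that fails for $p\le2$; the $q$-bookkeeping you defer is therefore not routine but the heart of the matter.

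A secondary error: you open by invoking Proposition \ref{Prop3.boundary} to get \eqref{3.bfint} and the identification of the normal traces, claiming its hypothesis \eqref{3.bsign} ``is implied, say after harmless modification of $g$ outside $[-1,1]$.'' It is not: \eqref{3.bsign} constrains $g(\pm1)$ and $h_2$ on $\Gamma$, i.e.\ the given data at the endpoints of $[-1,1]$, and is neither assumed in Theorem \ref{Th3.main} nor a consequence of \eqref{3.ssing}. Fortunately this step is also unnecessary for your purposes, since all testing is done on the smooth approximations $u_N$ and only uniformity in $N$ is needed; but if your boundary iteration really ``consumes'' a uniform $L^1([t,t+1]\times\Gamma)$ bound on $f_N(u_N)$, you must produce it from \eqref{3.ssing} alone (e.g.\ via \eqref{A.mest} and \eqref{3.fsing}, which give $F_N(u_N)$ in $L^1(\Gamma)$ when $p\ge2$), not from an assumption you are not entitled to.
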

\begin{proof} We only give below the formal derivation of estimate
\eqref{3.separated} which can be justified as above by
approximating the solution $u$ by a sequence $u_N$ of solutions of
the regularized problem \eqref{1.req}. Our proof is based on the
following lemma.

\begin{lemma}\label{Lem3.p-ell} Let the assumptions of the theorem
hold and let $u(t)$ be a (variational) solution of problem
\eqref{2.reg}. Then, for every $q>0$, $f(u)\in L^q([t,t+1]\times\Omega )$ for all $t>0$
and the following estimate holds:
%$$
\begin{equation}\label{3.pest}
\|f(u)\|_{L^q([t,t+1]\times\Omega)}\le C_{T,q},\ \ t\ge T>0,
\end{equation}
%$$
where the constant $C_{T,q}$ is independent of $t$ and $u$.
\end{lemma}
\begin{proof} We rewrite system \eqref{2.reg} in the form of a
coupled elliptic-parabolic problem,
%$$
\begin{equation}\label{3.par-ell}
\begin{cases}
\Dx u-f(u)-u=\tilde h_1(t),\ \ u\big|_\Gamma=\psi ,\\
\Dt \psi -\Delta_\Gamma \psi +\partial_n u=\tilde h_2(t).
\end{cases}
\end{equation}
%$$
Then, owing to the regularity estimate \eqref{2.regular} on the
solution $u$ and conditions \eqref{3.boun}, we have
%$$
\begin{equation}\label{3.hreg}
\|\tilde h_1(t)\|_{L^3(\Omega)}+\|\tilde
h_2(t)\|_{L^\infty(\Gamma)}\le C_T,\ \ t\ge T.
\end{equation}
%$$
We introduce the function $\varphi(u):=\frac 1{1-u^2}$. Then,
$$
\varphi'(u)= \frac {2u}{(1-u^2)^2}= 2u \varphi(u)^2.
$$
We multiply the first equation of \eqref{3.par-ell} by
$u\varphi(u)^{n+1}$, where $n>1$ is an arbitrary fixed exponent,
and integrate with respect to $x\in\Omega$. Then, integrating by parts and  using the obvious
transformations
$$
(\Nx u,u\varphi'(u)|\varphi(u)|^n\Nx u)_\Omega= \frac12(|\Nx
u|^2,|\varphi'(u)|^2|\varphi(u)|^{n-2})_\Omega\ge
C_n\|\Nx(|\varphi(u)|^{n/2})\|^2_{L^2(\Omega)},
$$
$$
\int_0^u
v\varphi(v)^{n+1}\,dv=\frac12\int_0^u\varphi'(v)\varphi(v)^{n-1}\,dv=\frac1{2n}(|\varphi(v)|^n-1)
$$
and
$$
f(u)u\varphi(u)^{n+1}\ge {\kappa_1 \over 2}\varphi(u)^{n+p}-C_n
$$
(owing to the first inequality in \eqref{3.ssing}),
we have
%$$
\begin{multline}\label{3.huge}
\frac
d{dt}\|\varphi(u)\|^n_{L^n(\Gamma)}+2\kappa(\||\varphi(u)|^{n/2}\|_{H^1(\Omega)}^2+
\|\varphi(u)\|^{n+p}_{L^{n+p}(\Omega)})\le\\\le
C\|\tilde
h_1(t)\|_{L^3(\Omega)}\|\varphi(u)\|_{L^{3/2(n+1)}(\Omega)}^{n+1}+C\|\tilde
h_2(t)\|_{L^\infty(\Gamma)}\|\varphi(u)\|^{n+1}_{L^{n+1}(\Gamma)}+C
\le\\\le
C_T(\|\varphi(u)\|_{L^{3/2(n+1)}(\Omega)}^{n+1}+\|\varphi(u)\|^{n+1}_{L^{n+1}(\Gamma)}+1),
\end{multline}
%$$
where $\kappa>0$ and the constant $C_T$ depends on $n$, but is independent of $t\ge T>0$ and the
concrete choice of the solution $u$. Let us estimate the
right-hand side of \eqref{3.huge}. In order to estimate the
boundary term, we use the following trace inequality
$$
\|V\|_{L^{s+1}(\Gamma)}^{s+1}\le
C(\|V\|^2_{H^1(\Omega)}+\|V\|^{2s}_{L^{2s}(\Omega)})
$$
(which can be easily obtained by using a proper interpolation
inequality and Sobolev's embedding theorem). Using
this inequality with $V=|\varphi|^{n/2}$ and $s=1+p/n$, we find
$$
\|\varphi(u)\|^{n+p/2}_{L^{n+p/2}(\Gamma)}\le C(\||\varphi(u)|^{n/2}\|_{H^1(\Omega)}^2+
\|\varphi(u)\|^{n+p}_{L^{n+p}(\Omega)})
$$
and, therefore, since $p>2$, we can rewrite
\eqref{3.huge} without any boundary
term in the right-hand side,
%$$
\begin{multline}\label{3.huge1}
\frac
d{dt}\|\varphi(u)\|^n_{L^n(\Gamma)}+2\kappa'(\|\varphi(u)\|_{L^{3n}(\Omega)}^n+
\|\varphi(u)\|^{n+p}_{L^{n+p}(\Omega)})+\\+\kappa\|\varphi(u)\|_{L^{n+1}(\Gamma)}^{n+1}\le
C_T(\|\varphi(u)\|_{L^{3/2(n+1)}(\Omega)}^{n+1}+1)
\end{multline}
%$$
for some positive constants $\kappa'$ and $C_T$ which depend on $n$ and $T$, but
are independent of $t$ and $u$ (here, we have implicitly used the embedding $H^1\subset L^6$ and replaced
 the exponent $n+p/2$ by $n+1$ in the boundary term). In order to estimate the
right-hand side of this inequality, we use one more interpolation
inequality,
$$
\|\varphi(u)\|_{L^r(\Omega )}^s\le \|\varphi(u)\|_{L^{3n}(\Omega)}^{\theta
s}\|\varphi(u)\|_{L^{n+p}(\Omega)}^{(1-\theta)s}\le C(
\|\varphi(u)\|_{L^{3n}(\Omega)}^n+
\|\varphi(u)\|^{n+p}_{L^{n+p}(\Omega)}),
$$
where $s\in[0,1]$ and $q$ are such that
$$
\frac1r=\frac{\theta}{3n}+\frac{1-\theta}{n+p},\ \
\frac1s=\frac\theta n+\frac{1-\theta}{n+p}.
$$
Solving these equations for $r=3/2(n+1)$ and $n>p-1$, we have
$$
s=(n+1)\frac{2n-p}{2n-p-(p-2)}>n+1
$$
(since $p>2$). Thus,
$$
C_T\|\varphi(u)\|_{L^{3/2(n+1)}(\Omega)}^{n+1}\le \kappa'(\|\varphi(u)\|_{L^{3n}(\Omega)}^n+
\|\varphi(u)\|^{n+p}_{L^{n+p}(\Omega)})+C'_T
$$
and \eqref{3.huge1} yields, noting that $p>2$ and $L^{n+1}\subset L^n$,
%$$
\begin{equation}\label{3.huge2}
\frac
d{dt}\|\varphi(u(t))\|^n_{L^n(\Gamma)}+\kappa'
\|\varphi(u(t))\|^{n+2}_{L^{n+2}(\Omega)}+\kappa\|\varphi(u(t))\|_{L^{n}(\Gamma)}^{n+1}\le
C_T.
\end{equation}
%$$
Multiplying this inequality by $(t-T)^{n+1}$, we obtain
%$$
\begin{multline}\label{4.29}
\frac
d{dt}[(t-T)^{n+1}\|\varphi(u(t))\|^n_{L^n(\Gamma)}]+
\kappa'(t-T)^{n+1}\|\varphi(u(t))\|^{n+2}_{L^{n+2}(\Omega)}\le\\\le-
\kappa[(t-T)\|\varphi(u(t))\|_{L^n(\Gamma)}]^{n+1}+
C(n+1)[(t-T)\|\varphi(u(t))\|_{L^n(\Gamma)}]^n+\\+C_T(t-T)^{n+1}\le
C_{n,T}.
\end{multline}
%$$
Integrating \eqref{4.29} with respect to $t\in[T,T+2]$, we finally end up with
$$
\int_T^{T+2}(t-T)^{n+1}\|\varphi(u(t))\|^{n+2}_{L^{n+2}(\Omega)}\,dt\le
C'_{n,T}.
$$
Since $n$ is arbitrary, this last inequality, together with the second
inequality in \eqref{3.ssing}, finish the proof of
the lemma.
\end{proof}
It is now not difficult to finish the proof of the theorem. To
this end, we note that, owing to estimate \eqref{2.regular} and
Sobolev's embedding theorem,
%$$
\begin{equation}\label{3.3bound}
\|u(t)\|_{W^{2-1/3,3}(\Gamma)}\le C\|u(t)\|_{H^2(\Gamma)}\le C_T,\
\ t\ge T.
\end{equation}
%$$
On the other hand, owing to the lemma and the first equation of
\eqref{2.reg}, there holds
$$
\|\Dx u(t)\|_{L^q([t,t+1],L^3(\Omega))}\le C_T.
$$
Thus, owing to the maximal regularity for the Laplacian in $L^3$ and
Sobolev's embedding theorem,
%$$
\begin{equation}\label{3.pgrad}
\|\Nx u(t)\|_{L^q([t,t+1]\times \Omega)}\le C_{T,q},\ \  t\ge T,
\end{equation}
%$$
for any $q\ge1$. Furthermore, it follows from \eqref{2.regular}, \eqref{3.pest}
and \eqref{3.pgrad} that
%$$
\begin{equation}\label{3.finest}
\begin{aligned}
&\|\varphi(u)\|_{L^r([t,t+1], W^{1,r}(\Omega))}\le\\
&\text{\phantom{eg}}\le
C(\|\varphi(u)\|_{L^{2r}([t,t+1]\times\Omega)}+\|\varphi'(u)\|_{L^{2r}([t,t+1]\times\Omega)})(1+\|\Nx
u\|_{L^{2r}([t,t+1]\times\Omega)})\le C_{r,T},\\
&\|\Dt \varphi(u)\|_{L^{2-\eb}([t,t+1], L^{6-\eb}(\Omega))}\le
C\|\varphi(u)\|_{L^{r_\eb}([t,t+1]\times\Omega)}\|\Dt
u\|_{L^2([t,t+1]\times H^1(\Omega))}\le C_{\eb,T},
\end{aligned}
\end{equation}
%$$
where $\eb>0$ and $r>0$ are arbitrary and the constants $C_{r,T}$
and $C_{\eb,T}$ are independent of $u$ and $t\ge T$. Fixing
finally $r\gg1$ and $\eb\ll1$ in  such a way that
$$
W^{1-\eb}([t,t+1]\times\Omega)\cap L^r([t,t+1],
W^{1,r}(\Omega))\subset C([t,t+1]\times\Omega),
$$
we deduce from \eqref{3.finest} that
$$
\sup_{(s,x)\in[t,t+1]\times\Omega}\big|\frac1{1-u^2(s,x)}\big|=
\|\varphi(u)\|_{C([t,t+1]\times\Omega)}\le C_T,\ \ t\ge T,
$$
which gives \eqref{3.separated} and finishes the proof of
Theorem \ref{Th3.main}.
\end{proof}

\begin{remark}\label{Rem3.ext} It is not difficult to see that
assumption \eqref{3.boun} can be slightly relaxed, namely,
$$
h_1\in L^{r_1(p)}(\Omega),\ \ h_2\in L^{r_2(p)}(\Gamma), \ \
r_1(p)<3,\ \ r_2(p)<\infty
$$
and $r_1(p)\to3$, $r_2(p)\to\infty$ as $p\to2$ (where $p>2$ is the
exponent in inequalities \eqref{3.ssing}). It is also worth noting that the proof of
Lemma \ref{Lem3.p-ell} does not involve the Laplace-Beltrami operator
in the boundary conditions (and we have used it {\it only} in the
derivation of estimate \eqref{3.3bound}). Consequently, arguing
in a slightly more accurate way (e.g., by obtaining the
$L^\infty$-estimate on $\varphi(u)$ directly from the estimates
of Lemma \ref{Lem3.p-ell} by a Moser iteration technique), one
can extend the theorem to less regular boundary
conditions,
$$
\Dt u+\partial_n u+g(u)=h_2,\ \ x\in\Gamma.
$$
Finally, the aforementioned Moser scheme also allows to remove
the second inequality in \eqref{3.ssing} and to obtain the
separation from the singularities by only using the first
inequality in \eqref{3.ssing} for the function $f$.
We will come back to these questions elsewhere.
\end{remark}

\section{Long-time behavior: attractors and exponential
attractors}\label{s4}
In this concluding section, we study the asymptotic behavior of
 the trajectories of the solution semigroup \eqref{2.sem} acting on the
 phase space $\Phi$, endowed with the metric of $\Phi ^w$. We first recall that problem
 \eqref{1.main} enjoys the mass conservation
%$$
\begin{equation}\label{4.mass}
\<u(t)\>\equiv\<u(0)\>:=c.
\end{equation}
%$$
Therefore, it is natural to consider the restrictions of our
semigroup to the hyperplanes
%$$
\begin{equation}\label{4.const}
\Phi_c:=\{(u,\psi )\in\Phi,\ \ \<u\>=c\},\ \ c\in(-1,1),\ \
S(t):\Phi_c\to\Phi_c.
\end{equation}
%$$
The following proposition gives the existence of the global
attractor $\Cal A_c$ for this semigroup. We recall that, by
definition, a set $\Cal A_c\subset\Phi_c$ is the global attractor
for the semigroup \eqref{4.const} if
\par
1) It is compact in $\Phi_c$.
\par
2) It is strictly invariant, i.e., $S(t)\Cal A_c=\Cal A_c$, $t\ge0$.
\par
3) It attracts $\Phi_c$ as $t\to\infty$, i.e., for
every neighborhood $\Cal O(\Cal A_c)$ of $\Cal A_c$ in $\Phi_c$, there exists
$T=T(\Cal O)$ such that
$$
S(t)\Phi_c\subset\Cal O(\Cal A_c),\ \ t\ge T.
$$
We refer the reader to, e.g., \cite{BV}, \cite{handbook} and \cite{temam} for details (we note that, in our situation,
the phase space $\Phi_c$ is, by definition, bounded and, therefore, we
need not involve bounded sets in the definition of the
global attractor).

\begin{proposition}\label{Prop4.ga} Let the assumptions of Theorem
\ref{Th2.unique} hold. Then, for every $c\in(-1,1)$, the semigroup $S(t)$ associated with
the variational solutions of problem \eqref{1.main} acting on the
hyperplane \eqref{4.const} (endowed with the metric of $\Phi ^w$) possesses the global
attractor $\Cal A_c$. Furthermore, this attractor is bounded in the
space $C^\alpha(\Omega)\times C^\alpha(\Gamma)$,
$\alpha<1/4$, and is generated by all complete  trajectories
of the semigroup $S(t)$ (i.e., by all variational solutions
$(u(t),v(t))$ which are defined for all $t\in\R$).
\end{proposition}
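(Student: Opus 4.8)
The plan is to verify the three conditions in the definition of the global attractor for the semigroup $S(t)\colon\Phi_c\to\Phi_c$ by appealing to the standard abstract result (see \cite{BV}, \cite{handbook}, \cite{temam}): a continuous semigroup on a complete metric space which possesses a compact attracting set has a global attractor, equal to the $\omega$-limit set of that absorbing set. The first ingredient is \emph{continuity}: by Corollary \ref{Cor2.sem}, $S(t)$ is globally Lipschitz continuous from $\Phi_c$ (with the metric of $\Phi^w=H^{-1}(\Omega)\times L^2(\Gamma)$) into itself, and joint continuity in $(t,u_0,\psi_0)$ follows from this Lipschitz estimate together with the continuity in time guaranteed by \eqref{2.bound}(2). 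The second, and crucial, ingredient is a \emph{compact absorbing set}: here I would invoke the dissipative regularizing estimate \eqref{2.regular} from Theorem \ref{Th2.main}. Evaluating \eqref{2.regular} at, say, $t=1$ shows that the set
\begin{equation*}
\B_c:=\{(u,\psi)\in\Phi_c:\ \|u\|_{C^\alpha(\Omega)}^2+\|\psi\|_{H^2(\Gamma)}^2+\|u\|_{H^1(\Omega)}^2+\|\Dt u\|_{H^{-1}(\Omega)}^2+\|\Dt\psi\|_{L^2(\Gamma)}^2\le R\}
\end{equation*}
is forward invariant and absorbs $\Phi_c$ for $R$ large enough depending only on $\|h_1\|_{L^2(\Omega)}$, $\|h_2\|_{L^2(\Gamma)}$ and $c$ (the $e^{-\beta t}$ term in \eqref{1.regular}, which survives the passage to the limit, produces the exponential approach; alternatively the naive bound $t^{-4}+1$ in \eqref{2.regular} already gives uniform absorption after time $1$).

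Next I would check that $\B_c$ is \emph{compact in the topology of $\Phi^w$}. This is where the compact embeddings enter: $C^\alpha(\Omega)\hookrightarrow H^{-1}(\Omega)$ and $H^2(\Gamma)\hookrightarrow L^2(\Gamma)$ are both compact, so a bounded set in $C^\alpha(\Omega)\times H^2(\Gamma)$ is precompact in $H^{-1}(\Omega)\times L^2(\Gamma)$; the $L^\infty$-bounds $|u|\le1$, $|\psi|\le1$ built into $\Phi$ are closed conditions, and the constraint $\<u\>=c$ is preserved under $H^{-1}$-convergence, so $\B_c$ is in fact closed in $\Phi^w$, hence compact. One should also confirm that limits of sequences in $\B_c$ still satisfy $\psi=u|_\Gamma$, which holds because the trace map $H^1(\Omega)\to L^2(\Gamma)$ is continuous and $u_n\to u$ weakly in $H^1(\Omega)$ along a subsequence. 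With continuity, a compact absorbing set, and (via the semigroup property and the a priori bounds) the standard argument, the abstract theorem yields a global attractor $\A_c=\omega(\B_c)=\bigcap_{t\ge0}\overline{\bigcup_{s\ge t}S(s)\B_c}$, which is compact in $\Phi^w$, strictly invariant, and attracting.

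It remains to establish the stated \emph{regularity} of $\A_c$ and its characterization via complete bounded trajectories. For the regularity: since $\A_c\subset\B_c$ is bounded in $C^\alpha(\Omega)\times H^2(\Gamma)$ and, by strict invariance, $\A_c=S(1)\A_c$, I would apply \eqref{2.regular} once more to all points of $\A_c$; using that the initial data are uniformly bounded in $\Phi$, the right-hand side is uniformly controlled, and the Sobolev embedding $H^2(\Gamma)\hookrightarrow C^\alpha(\Gamma)$ for $\alpha<1/2$ together with the interior/tangential Hölder bound on $u$ gives boundedness of $\A_c$ in $C^\alpha(\Omega)\times C^\alpha(\Gamma)$; the exponent restriction $\alpha<1/4$ comes from the available Hölder exponent in \eqref{2.regular}. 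For the characterization: the inclusion ``$\A_c\subset$ (union of complete trajectories)'' is the standard fact that every point of an invariant set lies on a complete orbit obtained by repeatedly inverting along the attractor (using strict invariance $S(t)\A_c=\A_c$ to select preimages and uniqueness of forward solutions, \eqref{2.lip}, to glue them consistently); these complete orbits are automatically bounded since they stay in $\A_c\subset\B_c$. The converse inclusion ``(any complete bounded trajectory) $\subset\A_c$'' follows because a complete bounded trajectory is contained in $\B_c$ for all sufficiently negative times, hence is attracted to $\A_c$ while also being invariant, forcing it into $\A_c$; here one uses that every complete bounded variational solution takes values in the absorbing set $\B_c$ after finite backward time. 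The main obstacle, I expect, is purely bookkeeping rather than conceptual: one must be careful that the ``variational solution'' class is closed under the weak limits used to construct $\omega$-limit sets — i.e., that a $\Phi^w$-limit of variational solutions is again a variational solution — but this is precisely the content of the limiting procedure already carried out in the proof of Theorem \ref{Th2.main}, so it transfers with only cosmetic changes.
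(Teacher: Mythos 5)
Your proof is correct and follows essentially the same route as the paper: the paper likewise combines the Lipschitz/closed-graph property from \eqref{2.slip} with the compact absorbing set furnished by the smoothing estimate \eqref{2.regular} and then invokes the abstract attractor existence theorem of \cite{BV}, \cite{handbook}, \cite{temam}. You merely spell out the compactness, invariance and complete-trajectory arguments that the paper delegates to that abstract theorem (the only cosmetic difference being that the paper takes $S(1)\Phi_c$ itself as the absorbing set rather than a sublevel set of the norms in \eqref{2.regular}).
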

Indeed, owing to estimate \eqref{2.slip}, the semigroup $S(t)$ has a
closed graph in $\Phi_c$. On the other hand, owing to estimate
\eqref{2.regular}, this semigroup possesses an absorbing set which is
compact in $\Phi _c$ (endowed with the metric of $\Phi ^w$) and bounded in $C^\alpha(\Omega)\times C^\alpha(\Gamma)$.
Thus, the existence of $\Cal A_c$, together with all
properties stated in the proposition, follow from a proper abstract attractor's existence theorem (see, e.g.,
\cite{BV}, \cite{handbook} and \cite{temam}).
\par
Our next task is to prove the finite-dimensionality of the global
attractor $\Cal A_c$ constructed above and the existence of
a so-called exponential attractor. We recall that, by
definition, a set $\Cal M(c)\subset\Phi_c$ is an exponential
attractor for the semigroup $S(t)$ if
\par
1) It is compact in $\Phi_c$.
\par
2) It is semiinvariant, i.e., $S(t)\Cal M(c)\subset\Cal M(c)$,
$t\ge0$.
\par
3) It has finite fractal dimension in $\Phi_c$.
\par
4) It attracts $\Phi_c$ exponentially fast as
$t\to\infty$, i.e.,
$$
\dist_{\Phi_c}(S(t)\Phi_c,\Cal M(c))\le Ce^{-\gamma t},\ \ t\ge 0,
$$
for some positive constants $\gamma$ and $C$. Here and below, $\dist_V(X,Y)$
stands for the nonsymmetric Hausdorff distance between sets in
$V$.
\par
We also recall that the usual construction of an exponential
attractor is based on the so-called squeezing (or
smoothing) property for the difference of solutions (or their
proper modifications, see \cite{EFNT}, \cite{EMZ1}, \cite{EMZ2} and \cite{handbook} for details).
The main difficulty here lies in the singular nature of
the equations at $\pm1$. In particular, the difference between two singular solutions $u_1$
and $u_2$ does not possess any regularization. Nevertheless, as we
will see below, our nonlinearity is strictly monotone near the
singularities $\pm1$ and, far from these singularities, the problem still
possesses the usual parabolic smoothing property. This fact,
together with the H\"older continuity  of the solutions and some
localization technique, allow to construct an exponential
attractor by using a proper modification of the techniques developed
in \cite{EMZ1} and \cite{EZ1}. However, some additional difficulties arise here, due
to the fact that the $H^{-1}$-norm is not local.

\begin{theorem}\label{Th4.main} Let the assumptions of Theorem \ref{Th2.unique}
hold.
Then, the semigroup $S(t)$ acting on the phase space $\Phi_c$ (endowed
with the metric of $\Phi $)
possesses an exponential attractor $\Cal M(c)$ which is bounded in
$C^\alpha(\Omega)\times C^\alpha(\Gamma)$, $\alpha <1/4$.
\end{theorem}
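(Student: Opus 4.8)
The plan is to deduce Theorem \ref{Th4.main} from an abstract existence theorem for exponential attractors in its "Lipschitz plus squeezing" form (see \cite{EMZ1}, \cite{EMZ2}, \cite{handbook}), applied not to all of $\Phi_c$ but to a compact, positively invariant, exponentially attracting absorbing set. By the dissipative estimate \eqref{2.regular}, $S(t)$ possesses on $\Phi_c$ an absorbing set $\Cal B=\Cal B_c$ that is bounded in $C^\alpha(\Omega)\times C^\alpha(\Gamma)$ for $\alpha<1/4$ (as in Proposition \ref{Prop4.ga}) and along which $\Dt u$ is uniformly bounded in $L^\infty([0,\infty),H^{-1}(\Omega)\cap L^2(\Gamma))$ and in $L^2_{loc}([0,\infty),H^1(\Omega)\cap H^1(\Gamma))$; in particular $\Cal B$ may be taken with $S(t)\Cal B\subset\Cal B$, it is compact in $\Phi^w$, it attracts $\Phi_c$ exponentially fast, and every trajectory lying on $\Cal B$ is H\"older continuous jointly in $(t,x)$ with a modulus independent of the trajectory. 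By the transitivity of exponential attraction it then suffices to build an exponential attractor for the restriction $S(t):\Cal B\to\Cal B$; the resulting $\Cal M(c)$ automatically lies in $\Cal B$, hence is bounded in $C^\alpha(\Omega)\times C^\alpha(\Gamma)$, and the passage from the time-$T^*$ map to the continuous semigroup is routine, using the Lipschitz bound \eqref{2.slip} and the H\"older-in-time continuity of $t\mapsto S(t)u_0$ on $\Cal B$.

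The analytic core is a squeezing estimate for differences of trajectories on $\Cal B$. Given $u^1,u^2$ on $\Cal B$ I set $v:=u^1-u^2$, $\phi:=v|_\Gamma$; then $v$ solves \eqref{1.difeq} (with $f$ in place of $f_N$, understood through the approximating sequences $u^i_N$). Fixing a small $\delta>0$ and a horizon $T^*$, I use the uniform joint H\"older continuity of $u^1,u^2$ to split $[0,T^*]\times\Omega$ into a regular region $\Cal R=\{|u^1|<1-\delta\}\cap\{|u^2|<1-\delta\}$ and a singular region $\Cal S$ (its complement, slightly thickened). On $\Cal R$ both solutions are uniformly bounded in $H^2$ by Proposition \ref{Prop3.h2reg}, the equation is uniformly parabolic there, and a localized parabolic bootstrap upgrades $v$ on a slightly smaller regular set to a space $\Cal E_1$ compactly embedded in $\Phi^w$ (for instance $H^{1-\eb}(\Omega)\times H^{1/2}(\Gamma)$). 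On $\Cal S$, where $f$ is strongly monotone, writing $f(u^1)-f(u^2)=f'(\xi)v$ one has $(f(u^1)-f(u^2),v)_\Omega\ge\lambda(\delta)\|v\|_{L^2(\Cal S_t)}^2$ with $\lambda(\delta)\to+\infty$ as $\delta\to0$, while the boundary term $(g(u^1)-g(u^2),\phi)_\Gamma$ is harmless since $g\in C^1$. Feeding the parabolic control on $\Cal R$ into a spatially localized version of the energy identity for \eqref{1.difeq} (multiplying the equations by suitable spatial cut-offs times $v$, respectively $Av$, adapted to $\Cal S$) should yield, after choosing $\delta$ small and $T^*$ large, a decomposition
\[
S(T^*)u^1_0-S(T^*)u^2_0=w_1+w_2,\qquad \|w_1\|_{\Phi^w}\le\tfrac14\|u^1_0-u^2_0\|_{\Phi^w},\qquad \|w_2\|_{\Cal E_1}\le K\|u^1_0-u^2_0\|_{\Phi^w},
\]
which, combined with \eqref{2.slip} on $[0,T^*]$, is exactly the hypothesis producing an exponential attractor $\Cal M_d(c)\subset\Cal B$ for the map $S(T^*)$, and hence $\Cal M(c)$.

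The hard part, already flagged before the theorem, is that the $H^{-1}(\Omega)$-norm in which the energy estimate for $v$ naturally lives is not local: testing \eqref{1.difeq} against a cut-off times $v$ or $Av$ produces commutator terms involving $[\chi,A]$ (together with the harmless local $[\chi,\Dx]$) which do not respect the splitting into $\Cal R$ and $\Cal S$ and thus threaten to feed the singular contribution back into the regular one. This is where I would use the extra regularity available on $\Cal B$: the only genuinely nonlocal object in \eqref{1.difeq} is $A\Dt v=(-\Dx)^{-1}(\Dt u^1-\Dt u^2)$, and since each $A\Dt u^i$ is uniformly controlled in $H^1(\Omega)$ along $\Cal B$ (by the bounds on $\Dt u^i$ in \eqref{2.regular}), these commutator errors can be absorbed into the dissipative terms. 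Making this quantitative, with constants uniform over $\Cal B$ and over the approximation parameter $N\to\infty$ so that the estimate descends to the variational solutions, will be the technical heart of the proof and follows the scheme of \cite{EZ1} for porous-media-type equations; the remaining bookkeeping is a routine adaptation of \cite{EMZ1}.
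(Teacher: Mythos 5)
Your overall architecture --- restrict to a compact absorbing set bounded in $C^\alpha$, exploit the strong monotonicity of $f$ near $\pm1$ on the singular region and uniform parabolicity on the regular region, and invoke an abstract ``Lipschitz plus squeezing'' theorem --- is the same as the paper's. But the two key estimates are implemented differently, and as written your version has genuine gaps. First, the compact part of your decomposition relies on an \emph{instantaneous} spatial smoothing of the difference $v=u^1-u^2$ into $H^{1-\eb}(\Omega)\times H^{1/2}(\Gamma)$ at time $T^*$. No such smoothing is available: the boundary $\Gamma$ may lie entirely in the singular region (solutions can reach $\pm1$ on all of $\R^+\times\Gamma$), so there is no regular set touching $\Gamma$ and no mechanism producing $H^{1/2}(\Gamma)$ regularity for $v|_\Gamma$; relatedly, you never explain how the $L^2(\Gamma)$ component of the $\Phi^w$-norm is contracted --- calling the boundary term ``harmless'' is not enough. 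The paper's Lemma \ref{Lem4.contr} absorbs $K\|v\|^2_{L^2(\Gamma)}$ into the interior dissipation $(\Lambda(\delta)-\lambda)\|v\|^2_{L^2(\Omega)}$ via a trace-interpolation inequality, using that $\Lambda(\delta)\to\infty$ as $\delta\to0$; that is precisely the step that makes the full $\Phi^w$-norm contract up to the remainder $\int_0^T\|\theta v\|^2_{L^2(\Omega)}$. For the compactness of that remainder the paper does not smooth $v$ in space at all: Lemma \ref{Lem4.comp} uses \emph{space-time} compactness, $L^2(0,T;H^1)\cap H^1(0,T;H^{-3})\subset\subset L^2(0,T;L^2)$, which needs only the bound $v\in L^2(0,T;H^1)$ already contained in \eqref{4.lip} plus a duality estimate on $\Dt(\theta v)$ in $H^{-3}$ --- the $l$-trajectory device of \cite{EZ2}, \cite{MP}.

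Second, the commutator difficulty you single out as ``the technical heart'' never arises in the paper's argument, because the energy identity is not localized: the test function in Lemma \ref{Lem4.contr} is the global $(-\Dx)^{-1}v$, and the cut-off enters only through the pointwise lower bound $\int_\Omega l\,|v|^2\ge\Lambda\|v\|^2_{L^2(\Omega)}-\Lambda\|\theta v\|^2_{L^2(\Omega)}$ (one splits the \emph{integral}, not the test function), and again in Lemma \ref{Lem4.comp}, where $\Dt(\theta v)$ is estimated by duality against $\Dx(\theta\varphi)$ using that $l$ is bounded on $\supp\theta$. So no commutator $[\chi,A]$ appears; moreover your proposed absorption of such commutators via the uniform $H^1$-bound on each $A\Dt u^i$ could not close a squeezing estimate, since every error term must be proportional to $\|u_1(0)-u_2(0)\|_{\Phi^w}$, which bounds on the individual trajectories do not provide. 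A final structural point: the paper fixes $\delta$, the sets $\Omega_\delta(u_0)$ and the cut-off $\theta$ by the \emph{center} $u_0$ of a small $\eb$-ball in $\Phi^w$, using the uniform H\"older continuity to guarantee \eqref{4.sep} for all trajectories from that ball; this makes $\Bbb K_{u_0}$ a single Lipschitz operator on the ball, as the abstract result of \cite{EZ2} requires, whereas your regular/singular splitting depends on the pair $(u^1,u^2)$ and would force you into a pairwise-decomposition variant with a cut-off whose derivatives must still be controlled uniformly over all pairs.
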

\begin{proof}
  We first note that, owing to Theorem \ref{Th2.main},
there exists a compact (for the metric of $\Phi^w$) absorbing set
%$$
\begin{equation}\label{4.abs}
\Bbb B_c:=S(1)\Phi_c
\end{equation}
%$$
such that
%$$
\begin{equation}\label{4.inv}
S(t)\Bbb B_c\subset\Bbb B_c
\end{equation}
%$$
and
%$$
\begin{multline}\label{4.good}
\|u\|_{C^\alpha([t,t+1]\times\Omega)}+\|u(t)\|_{H^2(\Gamma)}+
\|\Dt u(t)\|_{H^{-1}(\Omega)}+\|\Dt u(t)\|_{L^2(\Gamma)}+\\+
\|f(u(t))\|_{L^1(\Omega)}+\|\Dt
u\|_{L^2([t,t+1],H^1(\Omega))}+\|\Dt
u\|_{L^2([t,t+1],H^1(\Gamma))}\le R
\end{multline}
%$$
for some fixed constant $R$ which depends on $c$, but is independent of
$(u(0),\psi (0))\in\Bbb B_c$. In particular, for every point
$(u,\psi )\in\Bbb B_c$, we have
$$
\psi =u\big|_{\Gamma}
$$
and, consequently, we generally write $u$ instead of $\psi $
in the boundary norms.
\par
Thus, we only need to construct an exponential attractor $\Cal M(c)$
for the semigroup $S(t)$ restricted to the semiinvariant absorbing
set $\Bbb B_c$. As usual, to do so, we need to obtain
proper estimates on the difference of two solutions $u_1(t)$ and
$u_2(t)$ starting from the set $\Bbb B_c$. Furthermore, we use the
following natural norm on the phase space $\Bbb B_c$:
%$$
\begin{multline}\label{4.norm}
\|u_1-u_2\|_{\Phi^w}^2:=\|u_1-u_2\|^2_{H^{-1}(\Omega)}+\|u_1-u_2\|^2_{L^2(\Gamma)}=\\=
\|(-\Dx)^{-1/2}(u_1-u_2)\|_{L^2(\Omega)}^2+
\|u_1-u_2\|^2_{L^2(\Gamma)}.
\end{multline}
%$$
A crucial point in the proof is
the global Lipschitz continuity in this norm (see Theorem 2.2),
%$$
\begin{multline}\label{4.lip}
\|u_1(t)-u_2(t)\|^2_{\Phi^w}+\\+\int_t^{t+1}(\|u_1(s)-u_2(s)\|_{H^1(\Omega)}^2+\|u_1(s)-u_2(s)\|_{H^1(\Gamma)}^2)\,ds\le
Ce^{Kt}\|u_1(0)-u_2(0)\|_{\Phi^w}^2,
\end{multline}
%$$
where the positive constants $C$ and $K$ are independent of
$u_1(0),u_2(0)\in\Bbb B_c$.
\par
We now consider an arbitrary small $\eb$-ball $B(\eb,u_0,\Phi^w)$ in the space
$\Bbb B_c$ (endowed with the metric of $\Phi^w$) and centered at
$u_0$, where $0<\eb\le\eb_0\ll1$ (and the parameter
$\eb_0$ will be fixed below). Let also $u^0(t)$, $t\ge0$, be the solution
of problem \eqref{2.reg} starting from $u_0$.
\par
As in \eqref{3.odelta}, we introduce the sets
%$$
\begin{equation}\label{4.odelta}
\Omega_\delta(u_0):=\{x\in\Omega,\ |u_0(x)|< 1-\delta\},\ \
\overline{\Omega}_\delta(u_0):=\{x\in\Omega,\ \ |u_0(x)|>1-\delta\},
\end{equation}
%$$
where $\delta$ is a sufficiently small positive number. Then,
since the function $u_0(x)$ is uniformly H\"older continuous in
$\Omega$, there holds
%$$
\begin{equation}\label{4.dist}
d(\partial\Omega_{\delta_1}(u_0),\partial\Omega_{\delta_2}(u_0))\ge
C_{\delta_1,\delta_2}>0,\ \ \delta_1\ne\delta_2,
\end{equation}
%$$
where the constant $C_{\delta_1,\delta_2}$ depends on $\delta_i$, $i=1,2$,
but is independent of the concrete choice of $u_0\in\Bbb B_c$.
\par
As a next step, we note that, owing to the uniform H\"older
continuity of the trajectory $u^0(t)$ (in space and time), there
exists $T=T(\delta)$ such that
%$$
\begin{equation}
\begin{aligned}
|u^0(t)|\le 1-\frac\delta2,\ x\in\Omega_\delta(u_0),\ t\in[0,T],\\
|u^0(t)|\ge1-2\delta,\ x\in\overline{\Omega}_{2\delta}(u_0),\ t\in[0,T],
\end{aligned}
\end{equation}
%$$
and, furthermore, owing again to the uniform H\"older continuity,
$$
\|u_1(t)-u_2(t)\|_{C(\Omega)}\le
C\|u_1(t)-u_2(t)\|_{\Phi^w}^\kappa\|u_1(t)-u_2(t)\|^{1-\kappa}_{C^\alpha(\Omega )}\le
C_T\eb^{\kappa},
$$
for all $u_1(0),u_2(0)\in B(\eb,u_0,\Phi^w)$. We can thus
fix $\eb_0=\eb_0(\delta)$ in such a way that
%$$
\begin{equation}\label{4.sep}
\begin{aligned}
|u(t)|\le 1-\frac\delta 4,\ \  x\in\Omega_\delta(u_0),\ \ t\in[0,T],\\
|u(t)|\ge1-4\delta,\ \ x\in\overline{\Omega}_{2\delta}(u_0),\ \ t\in[0,T],
\end{aligned}
\end{equation}
%$$
for all trajectories $u(t)$ starting from the ball
$B(\eb,u_0,\Phi^w)$ with $\eb\le\eb_0$.
\par
We also introduce the cut-off function
 $\theta\in C^\infty(\R^3,[0,1])$ such that
%$$
\begin{equation}\label{4.cut}
\theta(x)\equiv0,\ \ x\in \overline{\Omega}_\delta(u_0),\ \
\theta(x)\equiv1,\ \ x\in \Omega_{2\delta}(u_0).
\end{equation}
%$$
Such a function exists, owing to condition \eqref{4.dist}. Furthermore,
it follows from this condition that this function can be chosen in
such a way that it satisfies
the additional assumption
%$$
\begin{equation}\label{4.phi}
\|\theta\|_{C^k(\R^3)}\le C_k,
\end{equation}
%$$
where $k\in\Bbb N$ is arbitrary and
the constant $C_k$  depends on $\delta$, but is
independent of the choice of $u_0\in\Bbb B_c$, see \cite{EZ1} for
details.
\par
Finally the second estimate of \eqref{4.sep} yields
%$$
\begin{equation}\label{4.mon}
f'(u(t,x))\ge\Lambda(\delta),\ \ x\in\overline{\Omega}_{2\delta}(u_0),\ \
t\in[0,T],
\end{equation}
%$$
for all trajectories $u(t)$ starting from the ball
$B(\eb,u_0,\Phi^w)$, where
$$
\Lambda(\delta):=\min\{f'(1-4\delta),f'(-1+4\delta)\}.
$$
Since $f'(u)\to\infty$ as $u\to\pm1$, then $\Lambda(\delta)\to\infty$ as
$\delta\to0$ and we can fix $\delta>0$ in such a way that $\Lambda(\delta)$
is arbitrarily large. This will be essentially used in the
next lemma which gives some kind of smoothing  property for the difference
of two solutions $u_1$ and $u_2$ and is crucial for our construction.

\begin{lemma}\label{Lem4.contr} Let the above assumptions hold.
Then, there exists $\delta>0$ such that the following estimate holds:
%$$
\begin{multline}\label{4.contr}
\|u_1(T)-u_2(T)\|_{\Phi^w}^2\le e^{-\beta T}\|u_1(0)-u_2(0)\|_{\Phi^w}^2+\\+C\int_0^T\|\theta
(u_1(s)-u_2(s))\|^2_{L^2(\Omega)}\,ds,
\end{multline}
%$$
where the positive constants $\delta$, $\beta$ and $C$ are
independent of $u_1,u_2\in B(\eb,u_0,\Phi^w)$, $s$ and $u_0\in\Bbb B_c$.
\end{lemma}
\begin{proof} As usual, we only give the formal derivation of this
estimate which can be justified by approximating the variational
solutions $u_1(t)$ and $u_2(t)$ by appropriate solutions of
the regular equation \eqref{1.req}. Set $v(t):=u_1(t)-u_2(t)$.
Then, this function (formally) solves
%$$
\begin{equation}\label{4.4order}
\begin{cases}
\Dt v=-\Dx (\Dx v-l(t)v+\lambda v),\ \ \partial_n(\Dx
v-l(t)v+\lambda v)\big|_{\Gamma}=0,\\
\Dt v-\Delta_\Gamma v+\partial_n v+m(t)v=0,\ \ \text{on $\Gamma$},
\end{cases}
\end{equation}
%$$
where
$$
l(t):=\int_0^1f'(s u_1(t)+(1-s)u_2(t))\,ds,\ \
m(t):=\int_0^1g'(su_1(t)+(1-s)u_2(t))\,ds.
$$
Multiplying this equation by $(-\Dx)^{-1}v(t)$, integrating over $\Omega$ and using the fact that
$\<v(t)\>\equiv0$, we obtain
%$$
\begin{equation}\label{4.main}
\frac12\frac d{dt}\|v(t)\|^2_{\Phi^w}+\|\Nx
v(t)\|^2_{L^2(\Omega)}+(l(t)v(t),v(t))_\Omega\le\lambda\|v(t)\|^2_{L^2(\Omega )}+K\|v(t)\|_{L^2(\Gamma)}^2,
\end{equation}
%$$
where $K=\|g'\|_{C([-1,1])}$. We estimate the most complicated
term $(l(t)v,v)$ as follows:
%$$
\begin{multline}\label{4.l}
\int_\Omega
l(t,x)|v(x)|^2\,dx\ge
\int_{\overline{\Omega}_{2\delta}(u_0)}l(t,x)|v(x)|^2\,dx\ge\\\ge
\Lambda\|v\|^2_{L^2(\Omega)}-\Lambda\|v\|_{L^2(\Omega_{2\delta}(u_0))}^2\ge
\Lambda\|v\|^2_{L^2(\Omega)}-\Lambda\|\theta v\|^2_{L^2(\Omega)}.
\end{multline}
%$$
Thus, inequality \eqref{4.main} reads
%$$
\begin{equation}\label{4.main1}
\frac12\frac d{dt}\|v(t)\|^2_{\Phi^w}+\|\Nx
v(t)\|^2_{L^2(\Omega)}+(\Lambda-\lambda)\|v(t)\|^2_{L^2(\Omega)}\le K\|v(t)\|_{L^2(\Gamma)}^2+
\Lambda\|\theta v(t)\|_{L^2(\Omega)}^2.
\end{equation}
%$$
Furthermore, using the trace-interpolation estimate
$$
\|v\|_{L^2(\Gamma)}^2\le
C\|v\|_{H^1(\Omega)}\|v\|_{L^2(\Omega)}\le
C(\Lambda-\lambda)^{-1/2}(\|v\|_{H^1(\Omega)}^2+(\Lambda-\lambda)\|v\|^2_{L^2(\Omega)})
$$
and fixing $\delta$ in such a way that $KC\le
1/2(\Lambda(\delta)-\lambda)^{1/2}$, we finally end up with
%$$
\begin{equation}\label{4.main2}
\frac d {dt}\|v(t)\|^2_{\Phi^w}+\|\Nx
v(t)\|^2_{L^2(\Omega)}+\beta(\|v(t)\|^2_{L^2(\Omega)}+\|v(t)\|^2_{L^2(\Gamma)})\le
2\Lambda\|\theta v(t)\|_{L^2(\Omega)}^2,
\end{equation}
%$$
where $\beta>0$. Using the Poincar\'e inequality
$\|v\|_{H^{-1}(\Omega )}\le C\|v\|_{L^2(\Omega )}$, together with the Gronwall
inequality, we deduce \eqref{4.contr} and finish the proof of the
lemma.
\end{proof}
Thus, owing to Lemma \ref{Lem4.contr}, the semigroup $S(t)$ is a
contraction, up to the term
$\|\theta(u_1-u_2)\|_{L^2([0,T]\times\Omega)}$. The next lemma gives
some kind of compactness for this term.

\begin{lemma}\label{Lem4.comp} Let the above assumptions hold.
Then, the following estimate holds:
%$$
\begin{multline}\label{4.comp}
\|\Dt(\theta(u_1-u_2))\|_{L^2([0,T],H^{-3}(\Omega))}+\\+\|\theta(u_1-u_2)\|_{L^2([0,T],H^1(\Omega))}\le
Ce^{KT}\|u_1(0)-u_2(0)\|_{\Phi^w},
\end{multline}
%$$
where the constants $C$ and $K$ are independent of $u_i(0)\in
B(\eb,u_0,\Phi^w)$, $i=1,2$, and $u_0\in\Bbb B_c$.
\end{lemma}
\begin{proof} The second term in the left-hand side of
\eqref{4.comp} can be easily estimated by \eqref{4.lip} (and the
fact that $\Nx\theta$ is uniformly bounded). So, we only need to
estimate the time derivative. To this end, we recall that $\Dt v$ ($v=u_1-u_2$)
satisfies
$$
\Dt v=-\Dx(\Dx v-l(t)v)
$$
in the sense of distributions. Therefore, for any test function
$\varphi\in C^\infty_0(\Omega)$, there holds
\begin{multline}
\<\Dt (\theta v(t)),\varphi\>_\Omega=
-\<\Dx v(t)-l(t)v(t),\Dx (\theta\varphi)\>_\Omega=\\=
\<\Nx v(t),\Nx\Dx(\theta\varphi)\>_\Omega+
\<l(t)v(t),\Dx(\theta\varphi)\>_\Omega.
\nonumber
\end{multline}
Since $\supp\theta\subset\Omega_{\delta}(u_0)$, \eqref{4.sep} yields
$$
|\<l(t)v,\Dx(\theta\varphi)\>_\Omega|\le
C\|v\|_{L^2(\Omega)}\|\varphi\|_{H^2(\Omega)}
$$
and, thus,
$$
|\<\Dt(\theta v(t)),\varphi\>_\Omega|\le
C_1\|v(t)\|_{H^1(\Omega)}\|\varphi\|_{H^3(\Omega)}.
$$
This estimate, together with \eqref{4.lip}, give the desired
estimate \eqref{4.comp} on the time derivative and finish the
proof of the lemma.
\end{proof}
It is now not difficult to finish the proof of the theorem.
We introduce the functional spaces
%$$
\begin{equation}
\begin{aligned}
&\Bbb H_1:=L^2([0,T],H^1(\Omega))\cap H^1([0,T],H^{-3}(\Omega)),\\
&\Bbb H:=L^2([0,T],L^2(\Omega)).
\end{aligned}
\end{equation}
%$$
Then, obviously, $\Bbb H_1$ is compactly embedded into $\Bbb H$. We also introduce, for any $u_0\in\Bbb B_c$, the linear operator
$$
\Bbb K_{u_0}:B(\eb,u_0,\Phi^w)\to\Bbb H_1
$$
by
$$
\Bbb K_{u_0}u(0):=\theta u(\cdot), \text{ $u(t)$ solves
\eqref{2.reg}}
$$
(where the constants $\delta$, $T$ and the cut-off function $\theta$
are such that Lemmas \ref{Lem4.contr} and \ref{Lem4.comp} hold).
Then, on the one hand, owing to Lemma \ref{Lem4.comp}, the map $\Bbb
K_{u_0}$ is uniformly Lipschitz continuous,
%$$
\begin{equation}\label{4.lipcont}
\|\Bbb K_{u_0}(u_1-u_2)\|_{\Bbb H_1}\le L\|u_1-u_2\|_{\Phi^w},\ \
u_1,u_2\in B(\eb,u_0,\Phi^w),\ \ \eb\le\eb_0,
\end{equation}
%$$
where the Lipschitz constant $L$ is independent of the choice of
$u_0\in\Bbb B_c$ and $\eb\le\eb_0$. On the other hand, it follows from Lemma \ref{Lem4.contr} that
%$$
\begin{equation}\label{4.contrs}
\|S(T)u_1-S(T)u_2\|_{\Phi^w}\le
(1-\gamma)\|u_1-u_2\|_{\Phi^w}+C\|\Bbb K(u_1-u_2)\|_{\Bbb H},
\end{equation}
%$$
where $\gamma>0$ and $C>0$ are also independent of $u_0\in \Bbb
B_c$, $\eb\le\eb_0$ and $u_1,u_2\in B(\eb,u_0,\Phi^w)$.
\par
It is known (see, e.g., \cite{EZ2}; see also \cite{MP}) that inequalities
\eqref{4.lipcont} and \eqref{4.contrs}, together with the
compactness of the embedding $\Bbb H_1\subset\Bbb H$, guarantee
the existence of an exponential attractor $\Cal M_d(c)\subset\Bbb B_c$
for the discrete semigroup $S(nT)$ acting on the phase space $\Bbb
B_c$ (endowed with the topology of $\Phi^w$). Furthermore,
\eqref{4.lip}, together with the control \eqref{4.good} of the time
derivative, yield that the semigroup $S(t)$ is uniformly H\"older continuous
with respect to time and space in $[0,T]\times\Bbb B_c$. Thus, the
desired exponential attractor $\Cal M(c)$ for the continuous
semigroup $S(t)$ on $\Bbb B_c$ can be obtained by the standard
formula
$$
\Cal M(c):=\cup_{t\in[0,T]}\Cal M_d(c).
$$
Finally, although we have formally constructed the exponential
attractor $\Cal M(c)\subset \Bbb B_c\subset C^\alpha(\Omega)\times C^\alpha(\Gamma)$
in the topology of $\Phi^w$ only, the
control of the $C^\alpha$-norm of $\Bbb B_c$, together with a
proper interpolation inequality, give the finite-dimensionality and
the exponential attraction in the initial topology of $\Phi_c$ as
well. This finishes the proof of Theorem \ref{Th4.main}.
\end{proof}

\section{Appendix 1. Some auxiliary results}\label{sA1}
In this section, we establish several estimates which are used
in the paper. We start with regularity results for the
following singular elliptic boundary value problem:
%$$
\begin{equation}\label{A.1}
\begin{cases}
\Dx u-u-f(u)=\tilde h_1,\ \ {\rm in}\ \Omega,\\
\partial_n u+u-\Delta_\Gamma u=\tilde h_2, \ \
{\rm on}\ \Gamma,
\end{cases}
\end{equation}
%$$
where $\tilde h_1\in L^2(\Omega)$, $\tilde h_2\in L^2(\Gamma)$ and
the nonlinearity $f$ satisfies conditions \eqref{A.2}. As above,
a solution $u$ of this problem should be understood as a
variational solution, analogously to  Definition \ref{Def2.var}.
Therefore, in order to justify the estimates given below, we factually  need to deduce the corresponding
uniform estimates for regularized problems of the form \eqref{A.1}, where $f$ is replaced
by its approximations $f_N$ (defined by \eqref{1.f}), and then pass
to the limit $N\to\infty$. Since this passage to the limit is
explained in details in Section \ref{s2}, we give below the
formal derivation of these estimates directly for the limit
singular problem \eqref{A.1}, leaving the justifications to the
reader.

\begin{theorem}\label{ThA.1} Let the above assumptions hold.
Then, the following estimate holds for the solution $u$ of problem \eqref{A.1}:
%$$
\begin{equation}\label{A.7}
\|u\|_{H^{1}(\Omega)}^2+\|u\|_{H^{1}(\Gamma)}^2+\|f(u)\|_{L^1(\Omega)}\le
C(1+\|\tilde h_1\|_{L^2(\Omega)}^2+\|\tilde h_2\|_{L^2(\Gamma)}^2),
\end{equation}
%$$
where the constant $C$ is
independent of $\tilde h_1$ and $\tilde h_2$. Furthermore, $u\in
C^\alpha(\Omega)\cap H^{2}(\Gamma )$ with $\alpha<1/4$
and the following estimate holds:
%$$
\begin{equation}\label{A.m}
\|u\|_{C^\alpha(\Omega)}^2+\|u\|_{H^{2}(\Gamma)}^2\le
C(1+\|\tilde h_1\|_{L^2(\Omega)}^2+\|\tilde h_2\|_{L^2(\Gamma)}^2).
\end{equation}
%$$
Finally, $F(u)\in L^1(\Gamma)$, where $F(z):=\int_0^zf(s)\,ds$,
$\Nx D_\tau u\in L^2(\Omega)$,
$u\in H^{2}(\Omega_\eb)$, for every $\eb>0$, where $\Omega_\eb:=\{x\in\Omega,\
d(x,\Omega)>\eb\}$, and the following estimate holds:
%$$
\begin{equation}\label{A.mest}
\|F(u)\|_{L^1(\Gamma)}+\|u\|_{H^{2}(\Omega_\eb)}^2+\|\Nx D_\tau
u\|^2_{L^2(\Omega)}\le C_\eb(\|\tilde
h_1\|^2_{L^2(\Omega)}+\|\tilde h_2\|^2_{L^2(\Gamma)}).
\end{equation}
\end{theorem}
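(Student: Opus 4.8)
The plan is to prove Theorem~\ref{ThA.1} by a combination of testing the elliptic system against carefully chosen multipliers and a nonlinear localization argument near the boundary. I work throughout with the regularized problem (with $f$ replaced by $f_N$), deriving bounds uniform in $N$, and then pass to the limit exactly as in Section~\ref{s2}; below I suppress the subscript $N$.

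\textbf{Step 1: the basic $H^1\times H^1$-estimate \eqref{A.7}.} First I test the first equation of \eqref{A.1} with $u$ and integrate by parts; the boundary term $-(\partial_n u,u)_\Gamma$ is rewritten using the second equation, producing $\|\nabla_x u\|^2_{L^2(\Omega)}+\|u\|^2_{L^2(\Omega)}+\|\nabla_\Gamma u\|^2_{L^2(\Gamma)}+\|u\|^2_{L^2(\Gamma)}+(f(u),u)_\Omega=-(\tilde h_1,u)_\Omega-(\tilde h_2,u)_\Gamma$. Since $f$ is monotone with $f(0)=0$ we have $(f(u),u)_\Omega\ge0$, and in fact by \eqref{A.2} one has $f(u)u\ge \alpha|f(u)|-C$; combined with Cauchy--Schwarz and Young's inequality on the right-hand side this yields \eqref{A.7}, with $\|f(u)\|_{L^1(\Omega)}$ controlled as well.

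\textbf{Step 2: tangential regularity and the $H^2(\Gamma)$-bound.} Next I differentiate the equations tangentially. Applying a tangential derivative $D_\tau$ (more precisely, testing with $-\operatorname{div}_\tau(D_\tau u)$-type multipliers, or differencing in tangential directions and passing to the limit) and using that $f'\ge0$, the "bad" term $f'(u)|D_\tau u|^2\ge 0$ has a good sign, so it can be dropped; this gives an $L^2(\Omega)$-bound on $\nabla_x D_\tau u$ and, via the boundary equation $-\Delta_\Gamma u=\tilde h_2-u-\partial_n u$ together with a trace estimate for $\partial_n u$ controlled by $\|\nabla_x D_\tau u\|_{L^2}$ and $\|\Delta_x u\|$ near $\Gamma$, an $H^2(\Gamma)$-bound on $\psi=u|_\Gamma$. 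The key point is that only tangential second derivatives are estimated in $L^2$, so the singular term never needs to be bounded in $L^2$.

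\textbf{Step 3: interior $H^2$, Hölder continuity, and $F(u)\in L^1(\Gamma)$.} Away from $\Gamma$ the equation $\Delta_x u = u+f(u)+\tilde h_1$ is, after multiplying by a cutoff supported in $\Omega_{\eb/2}$, a standard semilinear elliptic equation; testing the localized equation with $f(u)\theta^2$ (using $f'\ge0$ to absorb gradient terms) controls $\|f(u)\|_{L^2(\Omega_\eb)}$, hence $u\in H^2(\Omega_\eb)$ by interior elliptic regularity, with the constant depending on $\eb$. For Hölder continuity I combine the interior $H^2\subset C^{1/2}$ embedding in three dimensions with the $H^2(\Gamma)$-bound near the boundary and the control of $\nabla_x D_\tau u$; the restriction $\alpha<1/4$ comes from the anisotropic regularity (full smoothness tangentially, only $W^{2,1}$ normally, cf.\ Remark~\ref{Rem1.reg}) interpolated to an isotropic Hölder norm. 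Finally, to get $F(u)\in L^1(\Gamma)$ I multiply the first equation by $f(u)$ and integrate over $\Omega$; integration by parts moves a derivative onto the boundary term, producing $\frac{}{}\int_\Gamma$-type quantities, and using the second equation together with $f'\ge0$ and $\sgn(u)f''(u)\ge0$ (which makes $F\circ u$ behave well under $\Delta_\Gamma$) one isolates $\int_\Gamma F(u)\,dS$ with the correct sign, bounding it by $\|\tilde h_1\|^2_{L^2(\Omega)}+\|\tilde h_2\|^2_{L^2(\Gamma)}$.

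The main obstacle is Step~2 and, interwoven with it, the precise boundary bookkeeping in Steps~2--3: because the problem does \emph{not} satisfy $L^2$-maximal elliptic regularity (the normal second derivative $\partial_n^2 u$ is only in $L^1$ near $\Gamma$), one must be careful never to integrate by parts in a way that forces $f(u)$ or $\partial_n^2 u$ into $L^2(\Omega)$ globally; all estimates must be arranged so that the singular contributions either carry a favorable sign (via $f'\ge0$, $\sgn(u)f''(u)\ge0$) or are absorbed into the $L^1$-norm of $f(u)$. Once the a priori bounds are established uniformly in $N$, passing to the limit to obtain the statements for the variational solution of \eqref{A.1} is routine, using weak/weak-$*$ compactness, almost-everywhere convergence of $f_N(u_N)$, and Fatou's lemma exactly as in the proof of Theorem~\ref{Th2.main}.
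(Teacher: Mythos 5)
Your Steps 1 and 2 follow the paper's own route: testing with $u$ gives \eqref{A.7}, and tangential differentiation of the flattened equations, coupled with the tangentially differentiated boundary equation and the good sign of $f'\ge0$, gives $\Nx D_\tau u\in L^2(\Omega)$, $u\in H^2(\Gamma)$ and, via the anisotropic embedding $L^2(\R,H^{2}(\R^2))\cap H^{1}(\R,H^{1}(\R^2))\subset C^\alpha(\R^3)$, $\alpha<1/4$, the H\"older bound. Your interior estimate uses a different multiplier ($f(u)\theta^2$ followed by linear elliptic regularity, instead of the paper's $\sum_i\partial_{x_i}(\theta\partial_{x_i}u)$), but both are standard and rely only on $f'\ge0$; that part is fine.

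The genuine gap is the estimate $\|F(u)\|_{L^1(\Gamma)}\le C(1+\|\tilde h_1\|^2_{L^2(\Omega)}+\|\tilde h_2\|^2_{L^2(\Gamma)})$. Multiplying the first equation of \eqref{A.1} by $f(u)$ and integrating by parts produces the boundary term $\int_\Gamma f(u)\,\partial_n u\,dS$, not $\int_\Gamma F(u)\,dS$; substituting $\partial_n u=\tilde h_2-u+\Delta_\Gamma u$ turns it into $\int_\Gamma f(u)(\tilde h_2-u)\,dS-\int_\Gamma f'(u)|\nabla_\Gamma u|^2\,dS$, so what this computation actually yields is a bound of $\|f(u)\|^2_{L^2(\Omega)}$ by a weighted $L^1(\Gamma)$-norm of $f(u)$ --- exactly the quantity that is \emph{not} controllable here (it is controlled only under the extra sign condition \eqref{3.bsign}, which is the point of Proposition \ref{Prop3.boundary}, and Example \ref{ExA.?} shows it can genuinely blow up). There is no way to make $F(u)$ itself appear on $\Gamma$ from the multiplier $f(u)$, and the convexity hypothesis $\sgn u\cdot f''(u)\ge0$ plays no role in this step. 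The paper instead uses a Rellich--Pohozaev-type multiplier in the normal direction: in boundary-flattening coordinates it multiplies the equation by $\phi\,\partial_{y_1}u$, so that $f(u)\partial_{y_1}u=\partial_{y_1}(F(u))$ and one obtains $\partial_{y_1}\bigl(\tfrac12\phi a_{11}|\partial_{y_1}u|^2+\phi F(u)\bigr)\ge-(\text{terms controlled by \eqref{A.7} and \eqref{A.20}})$; integrating in $y_1$ gives $\int_\Gamma\bigl(F(u)-a_{11}|\partial_n u|^2\bigr)\,dS\le C(1+\|\tilde h_1\|^2_{L^2(\Omega)}+\|\tilde h_2\|^2_{L^2(\Gamma)})$, and $\|\partial_n u\|^2_{L^2(\Gamma)}$ is then recovered from the boundary equation together with the already-established $H^2(\Gamma)$ bound. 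You need to replace your argument for $F(u)\in L^1(\Gamma)$ by this normal-derivative multiplier (or an equivalent device); the rest of the proposal stands.
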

\begin{proof} Estimate \eqref{A.7} can be obtained by
multiplying the first equation by $u$ and integrating over
$\Omega$ (note that the existence and uniqueness of the solution $u$ can be obtained exactly
as in Section \ref{s2}). So, we only need to give a formal
derivation of estimates \eqref{A.m} and \eqref{A.mest}.
\par
  The derivation of these estimates is based on
a standard localization technique. Thus, we only give below a
sketch of the proof, leaving the details to the reader. Let $\theta$
be a smooth nonnegative cut-off function such that $\theta(x)=1$ if
$d(x,\Gamma)\ge \eb$ and $\theta(x)=0$ if
$d(x,\Gamma)\le\eb/2$ which satisfies, in addition, the inequality
%$$
\begin{equation}\label{A.14}
|\Nx\theta(x)|\le C\theta^{1/2}(x).
\end{equation}
%$$
Then, multiplying equation
\eqref{A.1} by
$$
\sum_{i=1}^3\partial_{x_i}(\theta(x)\partial_{x_i} u),
$$
integrating by parts and using estimate \eqref{A.7} (in order to
estimate the lower-order terms) and the fact that $f'\ge0$,
we deduce that
%$$
\begin{equation}\label{A.15}
\|u\|_{H^{2}(\Omega_\eb)}^2\le
C(1+\|\tilde h_1\|_{L^2(\Omega)}^2+\|\tilde
h_2\|_{L^2(\Gamma)}^2),
\end{equation}
%$$
where the constant $C=C_\eb$ depends on $\eb>0$, but is
independent of $u$, $\tilde h_1$ and $\tilde h_2$.
\par
 Since $H^{2}\subset
C^\alpha$, $\alpha<1/2$, there only remains, in order to finish the proof of the theorem, to study the
function $u$ in a small $\eb$-neighborhood of the boundary
$\Gamma$.
\par
Let $x_0\in\Gamma$ and $y=y(x)$ be local coordinates
in the neighborhood of $x_0$ such that $y(x_0)=0$ and $\Omega$ is defined in these
coordinates by the condition $y_1>0$. Then, in the variable $y$,
problem \eqref{A.1} reads
%$$
\begin{equation}\label{A.16}
\begin{cases}
\sum_{i,j=1}^3\partial_{y_i}(a_{ij}(y)\partial_{y_j}u)+\sum_{i=1}^3b_i(y)\partial_{y_i}u+c(y)u-f(u)=\tilde h_1,\ \ y_1>0,\\
\sum_{i,j=2}^3\partial_{y_i}(d_{ij}(y)\partial_{y_j}u)+\sum_{i=2}^3e_i(y)\partial_{y_i}u+g(y)u+\tilde h_2=\partial_{y_1}u,\
\ y_1=0,
\end{cases}
\end{equation}
%$$
where $a_{ij}$, $b_i$, $c$, $d_{ij}$, $e_i$ and $g$ are
smooth functions which satisfy uniform ellipticity assumptions.
\par
Differentiating the first equation of \eqref{A.16} with respect to
${y_k}$, $k=2,3$, multiplying the resulting equation by $\phi v_k$, where
$v_k:=\partial_{y_k}u$ and $\phi$ is a smooth nonnegative cut-off function
which is equal to one in the ball $|y|\le\eb$ and zero outside the
ball $|y|\ge2\eb$ and satisfies \eqref{A.14}, using again the fact
that $f'\ge0$ and the ellipticity assumption on the $a_{ij}$, we find, after
standard transformations,
%$$
\begin{equation}\label{A.18}
\gamma(\phi|\Nx v_k|,|\Nx v_k|)_\Omega +(\phi
v_k,a_{11}(y)\partial_{y_1}v_k)_{\Gamma}+(\phi
v_k,v_k)_\Omega \le C(\|u\|_{H^{1}(\Omega)}^2+\|\tilde
h_1\|_{L^2(\Omega)}^2),
\end{equation}
%$$
where the positive  constants $C$ and $\gamma$ are independent of  $u$.
 Differentiating then the second equation of
\eqref{A.16} with respect to $y_k$, inserting the expression for
$\partial_{y_1}v_k$ thus obtained into \eqref{A.18} and arguing
analogously, we have
%$$
\begin{multline}\label{A.19}
\gamma(\phi|\Nx v_k|,|\Nx v_k|)_\Omega +\gamma
(\phi|\nabla_{y_2,y_3}v_k|,|\nabla_{y_2,y_3}v_k|)_{\Gamma}
+(\phi v_k,v_k)_{\Omega}+(\phi
v_k,v_k)_\Gamma\le\\\le C(\|u\|_{H^{1}(\Omega)}^2+\|\tilde
h_1\|_{L^2(\Omega)}^2+\|u\|_{H^{1}(\Gamma)}^2+\|\tilde h_2\|_{L^2(\Gamma)}^2).
\end{multline}
%$$
Combining this estimate with \eqref{A.7}, we finally end up with
%$$
\begin{multline}\label{A.20}
\|\phi u\|_{L^2(\R^+_{y_1},H^{2}(\R^2_{y_2,y_3}))}^2+
\|\phi u\|^2_{H^{1}(\R^+_{y_1},H^{1}(\R^2_{y_2,y_3}))}+
\|\phi u\big|_{y_1=0}\|_{H^{2}(\R^2_{y_2,y_3})}^2\le\\\le
 C(1+\|\tilde h_1\|_{L^2(\Omega)}^2+\|\tilde h_2\|_{L^2(\Gamma)}^2),
\end{multline}
%$$
where the constant $C$ is independent
of $x_0$ and $u$.
\par
Returning to the variable $x$ and using the fact that the boundary
point $x_0$ is arbitrary (and that $\Gamma$ is smooth),
we infer from \eqref{A.20} that
%$$
\begin{equation}\label{A.2g}
\|\Nx D_\tau u\|^2_{L^2(\Omega)}+\|u\|_{H^{2}(\Gamma)}^2\le
C(1+\|\tilde h_1\|_{L^2(\Omega)}^2+\|\tilde
h_2\|_{L^2(\Gamma)}^2).
\end{equation}
%$$
In addition,
owing to the embedding
$$
L^2(\R,H^{2}(\R^2))\cap H^{1}(\R,H^{1}(\R^2))\subset
C^\alpha(\R^3),\ \ \alpha<1/4,
$$
estimate \eqref{A.20}, together with \eqref{A.15}, also imply the estimate
$$
\|u\|_{C^\alpha(\Omega)}^2\le C(1+\|\tilde h_1\|_{L^2(\Omega)}^2+\|\tilde
h_2\|_{L^2(\Gamma)}^2),\ \ \alpha<1/4.
$$
Thus, in order to finish the proof of the theorem, we only need to
estimate the $L^1$-norm of $F(u)$ on the boundary. To this end, we
also use the localized equations \eqref{A.16}, but now
multiply the first one by $\phi\partial_{y_1} u$. Then, after
obvious transformations, we have
%$$
\begin{multline}
\partial_{y_1}(\frac12\phi(y) a_{11}(y)|\partial_{y_1}
u|^2+\phi(y)F(u))\ge\\\ge -C(\phi+|\nabla_y\phi|+|D^2_y\phi|)(|\nabla_y u|^2+|\partial_{y_1,y_2}^2
u|^2+|\partial^2_{y_1,y_3} u|^2+|D_{y_2,y_3}^2u|^2+F(u)+|\tilde h_1|^2),
\end{multline}
%$$
where the constant $C$ is independent of $x_0\in\Gamma$ and $u$.
 Integrating this estimate with respect to
$y\in\R^+\times\R^2$ and using \eqref{A.7} and \eqref{A.20},
together with the fact that $\phi\ne0$ only in a small
neighborhood of the boundary, we see that
%$$
\begin{multline}\label{A.strgange}
\int_{\R^2}(\phi(0,y_1,y_2)F(u(0,y_1,y_2))-\phi(0,y_1,y_2)a_{11}(0,y_1,y_2)|\partial_{y_1}
u(0,y_1,y_2)|^2)\,dy_1\,dy_2\le\\\le
C(1+\|\tilde h_1\|^2_{L^2(\Omega)}+\|\tilde
h_2\|^2_{L^2(\Gamma)}).
\end{multline}
%$$
Thus, keeping in mind the fact that $\phi(y)$ is nonnegative and is equal to
one close to the given point $x_0\in\Gamma$, we conclude, returning to the variable $x$, that
$$
\|F(u)\|_{L^1(\Gamma)}\le C\|\partial_n
u\|_{L^2(\Gamma)}^2+C(1+\|\tilde h_1\|^2_{L^2(\Omega)}+\|\tilde
h_2\|^2_{L^2(\Gamma)}).
$$
There now only remains to note that, owing to estimate \eqref{A.2g} and
the second equation of \eqref{A.1}, we can control the $L^2$-norm of
$\partial_n u$ on the boundary,
$$
\|\partial_n u\|^2_{L^2(\Gamma)}\le C(1+\|\tilde h_1\|^2_{L^2(\Omega)}+\|\tilde
h_2\|^2_{L^2(\Gamma)}).
$$
The control of the $L^1$-norm of $F(u)$ on the boundary thus follows and Theorem \ref{ThA.1} is proved.
\end{proof}
Our next task is to give an example of equations \eqref{A.1} for which the
solution $u$ does not satisfy the equations in the usual sense, but
only in the variational sense described in Section \ref{s2}.
We recall that such an example cannot be found if the potential $F(u)$
is singular at $\pm1$.
However, the situation is essentially different if the potential
$F(u)$ has finite limits as $u\to\pm1$.
Indeed, in that case,  the control of the $L^1$-norm of $F(u)$ is of
no use and
 the singular part of the boundary (where $|u(x)|=1$) may now
have positive measure and may even coincide with the whole
boundary. As we can see from the following example, the equality $[\partial_n u]_{int}=[\partial_n u]_{ext}$
can be violated at such singular points.

\begin{example}\label{ExA.?}
We consider the following  example of a one dimensional boundary value problem
of the form \eqref{A.1}:
%$$
\begin{equation}\label{A.ODE}
y''-f(y)=0, \ \  y'(\pm1)= K\ge0,\ \ x\in[-1,1],
\end{equation}
%$$
where the function $f$ satisfies assumptions \eqref{A.2} and, in
addition, $F(1)=F_1<\infty$ and $f(-y)=-f(y)$,
which is of course a particular case of our general theory. Then, an analysis of the above ODE shows that, for relatively small values of $K$, this problem has a regular usual
 solution $y_K(x)$ which is odd,
$$
y_K(-x)=-y_K(x)
$$
(owing to the symmetry and the uniqueness), and is separated from the singularities of $f$. However, there exists a critical value $K_+$ such that, for $K>K_+$, $y_K$ coincides with the singular solution $y_+$ of the problem
$$
y''_+-f(y_+)=0, \ \ y_+(1)=1,\ \ y_{+}(-1)=-1.
$$

\noindent Thus, the usual solution of
\eqref{A.ODE} does not exist for $K>K_+$. However, for these values of $K$,
 it can be uniquely defined   as a variational solution. For the reader's convenience, we also give below a simple
alternative proof of the above nonexistence fact which can be partially
extended to the multi-dimensional case. Since $\|y_K\|_{L^\infty([-1,1])}\le 1$, the usual
interior regularity techniques (see the interior regularity estimate
in Theorem \ref{ThA.1}) show that
%$$
\begin{equation}\label{A.int}
|y'_K(x)|\le C,\ \ |y_K(x)|\le 1-\delta,\ \ x\in(-1/2,1/2),
\end{equation}
%$$
where the positive constants $C$ and $\delta$ are  {\it independent } of $K$. Multiplying
now equation \eqref{A.ODE} by $y'$, integrating over $[0,1]$
and using \eqref{A.int}, we obtain
%$$
\begin{equation}\label{A.str}
|\frac12|y_K'(1)|^2-F(y_K(1))|\le C,
\end{equation}
%$$
where the constant $C$ is again independent of $K$. Thus, $y_K$
cannot satisfy the boundary condition $y_K'(1)=K$ if $K$ is
large enough and $F(1)$ is finite.
\end{example}

\begin{remark}\label{RemA.equ} Let $y_K(x)$ be a variational
solution of problem \eqref{A.ODE} as constructed in the previous
section. Then, since this solution is odd, it automatically
satisfies the equation
$$
y''-f(y)=\<y''-f(y)\>_{[-1,1]}
$$
and, therefore, it is a (variational) equilibrium for the
corresponding 1D Cahn-Hilliard problem of the form \eqref{1.main}.
Thus, even in the 1D case, problem \eqref{1.main} can have variational
solutions which do not satisfy the boundary conditions in the usual
sense.
\end{remark}

\end{document}